\theoremstyle{remark}
\newtheorem*{notations}{Notations}
\begin{document}

\title[Quantum Expectation Transformers for Cost Analysis]{Quantum Expectation Transformers for \\ Cost Analysis}


\author{Martin Avanzini}
\orcid{nnnn-nnnn-nnnn-nnnn}             
\affiliation{
  \institution{Inria Sophia Antipolis - M\'editerran\'ee}            
  \country{France}                    
}

\author{Georg Moser}
\orcid{nnnn-nnnn-nnnn-nnnn}             
\affiliation{
  \department{Department of Computer Science}             
  \institution{Universit\"at Innsbruck}           
  \city{Innsbruck}
  \country{Austria}                   
}

\author{Romain P\'echoux}
\orcid{nnnn-nnnn-nnnn-nnnn}             
\affiliation{
  \institution{Universit{\'e} de Lorraine, CNRS, Inria, LORIA, F-54000  Nancy}        
  \country{France}                   
}

\author{Simon Perdrix}
\orcid{nnnn-nnnn-nnnn-nnnn}             
\affiliation{
  \institution{Universit{\'e} de Lorraine, CNRS, Inria, LORIA, F-54000 Nancy}        
  \country{France}     
}

\author{Vladimir Zamdzhiev}
\orcid{nnnn-nnnn-nnnn-nnnn}             
\affiliation{
  \institution{Universit{\'e} de Lorraine, CNRS, Inria, LORIA, F-54000 Nancy}        
  \country{France}     
}

\begin{abstract}
  We introduce a new kind of expectation transformer for a mixed
  \emph{classical-quantum programming language}.  Our semantic approach relies on a
  new notion of a cost structure, which we introduce and which can be seen as a
  specialisation of the Kegelspitzen of Keimel and Plotkin.  We show that our
  weakest precondition analysis
  is both sound and adequate with respect to the operational
  semantics of the language.
  Using the induced expectation transformer, we provide formal analysis methods for the
  expected cost analysis and expected value analysis of classical-quantum programs.
  We illustrate the usefulness of our techniques by computing the expected cost of
  several well-known quantum algorithms and protocols, such as coin tossing,
  repeat until success, entangled state preparation, and quantum walks.
\end{abstract}

\begin{CCSXML}
<ccs2012>
<concept>
<concept_id>10003752.10003777.10003787</concept_id>
<concept_desc>Theory of computation~Complexity theory and logic</concept_desc>
<concept_significance>100</concept_significance>
</concept>
</ccs2012>
\end{CCSXML}

\ccsdesc[100]{Theory of computation~Complexity theory and logic}

\keywords{complexity analysis, quantum programming, expectation transformer, formal semantics}

\maketitle

\section{Introduction}
Quantum computation is a promising and emerging computational paradigm which
can efficiently solve problems considered to be intractable on classical
computers \cite{shor,hhl}. However, the unintuitive nature of quantum
mechanics poses interesting and challenging questions for the design and
analysis of quantum programming languages. Indeed, the quantum program dynamics
are considerably more complicated compared to the behaviour of classical
probabilistic programs. Therefore, formal reasoning about quantum programs
requires the development of novel methods and tools.

An important open problem is to compute the expected resource usage of quantum
programs. For example, this may be used to determine: (1) the expected runtime;
(2) the expected number of quantum gates; or (3) the amount of quantum
resources (in an application-specific sense) required by quantum programs, etc.
The difficulty of this problem, which is undecidable, requires using elaborate
methods to solve it whenever possible.
These methods for estimating resource usage must be compositional, systematic,
and, preferably, tractable; this excludes de facto any direct use of the operational semantics.

We address this open problem by establishing a weakest precondition reasoning in the
form of a \emph{quantum expectation transformer}, named \emph{$\mathtt{qet}$-calculus}, that
is rich enough to recover earlier wp-calculi in the context of classical programs as
well as denotational semantics for quantum programs. Further, the calculus appears to
be the right foundation for subsequent automation of the method, which however, is left
for future work.
The exact solution of the expected cost problem
can be recovered via this calculus, and furthermore, our method may also
be used to find \emph{approximate} solutions by identifying suitable upper
bounds. Therefore, our method provides a basis for attacking and ameliorating
this undecidable problem in a systematic and compositional way.

\subsection{Our Contributions}

As a first step towards achieving our main objective, we introduce a new
domain-theoretic notion, called a \emph{cost structure} (\Cref{sec:kegel}).
It is based on Kegelspitzen \cite{kegelspitzen}, which are dcpo's
(directed-complete partial orders) equipped with a suitable convex structure
that may be used to reason about the semantics of probabilistic
\cite{rennela-kegel,commutative-monads} and quantum programming languages
\cite{popl22}.  A cost structure is then a pair $(\CSd, \CSp)$ of a Kegelspitze
$\CSd$ together with a \emph{cost addition} operation $\CSp$ that allows us to
model resource consumption in a coherent way.

We introduce a mixed classical-quantum programming language on which we formally
define the expected cost and the expected value of programs. 
Our programming language (\Cref{s:l}) supports conditional branching,
while loops, the usual quantum primitives (including quantum measurements),
classical data, and a special statement for resource consumption.
To seamlessly model the combination of
cost primitives and probabilistic choice --- induced by quantum measurements ---
we define the operational semantics of our language as a probabilistic abstract reduction
system~\cite{BG:RTA:05}, whose reduction rules are annotated by costs \cite{AMS20}.

In \Cref{sec:qet}, we introduce the aformentioned \textsf{qet}-calculus,
which can be seen as a generalisation of previous work on predicate transformers and probabilistic expectation transformers.
For a given cost structure $(\CSd, \CSp)$, our quantum expectation transformer is a semantic function
  \[
    \qet{\cdot}{\cdot} : \Progs \to \CSd^\memst \to \CSd^\memst
  \]
which maps programs and \emph{expectations} (functions from quantum program
states to a cost structure) to expectations, in a continuation passing style.
We prove that our semantics enjoys nice algebraic and domain-theoretic
properties (\secref{sub:prop}) and that it is sound and adequate with respect
to the operational semantics (\secref{sub:adequacy}).  As a consequence, we
prove that the expected cost of a program in our mixed classical-quantum
language (as defined via the operational semantics) is precisely recovered
by using our quantum expectation transformer (Corollary \ref{c:ect-soundness}).
Furthermore, because our semantics is defined in a suitable level of
generality, by choosing an appropriate cost structure $(\CSd, \CSp)$, we show how
a strongly adequate quantum denotational semantics may be defined as a
\emph{special case} (\secref{sub:denotational-semantics}),
which highlights important connections between our approach and denotational
semantics of probabilistic and quantum programming languages.

The usefulness of our methods are illustrated through a running example that
performs (unbounded) coin tossing using quantum resources. More useful and
complicated quantum programs are analysed in \Cref{s:ex}, where we show
how we can determine the expected cost of these programs using our quantum
expectation transformer method.

\subsection{Related Work}
\paragraph{Classical and probabilistic programs}
\emph{Predicate transformer} semantics were introduced in the seminal works of
~\cite{D76} and ~\cite{K85} as a method for reasoning about the semantics of
imperative programs. Predicate transformers map each program statement to a
function between two predicates on the state space of the program.
Consequently, their semantics can be viewed as a reformulation of Floyd–Hoare
logic~\cite{hoare1969axiomatic},
since they transform postconditions (the output
predicate) to preconditions (the input predicate).
%
This methodology has been extended to probabilistic programs by replacing predicates
with expectations, leading to the notion of \emph{expectation transformers} (see~\cite{MM05})
and the development of weakest pre-expectation semantics~\cite{GKM14}. Expectation transformers
have been used to reason about expected values~\cite{KK17},
but also runtimes~\cite{KKMO16}, and costs \cite{AMS20,ABL21,NCH18}.

\paragraph{Quantum programs}
The articles~\cite{OD20} and~\cite{LZY19} present two first attempts to adapt
expectation transformers to the runtime analysis of quantum programs.
\cite{OD20} discusses the interest of adapting the method to the quantum case
through a running example. However, no correctness results (soundness or
adequacy) are proved.  \cite{LZY19} defines a notion of expected runtime
transformer that is neither compositional, nor denotational, because its
definition depends on the asymptotic (i.e., limiting) behaviour of the
operational semantics, which is problematic and undesirable, as we discussed
above. Our paper overcomes all these drawbacks by defining a compositional
and denotational notion of quantum expectation transformers, that is
completely independent of the operational semantics.  Moreover, quantum
expectation transformers are not restricted only to runtimes, and we also
establish the necessary correctness results (soundness and adequacy) with
respect to the operational semantics.  Furthermore, since our language includes
classical data (the other papers do not) and we can easily represent discrete
probabilistic choice, our quantum expectation transformers can be seen as a
proper generalisation of the predicate and expectation transformers discussed
above, which is another advantage of our approach.

\section{Kegelspitzen and Cost Structures}
\label{sec:kegel}

We begin by defining a notion of \emph{cost structure} based on the domain-theoretic and convex
structure of Kegelspitzen. This is used in later sections by our quantum
expectation transformers in order to formalise the semantics.

Kegelspitzen~\cite{kegelspitzen} are dcpo's (directed complete partial orders) that enjoy a convex structure.
We define our quantum expectation transformer by making use of Kegelspitzen, but for simplicity, we define Kegelspitzen using $\omega$-cpo's ($\omega$-complete partial orders), instead of dcpo's, because the former notion is more familiar to most readers.

\begin{definition}
  An \emph{$\omega$-chain} in a partial order $(X, \leq)$ is a countable increasing sequence of elements of $X,$ i.e., a sequence $(x_i)_{i \in \mathbb N}$, such that $x_i \leq x_j$ for any $i \leq j$.
  An \emph{$\omega$-cpo ($\omega$-complete partial order)} is a partial order $(X, \leq)$, such that every $\omega$-chain in $X$ has a supremum (least upper bound) within $X$.
  A monotone function $f : X \to Y$ between two $\omega$-cpos is \emph{$\omega$-continuous} if it preserves suprema of $\omega$-chains.
\end{definition}

Next, we recall barycentric algebras, which allows us to take convex combinations of elements in a coherent way.
\begin{definition}[\cite{kegelspitzen}]
A \emph{barycentric algebra} is a set $A$ equipped with binary operations $a+_{r}b,$ one for every real number $r\in [0, 1],$ such that for all $a, b, c \in A$ and $r, p\in [0, 1)$, the following equalities hold:
\begin{align*}
a+_{1} b &= a; & a+_{r}b  &= b+_{1-r}a; \\
a+_{r} a &= a; &(a+_{p}b)+_{r}c &= a+_{pr}(b+_{\frac{r-pr}{1-pr}} c).
\end{align*}
\end{definition}

Next we introduce pointed barycentric algebras, which allow us to also define scalar multiplication in a natural way.

\begin{definition}[\cite{kegelspitzen}]
A \emph{pointed bary\-centric algebra} is a barycentric algebra~$A$ equipped with a distinguished element $\bot$.
For $a\in A$ and $r\in [0, 1]$, we define scalar multiplication as $r\cdot a \defeq a+_{r} \bot$.
\end{definition}

We can now define an $\omega$-Kegelspitze as a pointed barycentric algebra that respects the order of an $\omega$-cpo.

\begin{definition}
  An $\omega$-\emph{Kegelspitze} is a pointed barycentric algebra $K$ equipped with an $\omega$-complete partial order such that, (1)
  scalar multiplication $(r, a)\mapsto r\cdot a \colon [0,1]\times K\to K$ is $\omega$-continuous in both arguments,
  and (2) for every $r \in [0,1]$ the functions $(a, b)\mapsto a+_{r}b\colon K\times K\to K$ are $\omega$-continuous in both arguments.
\end{definition}

For brevity, we will refer to $\omega$-Kegelspitzen simply as Kegelspitzen. In fact, all $\omega$-Kegelspitzen we consider in this paper are also Kegelspitzen in the sense of \cite{kegelspitzen} (i.e., as dcpo's), so this should not lead to confusion.
We note that, in every Kegelspitze $K$, scalar multiplication $(r, a)\mapsto r\cdot a = a +_r \bot$
is $\omega$-continuous and therefore monotone in the $r$-component, which implies $\bot
= \bot +_{1} a = a+_{0} \bot = 0\cdot a\leq 1\cdot a = a$ for each $a\in K$.
Therefore, the distinguished element $\bot$ is the least element of $K$.

\begin{example}
  The real unit interval $[0,1]$ is a Kegelspitze in the usual order when we define $a +_r b \eqdef ra + (1-r)b$ and $\bot \eqdef 0.$
  The same assignment can also be used to equip the extended non-negative reals $ \Rext \triangleq \Rpos \cup \{\infty\}$ with the structure of a Kegelspitze.
  Note that the non-negative reals $\mathbb R^+$ is \emph{not} a Kegelspitze, because it lacks an $\omega$-cpo structure.
\end{example}

Next, we consider some Kegelspitzen which are important for
the semantics of quantum programming languages.

\begin{example}
\label{ex:density-matrix}
  A \emph{density matrix} is a positive semi-definite hermitian matrix $A$, such that $\mathrm{tr}(A) = 1.$
  A \emph{subdensity matrix} is a positive semi-definite hermitian matrix $A$, such that $\mathrm{tr}(A) \leq 1.$
  Let $D_n \subseteq \mathbb C^{n \times n}$ be the set of subdensity matrices of dimension $n$.
  Then $D_n$ is an $\omega$-cpo when equipped with the L\"owner order: $A \leq B$ iff $B - A$ is positive semi-definite \cite{qpl}.
  Moreover, $D_n$ has the structure of a Kegelspitze under the assignment $\perp \eqdef \mathbf 0$ and $A +_r B \eqdef rA + (1-r) B.$
\end{example}

Recall that density matrices are used in quantum physics to represent
probabilistic mixtures of pure quantum states. In quantum programming
semantics, we use \emph{sub}density matrices in order to account for the
probability of non-termination.

Kegelspitzen may also be used to define convex sums.

\begin{definition}
  In a Kegelspitze $K$, for $a_{i}\in K, r_{i}\in [0, 1]$ with $\sum_{i=1}^{n}r_{i}\leq 1$, we define the \emph{convex sum} inductively by:
  \[
    \sum_{i=1}^{n}r_{i}a_{i} \defeq
    \begin{cases}
      \bot  &  \text{if } n =0,\\
       a_n  &  \text{if } n>0 \text{ and } r_n =1,\\
      a_{n}+_{r_{n}}(\sum_{i=1}^{n-1} \frac{r_{i}}{1-r_{n}}a_{i}) &  \text{otherwise}.
    \end{cases}
  \]
  In fact, the expression $\sum_{i=1}^{n}r_{i}a_{i}$ is $\omega$-continuous in each $r_{i}$ and $a_{i}$ and the sum is also invariant under index permutation (see \cite{commutative-monads} for more details).
  Countable convex sums may be defined as follows: given $a_i \in K$ and $r_i \in [0,1]$, for $i \in \mathbb N$, with $\sum_{i \in \mathbb N} r_i \leq 1$, let
  $ \sum_{i \in \mathbb N} r_i a_i \defeq \sup_{n \in \mathbb N} \sum_{j=1}^n r_j a_j . $
\end{definition}

We now formalize a notion of \emph{cost structure} for expectation transformers in the context of quantum programs. This can be seen as a Kegelspitze equipped with an operation for injecting a cost --- modeled as a positive real number --- into the Kegelspitze,
which satisfies some coherence conditions with respect to the structure of the Kegelspitze.
\begin{definition}\label{d:cs}
A \emph{cost structure} $\CS = (\CSd, \CSp)$ is a Kegelspitze $\CSd$
equipped with an operation
${\CSp} : \Rext \times \CSd \to \CSd$
that is $\omega$-continuous in both arguments and satisfies the identities
\begin{align}
  \label{eq:CSp}
  0 \CSp s & = s \\
  c \CSp (d \CSp s) & = (c + d) \CSp s \\
  (c_1 \CSp s_1) +_{r} (c_2 \CSp s_2) & = (c_1 +_r c_2) \CSp (s_1 +_{r} s_2)
\end{align}
\end{definition}

\begin{example}
  For any Kegelspitze $\CSd$, we get a cost structure $(\CSd,+_{\mathtt{f}})$ with \emph{forgetful cost addition} defined by $c +_{\mathtt{f}} r \triangleq r$.
  A more representative example is given by the cost structure $(\mathbb \Rext, + ),$ where $+$ is the standard addition in $\Rext.$
\end{example}

\section{Quantum Programming Language}\label{s:l}
In this section we introduce the syntax and operational semantics of our imperative programming language supporting both quantum and classical programming primitives.
\subsection{Syntax}
Let $\Bool$, $\Var$ and $\Qubits$ be three distinct types for Boolean, numerical, and qubit data. We will use variables $\x, \y,\z$ to range over classical variables of type $\K \in \{\Bool,\Var\}$ and we will use $\q,\q_1,\q_2,$ etc., to range over quantum variables of type $\Qubits$.
The syntax of quantum programs is described in Figure~\ref{fig:synt}, where $n$ is a constant in $\mathbb{Z}$, $\ope$ is an operator symbol of arity $ar(\ope) \in \mathbb{N}-\{0\}$, $\overline{\q}$ stands for a sequence of qubit variables $\q_1,\ldots,\q_{ar(\ope)}$, and $\meas(\q)$ represents the standard measurement on qubit $\q$ in the computational basis.
When needed, variables and expressions can be annotated by their type as superscript. 
If $\nexp$ evaluates to a positive integer $c$, the statement $\cons(\nexp)$ consumes $c$ resource units but acts as a no-op otherwise.
That is, we permit only non-negative costs. This restriction is in place to ensure
that the notion of expected cost --- to be defined in a moment --- is well-defined.
\begin{figure*}[!h]
\hrulefill
\\[10pt]
$ \begin{array}{llll}
 \AExp &  \nexp, \nexp_1, \nexp_2
    & \rgl &   \x^\Var \ | \ n  \ | \
     \nexp_1 + \nexp_2   \ | \  \nexp_1 - \nexp_2  \ | \ \nexp_1 \times \nexp_2 \\
\BExp &  \bexp, \bexp_1, \bexp_2
    & \rgl &  \x^\Bool   \ | \ \true \ | \ \false \ | \
      \nexp_1 = \nexp_2  \ | \  \nexp_1 \leq \nexp_2 \ | \ \nexp_1 < \nexp_2 \ | \ \neg \bexp \ | \ \bexp_1 \wedge \bexp_2 \ | \ \bexp_1 \vee \bexp_2 \\
      \Exp &  \e, \e_1, \e_2
    & \rgl &  \nexp \ | \ \bexp \\
\Cmds &  \cmd, \cmd_1, \cmd_2
    & \rgl &  \skp   \ | \  \x^\K \asg \e^\K \ | \      \overline{\q}  \oasg \ope \ | \   \x^\Bool \asg \meas(\q)  \ | \  \cons(\nexp) \\
  &  & &\ | \ \cmd_1 \sep \cmd_2 \ |\   \ifa (\bexp) \{\cmd_1\} \elsea \{\cmd_2\}  \ |\ \while(\bexp)\{ \cmd \}  \\
 \end{array}$
 \\[10pt]
 \hrulefill
 \vspace{-2mm}
 \caption{Syntax of quantum programs.}
\label{fig:synt}
\end{figure*}

Program variables are global. For a given expression or statement $t$, let $\Bool(t)$ (respectively $\Var(t)$, $\Qubits(t)$) be the set of Boolean (resp. numerical, qubit) variables in $t$. 

\begin{example}\label{ex:cointoss}
  Let $\oper{H}$ be the operator symbol representing the Hadamard unitary operation. The program $CT(\q)$
  in \Cref{ex1} performs coin tossing by repeatedly measuring an initial qubit $\q$ (which may be mapped into a superposition state via $\oper{H}$) until the measurement outcome $\false$ occurs.
  This program will be our simple running example throughout the paper.
  Its probability to terminate within $n$ steps depends on the initial state of the qubit $\q$ and the loop consumes $1$ resource for each iteration.
  The overall probability of termination (in any number of steps) is $1$.
  \begin{figure}
    \centering
\begin{lstlisting}[caption={Coin tossing.}, label={ex1}, style=qwhile]
$CT(\q) \triangleq$ $\x \asg \true \sep$
        $\while(\x) \{ $                                 $\qquad\quad\tikzrom{cq-mark1}$
          $\q \oasg \oper{H} \sep$ $\tikzrom{cq-mar1}$
          $\x \asg \meas(\q) \sep$ $\tikzrom{cq-mar3}$   $\qquad\quad\tikzrom{cq-mark3}$
          $\cons(1)$               $\tikzrom{cq-mar2}$
        $\}$                                             $\qquad\quad\tikzrom{cq-mark2}$
\end{lstlisting}
\AddNotecopy{cq-mar1}{cq-mar2}{cq-mar3}{$\cmd_0$}
\AddNotecopy{cq-mark1}{cq-mark2}{cq-mark3}{$\cmd$}
  \end{figure}
\end{example}

\subsection{Operational Semantics}\label{sec:os}
In what follows, we model the dynamics of our language as a probabilistic abstract reduction system \cite{BG:RTA:05} ---
a transition system where reducts are chosen from a probability distribution.
Reductions can then be defined as stochastic processes~\cite{BG:RTA:05}, or equivalently,
as reduction relations over distributions~\cite{ALY20}. We follow the latter approach,
unlike the former it permits us to define a notion of expected cost concisely, without
much technical overhead \cite{AMS20}.

\paragraph*{Probabilistic Abstract Reduction Systems (PARS)}

Let $A$ be a set of \emph{objects}.
A discrete subdistribution $\delta$ over $A$ is a function $\delta : A \to [0,1]$ with countable support that maps an element $a$ of $A$ to a probability $\delta(a)$ such that $\sum_{a \in \supp(\delta)} \delta(a) \leq 1$.
If $\sum_{a \in \supp(\delta)} \delta(a) = 1$ then $\delta$ is a discrete distribution.
We only consider discrete (sub)distributions and we shall simply refer to them as (sub)distributions from now on.
Any (sub)distribution $\delta$ can be written as $\lmulti \delta(a) : a \rmulti_{a \in \supp(\delta)}$. The set of subdistributions over $A$ is denoted by $\mathcal{D}(A)$.
Note that $\mathcal{D}(A)$ is closed under convex combinations
$\sum_i p_i \cdot \delta_i \triangleq \lambda a. \sum_i p_i \delta_i(a)$
for countably many probabilities $p_i \in [0,1]$ such that $\sum_{i} p_i \leq 1$.
The notion of expectation of a function $f : A \to S$, where $S$ is a Kegelspitze, is defined for a given subdistribution $\delta$ over $A$ by
$\E{\delta}{f} \triangleq \Sigma_{a \in \supp(\delta)} \delta(a) \cdot f(a).$

A \emph{(weighted) Probabilistic Abstract Reduction System} (PARS) on $A$ is a ternary relation $\cdot \toop{\cdot} \cdot \subseteq A \times \Rpos \times \mathcal{D}(A)$. For $a \in A$, a rule $a \toop{c} \lmulti \delta(b) : b \rmulti_{b \in A}$ indicates that $a$ reduces to $b$ with probability $\delta(b)$ and cost $c \in \Rpos$. Given two objects $a$ and $b$,
$a \toop{c} \lmulti 1 : b \rmulti$ will be written $a \toop{c} b$ for brevity.
For simplicity, we consider only deterministic PARSs $\to$, i.e., $a \toop{c_1} \delta_1$ and $a \toop{c_2} \delta_2$ implies $c_1 = c_2$ and $\delta_1 = \delta_2$.
An object $a \in A$ is called terminal if there is no rule $a \toop{c} \delta$,
which we write as $a \not\to$.

Every deterministic PARS $\to$ over $A$ can be lifted to a ternary weighted reduction relation $\cdot \tomulti{\cdot} \cdot \ \subseteq \mathcal{D}(A) \times \Rpos \times \mathcal{D}(A)$ in a natural way,
see \Cref{fig:wrr}.
\begin{figure}[!h]
\begin{prooftree}
\hypo{a \not\to\phantom{\tomulti{c }}}
\infer1[(Term)]{\lmulti 1: a \rmulti \tomulti{0} \lmulti 1: a \rmulti}
\end{prooftree}
\quad
\begin{prooftree}
\hypo{a \toop{c} \delta}
\infer1[(Mono)]{ \lmulti 1: a\rmulti \tomulti{c } \delta  }
\end{prooftree}
\\[0.2cm]
\begin{prooftree}
\hypo{\delta_i \tomulti{c_i} \epsilon_i}
\hypo{\sum_i p_i \leq 1}
\infer2[(Muti)]{\sum_{i} p_i \cdot \delta_i \tomulti{\sum_i p_i c_i} \sum_{i} p_i \cdot \epsilon_i}
\end{prooftree}
\caption{Weigthed reduction relation induced by PARS.}
\label{fig:wrr}
\end{figure}
A reduction step $\delta \tomulti{c} \epsilon$ indicates that the subdistribution of objects $\delta$ evolves
to a subdistribution of reducts $\epsilon$
in one step, with an expected cost of $c$.
Note that since $\to$ is deterministic, so is the reduction relation $\tomulti{\cdot}$.
We denote by $\delta \tomulti{c}_n \epsilon$ the $n$-fold ($n \geq 0$) composition of $\tomulti{\cdot}$ with expected cost $c$,
defined by $\delta \tomulti{c}_n \epsilon$ if $\delta \tomulti{c_1} \cdots \tomulti{c_n} \epsilon$ and $c = \sum_{i=1}^n c_i$. In particular,  $\delta \tomulti{0}_0 \delta$.

Let us illustrate these notions on a small example.
\begin{example}
  We fix objects $A = \mathbb{Z} \cup \{\mathsf{geo}(n) \mid n \in \mathbb{Z}\}$.
  Consider the PARS $\to_{\mathsf{geo}}$ over $A$ defined through the rules
  \[
    \mathsf{geo}(n) \toop{1}_{\mathsf{geo}} \lmulti \sfrac{1}{2}: n+1, \sfrac{1}{2} : \mathsf{geo}(n + 1) \rmulti \qquad (n \in \mathbb{Z}),
  \]
  stating that $\mathsf{geo}(n)$ increments its argument and then either returns or recurs, in
  each case with probability $\sfrac{1}{2}$ and cost one.
  Starting from $\mathsf{geo}(0)$, this PARS admits precisely one infinite reduction sequence
  \begin{align*}
    \lmulti 1 : \mathsf{geo}(0) \rmulti
    & \tomulti{1}_{\mathsf{geo}} \lmulti \sfrac{1}{2} : 1, \sfrac{1}{2} : \mathsf{geo}(1) \rmulti \\
    & \tomulti{\sfrac{1}{2}}_{\mathsf{geo}} \lmulti \sfrac{1}{2} : 1, \sfrac{1}{4} : 2, \sfrac{1}{4} : \mathsf{geo}(2) \rmulti \\
    & \tomulti{\sfrac{1}{4}}_{\mathsf{geo}} \lmulti \sfrac{1}{2} : 1, \sfrac{1}{4} : 2, \sfrac{1}{8} : 3, \sfrac{1}{8} : \mathsf{geo}(3) \rmulti \\
    & \tomulti{\sfrac{1}{8}}_{\mathsf{geo}} \cdots
  \end{align*}

  This sequence approaches the distribution $\lmulti \sfrac{1}{2^{n}} : n \rmulti_{n>0}$
  of terminal objects in $\mathbb{Z}$,
  with an expected cost of $\sum_{i = 0}^\infty \sfrac{1}{2^i} = 2$.
\end{example}

As indicated in this example,
for every $\delta \in \mathcal{D}(A)$ there is precisely one infinite sequence
$\delta = \delta_0 \tomulti{c_0} \delta_1 \tomulti{c_1} \delta_2  \tomulti{c_2} \cdots$
gradually approaching a \emph{normal form distribution} of terminal objects with
an expected cost of $\sum_{i=0}^\infty c_i$.\footnote{
  This infinite sum is always defined, since costs $c_i$ are non-negative.
}
Note that this
normal form distribution can be a proper subdistribution --- in which case the PARS
is not almost-surely terminating --- and that the cost can be infinite.

Based on these intuitions, for an object $a \in A$,
we define the expected cost function
$\ecost_\to : A \to \Rext$ by
\[
  \ecost_\to(a) \triangleq \sup_{n \in \mathbb{N}}\{ c \mid \lmulti 1: a\rmulti \tomulti{c}_n \delta \},
\]
and the normal form function $\nf_\to : A \to \mathcal{D}(A)$ by
\[
  \nf_\to(a) \triangleq \sup_{n \in \N}\{ \delta \restrict_{term} \mid \lmulti 1: a\rmulti \tomulti{c}_n \delta \},
\]
where $\delta \restrict_{term}$ is the restriction of $\delta$ to terminal objects, i.e., $\delta \restrict_{term} \triangleq \{\delta(a) : a \ | \ a \not\to \}_{a \in \supp(\delta)}$, and the supremum is taken w.r.t. the pointwise order of subdistributions. Note that $\nf_\to(a)$ is well-defined,
which essentially follows from the fact that $(\delta_n \restrict_{term})_{n \in \N}$, for $\delta_n$ such that
for $\lmulti 1: a\rmulti \tomulti{c}_n \delta_n$, is a monotonically increasing sequence, by definition of $\tomulti{c}$.

\paragraph*{Quantum Programs as PARSs}

\begin{figure*}[h]
\hrulefill

$
\begin{prooftree}
\hypo{\phantom{\toop{r}}}
\infer1[(Skip)]{\ltuple\skp,s,\ket{\varphi}\rtuple \toop{0} \ltuple s,\ket{\varphi}\rtuple}
\end{prooftree}
\quad
\quad
\begin{prooftree}
\hypo{\phantom{\toop{r}}}
\infer1[(Exp)]{\ltuple \x \asg \e,s,\ket{\varphi}\rtuple \toop{0} \ltuple s[\x := \sem{\e}^s], \ket{\varphi}\rtuple }
\end{prooftree}
\quad
\quad
\begin{prooftree}
  \hypo{\phantom{\ltuple\cmd_1 ,\ket{\varphi}\rtuple \toop{r} \ltuple \ket{\varphi}\rtuple}}
\infer1[(Op)]{\ltuple\overline{\q} \oasg \ope,s,\ket{\varphi}\rtuple \toop{0} \ltuple s, \ope_{\overline{\q}}\ket{\varphi}\rtuple}
\end{prooftree}
$
\\
$
\begin{prooftree}
\hypo{\phantom{\toop{r}}}
\infer1[(Meas)]{\ltuple \x \asg \meas(\q),s,\ket{\varphi}\rtuple \toop{0} \lmulti p_k^\q\ket{\varphi}: \ltuple s[\x := k], \oper{M}_k^\q\ket{\varphi}\rtuple \rmulti_{k \in \{0,1\}}}
\end{prooftree}
$
\quad
$
\quad \quad
\begin{prooftree}
\hypo{\phantom{\ltuple\cmd_1 ,\ket{\varphi}\rtuple \toop{r} \ltuple\epsilon,\ket{\varphi}\rtuple}}
\infer1[(Cons)]{\ltuple\cons(\nexp),s,\ket{\varphi}\rtuple \toop{\max(\sem{\nexp}^s,0)} \ltuple s,\ket{\varphi}\rtuple}
\end{prooftree}
$
\\[0.2cm]
$
\begin{prooftree}
\hypo{\ltuple\cmd_1 ,s,\ket{\varphi}\rtuple \toop{c} \lmulti p_i : \ltuple\cmd_1^i,s^i,\ket{\varphi^i}\rtuple \rmulti_{i \in I} \cup \lmulti q_j : \ltuple s^j,\ket{\varphi^j}\rtuple \rmulti_{j \in J}}
\infer1[(Seq)]{\ltuple\cmd_1 \sep \cmd_2,s,\ket{\varphi}\rtuple \toop{c} \lmulti p_i: \ltuple\cmd_1^i  \sep \cmd_2,s^i,\ket{\varphi^i}\rtuple \rmulti_{i \in I} \cup \lmulti q_j: \ltuple \cmd_2,s^j,\ket{\varphi^j}\rtuple \rmulti_{j \in J}}
\end{prooftree}
$
\\[0.2cm]
$
\begin{prooftree}
\hypo{\sem{\bexp}^s \in \{0,1\} \phantom{\ket{\varphi^i}}}
\infer1[(Cond)]{\ltuple\ifa (\bexp) \{\cmd_1\} \elsea \{\cmd_0\},s,\ket{\varphi}\rtuple \toop{0} \ltuple\cmd_{\sem{\bexp}^s},s,\ket{\varphi}\rtuple}
\end{prooftree}
$
\\[0.2cm]
$
\begin{prooftree}
\hypo{\sem{\bexp}^s =0}
\infer1[(Wh$_0$)]{\ltuple\while(\bexp)\{ \cmd \} ,s,\ket{\varphi}\rtuple \toop{0} \ltuple  s, \ket{\varphi}\rtuple}
\end{prooftree}
\quad
\quad
\begin{prooftree}
\hypo{\sem{\bexp}^s =1}
\infer1[(Wh$_1$)]{\ltuple\while(\bexp)\{ \cmd \} ,s,\ket{\varphi}\rtuple \toop{0} \ltuple\cmd \sep \while(\bexp)\{ \cmd \}, s,\ket{\varphi}\rtuple}
\end{prooftree}
$
\\[10pt]
\hrulefill
\caption{Operational semantics in terms of PARS.}
\label{fig:os}
\end{figure*}
We now endow quantum programs with an operational semantics defined through a PARS,
operating on pairs of classical and quantum states.

Let $\mathbb{C}$ denote the set of complex numbers. 
Given a set $Q$ of $n$ qubit variables, let $\mathcal{H}_{Q}$ be the Hilbert space $\mathbb{C}^{2^{n}}$ of $n$ qubits\footnote{We assume $Q$ to be a totally ordered set so that the smallest element of $Q$ corresponds to the first qubit of $\mathcal{H}_{Q}$ and so on.}.
We use Dirac notation, $\ket{\varphi}$, to denote a quantum state of $\mathcal{H}_{Q}$. Any state $\ket{\varphi}$ can be written as $\Sigma_{b \in \{0,1\}^{n}} \alpha_b \ket b$, with $\alpha_b \in \mathbb{C}$, and $\Sigma_{b \in \{0,1\}^{n}} \size{\alpha_b}^2 = 1$. $\bra{\varphi}$ is the conjugate transpose of $\ket{\varphi}$, i.e., $\bra{\varphi} \triangleq \ket{\varphi}^\dagger$. $\braket{\varphi}{\psi}\triangleq {\bra{\varphi}} \ket{\psi}$ and $\ketbra{\varphi}{\psi}$  denote the inner product and outer product of $\ket{\varphi}$ and $\ket{\psi}$, respectively.
The norm of a vector is defined by $\norm{\ket{\varphi}} \triangleq \sqrt{\braket{\varphi}{\varphi}}$. We define (linear) operators over $\mathcal{H}_{Q}$ as linear maps. Hence an operator will be represented by a square matrix whose dimension is equal to the dimension of $\mathcal{H}_{Q}$. Given $m \geq 1$, let $I_{m}$ be the  $m \times m$ identity matrix and $\otimes$ be the standard Kronecker product on matrices.

Assume that $Q=\{\q_1,\ldots,\q_n\}$. For $k \in \{0,1\}$, let $\ket{k}_{\q_i} \in \mathcal{H}_{Q}$ be defined by $\ket{k}_{\q_i} \triangleq I_{2^{i-1}} \otimes \ket{k} \otimes I_{2^{n-i}}$ and let $\bra{k}_{\q_i}$ be its conjugate transpose. The measurement of a qubit $\q\in Q$ of a state $\ket \varphi \in \mathcal{H}_{Q}$ produces the classical outcome $k\in \{0,1\}$ with probability $p^\q_k \ket{\varphi}$, and transforms the quantum state $\ket \varphi$ into $ \oper{M}^\q_k\ket{\varphi}$, where $\oper{M}^\q_k : \mathcal{H}_{Q} \to \mathcal{H}_{Q}$ is defined as
$$\oper{M}^\q_k \triangleq  \ket{\varphi} \mapsto \frac{\ket{k}_\q \bra{k}_\q \ket{\varphi}}{\norm{\bra{k}_\q \ket{\varphi}}} $$
and $p^\q_k : {\mathcal{H}_{Q}} \to [0,1]$ is defined as $p^\q_k \triangleq \ket{\varphi} \mapsto  \norm{\bra{k}_\q \ket{\varphi}}^2$. 

The classical state is modelled as a (well-typed) \emph{store} $s$.
For two given sets $B$ and $V$ of Boolean and numerical variables, a (classical) \emph{store} $s$ is a pair of maps $(s^B,s^V)$ such that $s^B : B \to \{0,1\}$ and $s^V : V \to \mathbb{Z}$. The domain of $s$, noted $dom(s)$, is defined by $dom(s) \triangleq B \cup V$. Given a store $s=(s^B,s^V)$, we let $s[\x^\Var := k]$ (resp. $s[\x^\Bool := k]$, $k \in \{0,1\}$) be the store obtained from $s$ by updating the value assigned to $\x$ in the map $s^V$ (resp. $s^B$) to $k$. Define also $s(\x^\Var)\triangleq s^V(\x^\Var)$ and $s(\x^\Bool)\triangleq s^B(\x^\Bool)$. Given a store $s$, let $\sem{-}^s$ be the map associating to each expression $\e$ (and such that $\Bool(\e) \cup \Var(\e) \subseteq dom(s)$) of type $\Var$, a value in $\mathbb{Z}$, and to each expression $\e$ of type $\Bool$ a value in $\{0,1\}$, and defined in a standard way. For example $\sem{\x}^s \triangleq s(\x)$, $\sem{n}^s \triangleq n$, $\sem{\true}^s \triangleq 1$, etc.

A \emph{state} $\sigma$ is a pair $\ltuple s,\ket{\varphi} \rtuple$ consisting of a store $s$ and a quantum state $\ket{\varphi}$.
A \emph{configuration} $\mu$ for statement $\cmd$ has the form $\ltuple \cmd, \sigma \rtuple$, sometimes written as
$\ltuple \cmd, s,\ket{\varphi} \rtuple$ for $\sigma=\ltuple s,\ket{\varphi} \rtuple$.
Let $\memst$ and $\conf$ be the set of states and the set of configurations, respectively.
A configuration $\ltuple\cmd, s,\ket{\varphi} \rtuple$ is well-formed with respect to the sets
of variables $B,V,Q$ if $\Bool(\cmd) \subseteq B$, $\Var(\cmd) \subseteq V$,  $\Qubits(\cmd) \subseteq Q$, $dom(s) = B \cup V$, and $\ket{\varphi} \in \mathcal{H}_Q$.
%
Throughout the paper, we only consider configurations that are well-formed with respect to the sets of variables of the program under consideration.

The operational semantics is described in Figure~\ref{fig:os} as a PARS $\to$ over objects in $\conf \cup \memst$,
where precisely the objects in $\memst$ are terminal.
Rule (Cons) evaluates the arithmetic expression provided as argument to a cost, an integer, and annotates the reduction with this cost, whenever it is a positive integer (otherwise the cost is $0$). The state of a configuration can only be updated by the three rules (Exp), (Op), and (Meas). Rule (Exp) updates the classical store in a standard way. Rule (Op) updates the quantum state to a new quantum state $\ope_{\overline{\q}}\ket{\varphi}$, where $\ope_{\overline{\q}}$ is the map that applies the unitary operator $\ope$ to qubits in $\overline{\q}=\q_1,\ldots,\q_{ar(\ope)}$ and tensoring the map with the identity on all other qubits to match the dimension of $\ket{\varphi}$. Rule (Meas) performs a measurement on qubit $\q$. This rule returns a distribution of configurations corresponding to the two possible outcomes, $k=0$ and $k=1$, with their respective probabilities $p_k^\q\ket{\varphi}$ and, in each case, updates the classical store and the quantum state accordingly. Rule~(Seq) governs the execution of a sequence of statements $\cmd_1\sep\cmd_2$.
The rule accounts for potential probabilistic behavior when $\cmd_1$ performs a measurement and it is otherwise standard.
All the other rules are standard.

For a statement $\cmd$, we overload the notion of expected cost function
and define $\ecost_\cmd : (\Rext)^\memst$ by
\[
  \ecost_\cmd( \sigma  ) \triangleq \ecost_\to (\cmd, \sigma).
\]
Moreover, the function
$\evalue_\cmd: \CSd^\memst \to \CSd^\memst$ defined
by
\[
  \evalue_\cmd(f) (\sigma)
  \triangleq \E{\nf_\to ( \cmd, \sigma)}{f}
\]
gives the expected value of $f$ on the subdistribution of terminal states
obtained by executing $\cmd$ on state $\sigma$. Note that this function is well-defined,
as $\nf_\to(\cmd,\sigma)$ is a sub-distribution over $\memst$.

\begin{example}\label{ex:cointossos}
  Consider the program from \Cref{ex:cointoss}.
  Let $\cmd$ refer to the while loop.
  %
  On a state $\ltuple s, \ket{\varphi} \rtuple$ such that $\ket{\varphi}= \alpha \ket{0}+\beta\ket{1}$ (with $\size{\alpha}^2 + \size{\beta}^2 = 1$) and $s(\x)=1$, it holds that:

\begin{align}
  &\mparbox{2mm}{ \delta_0 \triangleq \lmulti 1 : \ltuple\cmd ,s,\ket{\varphi}\rtuple \rmulti}\nonumber \\
   &\tomulti{0}\small \lmulti 1 : \ltuple \q \oasg \oper{H} \sep \x \asg \meas(\q) \sep \cons(1) \sep \cmd, s,\ket{\varphi}\rtuple \rmulti \label{e1}\\
 &\tomulti{0} \lmulti 1 :\ltuple \x \asg \meas(\q) \sep \cons(1) \sep \cmd,s, \oper{H}\ket{\varphi}\rtuple \rmulti \label{e2}\\
 &\tomulti{0}\! \lmulti p_k\! : \! \ltuple \cons(1) \sep \cmd,s[\x := k] , \ket{k} \rtuple \rmulti_{k \in \{0,1\}} \label{e3} ,
\end{align}
with $p_0= \frac{\size{\alpha+\beta}^2}{2}$, and $p_1 = \frac{\size{\alpha-\beta}^2}{2}$. The above reductions are obtained by applying rules of ~\Cref{fig:os} together with rule (Mono) of ~\Cref{fig:wrr}: (Wh$_1$) for (\ref{e1}); (Op) and (Seq) for (\ref{e2}); (Meas) and (Seq) for (\ref{e3}).

Moreover, by rules (Cons), (Seq) , and  (Mono), $\forall k \in \{0,1\}$,
 {\small
   \begin{align}
     \lmulti 1 :\ltuple \cons(1) \sep \cmd,s[\x := k] , \ket{k} \rtuple \!\rmulti \tomulti{1} \!\lmulti 1 :\ltuple \cmd,s[\x := k] , \ket{k} \rtuple \!\rmulti . \label{e4}
\end{align}
 }
 Consequently, using (\ref{e1})-(\ref{e4}) and rule (Multi) of ~\Cref{fig:wrr}:
  {
 \[
 \lmulti 1 : \ltuple\cmd ,s,\ket{\varphi}\rtuple \rmulti\tomulti{1}_4\lmulti p_k : \ltuple  \cmd,s[\x := k] , \ket{k} \rtuple \rmulti_{k \in \{0,1\}}, 
\]
}
as $p_0+p_1=1$. Iterating the above reduction, it holds that
\[
\lmulti 1 : \ltuple\cmd ,s,\ket{1}\rtuple  \rmulti \tomulti{1}_4 \lmulti \sfrac{1}{2} : \ltuple  \cmd,s[\x := k] , \ket{k} \rtuple \rmulti_{k \in \{0,1\}} \]
Moreover, as
\[
\lmulti 1: \ltuple\cmd ,s[\x := 0],\ket{0}\rtuple \rmulti  \tomulti{0} \lmulti 1: \ltuple s[\x := 0], \ket{0}\rtuple \rmulti,
\]
it holds that
\begin{align*}
\delta_0 & \tomulti{1}_4 \lmulti p_0 :  \ltuple  \cmd,s[\x := 0] , \ket{0} \rtuple , p_1 :  \ltuple  \cmd,s, \ket{1} \rtuple \rmulti\\
&\tomulti{p_1}_4  \lmulti p_0 + \sfrac{p_1}{2}: \ltuple s[\x := 0], \ket{0}\rtuple ,  \sfrac{p_1}{2}: \ltuple  \cmd,s , \ket{1} \rtuple \rmulti \\
&\tomulti{\frac{p_1}{2}}_{4}  \lmulti p_0 + \sfrac{p_1}{2}+ \sfrac{p_1}{4}: \ltuple s[\x := 0], \ket{0}\rtuple ,  \sfrac{p_1}{4}: \ltuple  \cmd,s , \ket{1} \rtuple \rmulti\\
&\tomulti{\frac{p_1}{4}}_{4} \ldots
\end{align*}
The expected cost and the normal form are obtained as follows, by reasoning about the asymptotic behaviour.
\[
  \ecost_{CT(\q)} \ltuple s,{\small\begin{pmatrix}
\alpha\\
\beta
\end{pmatrix}}\,\rtuple= \sup_{n \in \N}\left\{ 1+p_1\textstyle\sum_{i = 0}^n \frac{1}{2^i}\right\}=1+\size{\alpha-\beta}^2
\]
\begin{align*}
  \nf_{CT(\q)}\ltuple s,\ket{\varphi}\rtuple  &= \sup_{n \in \N}\left\{ \lmulti p_0+p_1 \textstyle\sum_{i = 1}^n \frac{1}{2^i}: \ltuple s[\x := 0], \ket{0}\rtuple\rmulti \right\}\\
  &= \lmulti 1: \ltuple s[\x := 0], \ket{0}\rtuple\rmulti
\end{align*}
Hence, $\evalue_{CT(\q)}(f) (s,\ket{\varphi})$, the expected value of $f$ after executing $CT(\q)$ on $(s,\ket{\varphi})$, is equal to
$f(s[\x := 0], \ket{0})$.
\end{example}

\section{Quantum Expectation Transformers}
\label{sec:qet}

We now revisit the expectation transformer approach for the quantum programming language introduced in ~\Cref{s:l}.
$\emph{Expectations}$ will be functions from the set of (classical and quantum) memory states to cost structures, i.e., functions in $\CSd^\memst$, for a given cost structure $\CSd$. The \emph{quantum expectation transformer} $\qet{\cdot}{\cdot}$ is then defined in terms of a program semantics mapping expectations to expectations in a continuation passing style. Specializing the cost structure yields several quantum expectation transformers such as the \emph{quantum expected value transformer} $\qev{\cdot}{\cdot}$ and the \emph{quantum expected cost transformer} $\qect{\cdot}{\cdot}$. After exhibiting several laws and properties of these transformers, we show their soundness and their adequacy.


\subsection{Definition}
Before defining expectation transformers, we introduce some preliminary notations in order to lighten the presentation.

\begin{notations}

For any expression $\e$, $\sem{\e}$ is a shorthand notation for the function $\lambda \ltuple s, \ket{\varphi} \rtuple.\sem{\e}^s \in (\Rext)^\memst$ and, for any $c \in \Rext$, $\underline{c}$ is the function in $(\Rext)^\memst$ defined by $\underline{c} \triangleq \lambda \sigma.c$.
To avoid notational overhead, 
we frequently use point-wise extensions of operations on $\CSp$ and $\Rext$ to functions. E.g., for $p \in [0,1]^\memst$, $f,g \in \CSd^\memst$,
  $f \up{p} g$ denotes the function $\lambda \sigma. f(\sigma) +_{p(\sigma)} g(\sigma)$.

We will also use $f[\x := \e]$ for the expectation mapping $\ltuple s, \ket{\varphi} \rtuple$ to $f\ltuple s[\x := \sem{\e}^s], \ket{\varphi} \rtuple$,
and similarly, for a given function $\oper{M} : \mathcal{H}_{Q} \to \mathcal{H}_{Q}$,  $f[\oper{M}]$ maps $\ltuple s, \ket{\varphi} \rtuple$ to $f\ltuple s, \oper{M}\ket{\varphi} \rtuple$. Finally, $f[\x := \e;\oper{M}]$ stands for $(f[\x := \e])[\oper{M}]$.
\end{notations}

\begin{figure*}[!h]
\hrulefill
\begin{align*}
  \qet{\epsilon}{f} &\triangleq f &  \qet{\x \asg \meas(\q)}{f} &\triangleq f[\x := 0;\oper{M}_0^\q] \up{p_0^\q} f[\x := 1;\oper{M}_1^\q]\\
  \qet{\skp}{f} & \triangleq f & \qet{\cons(\nexp)}{f} &\triangleq \umax(\sem{\nexp},\underline{0}) \uCSp f \\
  \qet{\x \asg \e}{f} &\triangleq f[\x := \e]   & \qet{\cmd_1 \sep \cmd_2}{f} &\triangleq \qet{\cmd_1}{\qet{\cmd_2}{f}} \\
\qet{\overline{\q} \oasg \ope }{f} &\triangleq f[\ope_{\overline{\q}}]  & \qet{\ifa (\bexp) \{\cmd_1\} \elsea \{\cmd_2\} }{f} &\triangleq \qet{\cmd_1}{f}\up{\sem{\bexp}} \qet{\cmd_2}{f}\\
 &  & \qet{\while(\bexp)\{ \cmd \}}{f} &\triangleq \lfp\left(\lambda F.\qet{\cmd}{F} \up{\sem{\bexp}} f \right)
\end{align*}
\hrulefill
\caption{Quantum Expectation Transformer $\qet{\cdot}{\cdot} : \Progs \to \CSd^\memst \to \CSd^\memst$.}
\label{fig:qet}
\end{figure*}

\begin{definition}
\label{def:qet}
  Let $(\CSd,\CSp)$ be a cost structure. The quantum expectation transformer
  \[
    \qet{\cdot}{\cdot} : \Progs \to \CSd^\memst \to \CSd^\memst
  \]
  is defined inductively in Figure~\ref{fig:qet}.
\end{definition}

\begin{definition}[Quantum expectation transformers instances]\label{def:qet-instances}
  \begin{enumerateenv}
  \item Taking the cost structure $([0,1],+_{\mathtt{f}})$ yields a weakest precondition transformer
    \[
      \qwp{\cdot}{\cdot} : \Progs \to [0,1]^\memst \to [0,1]^\memst ,
    \]
    for probabilistic pre-condition reasoning.
  \item Taking the cost structure $(\CSd,+_{\mathtt{f}})$, for any Kegelspitze $\CSd$,
    yields an expected value transformer
    \[
      \qev{\cdot}{\cdot} : \Progs \to \CSd^\memst \to \CSd^\memst .
    \]
  \item Taking the cost structure $(\Rext,+)$ yields an expected cost transformer
    \[
      \qect{\cdot}{\cdot} : \Progs \to (\Rext)^\memst \to (\Rext)^\memst
     .
    \]
  \end{enumerateenv}
\end{definition}

\subsection{Properties}
\label{sub:prop}
\begin{figure*}[t]
\hrulefill
\begin{align*}
  \quad
 & \law[idents:cont]{continuity}                &  & \textstyle \qet{\cmd}{\sup_i f_i} = \sup_i \qet{\cmd}{f_i} \text{ for any $\omega$-chain $(f_i)_i$} \\
 & \law[idents:mono]{monotonicity}              &  & f \leq g \implies \qet{\cmd}{f} \leq \qet{\cmd}{g}                                                                         \\
 & \law[idents:distri]{distributivity}          &  & p \in [0,1] \implies \qet{\cmd}{f \up{p} g} = \qet{\cmd}{f}\up{p} \qet{\cmd}{g}                                              \\
 & \law[idents:ui]{upper invariant}             &  & (\sem{\neg \bexp} \ucdot f \leq g\ \wedge\ \sem{\bexp} \ucdot \qet{\cmd}{g} \leq g) \implies \qet{\while(\bexp)\{\cmd\}}{f} \leq g 
\end{align*}
\hrulefill
\caption{Universal laws derivable for the quantum expectation transformer.}
\label{fig:idents}
\end{figure*}

The quantum expectation transformer satisfies several useful laws (\Cref{fig:idents}) and these laws are comparable to those found in \cite{KKMO16}.

\begin{theorem}\label{thm:idents}
  All universal laws listed in \Cref{fig:idents} hold.
\end{theorem}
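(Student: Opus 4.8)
The statement collects four laws, and the natural strategy is to prove each by structural induction on the statement $\cmd$, with the understanding that the universal laws must be established \emph{simultaneously}, since the loop case of one law typically invokes another law (e.g.\ continuity is needed to reason about the least fixed point appearing in $\qet{\while(\bexp)\{\cmd\}}{f}$, and monotonicity is needed so that the fixed-point iteration is an $\omega$-chain). So the plan is: fix the cost structure $(\CSd, \CSp)$, and prove by a single induction on $\cmd$ that all of \textit{continuity}, \textit{monotonicity}, \textit{distributivity} hold for $\cmd$, deriving \textit{upper invariant} from these once they are available for arbitrary statements.

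\textbf{Base cases and non-loop compound cases.} For $\skp$, $\x \asg \e$, $\overline{\q}\oasg\ope$, the transformer is just precomposition with a state-update map ($f \mapsto f[\x:=\e]$, etc.), which is manifestly $\omega$-continuous, monotone, and commutes with $\up{p}$ pointwise — all three facts are immediate from the definition of the pointwise order and pointwise $+_r$ on $\CSd^\memst$. For $\cons(\nexp)$, the transformer is $f \mapsto \umax(\sem{\nexp},\underline 0)\uCSp f$; continuity and monotonicity in $f$ follow from the assumption that $\CSp$ is $\omega$-continuous (hence monotone) in its second argument, and distributivity over $\up{p}$ is exactly the third cost-structure identity \eqref{eq:CSp}, applied pointwise with $c_1 = c_2 = \umax(\sem{\nexp}^\sigma,0)$ (note $c +_r c = c$ in $\Rext$). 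For $\x \asg \meas(\q)$, the transformer is $f \mapsto f[\x:=0;\oper M_0^\q] \up{p_0^\q} f[\x:=1;\oper M_1^\q]$; here continuity and monotonicity follow from $\omega$-continuity of $+_r$ in both arguments in a Kegelspitze composed with the (continuous, monotone) precomposition maps, and distributivity over $\up{p}$ reduces to the barycentric-algebra identity rearranging $(a_0 +_{p_0} a_1) +_p (b_0 +_{p_0} b_1) = (a_0 +_p b_0) +_{p_0} (a_1 +_p b_1)$ — which is a consequence of the barycentric axioms (this is essentially the fact that $\up{r}$ is a congruence, used already implicitly in the paper's convex-sum machinery). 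The sequencing case $\qet{\cmd_1\sep\cmd_2}{f} = \qet{\cmd_1}{\qet{\cmd_2}{f}}$ and the conditional case are handled by composing/combining the induction hypotheses for $\cmd_1$ and $\cmd_2$; for the conditional one also uses distributivity of each branch transformer over $\up{\sem{\bexp}}$ — but since $\sem{\bexp}^\sigma \in \{0,1\}$ the conditional actually just selects a branch pointwise, so even monotonicity/continuity are trivial there.

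\textbf{Loop case and the upper-invariant law.} The main work is the loop $\while(\bexp)\{\cmd\}$, whose transformer is $\lfp(\Phi_f)$ with $\Phi_f(F) \defeq \qet{\cmd}{F}\up{\sem{\bexp}} f$. By the induction hypothesis, $\qet{\cmd}{\cdot}$ is $\omega$-continuous and monotone, hence $\Phi_f$ is $\omega$-continuous and monotone on the $\omega$-cpo $\CSd^\memst$, so by Kleene's fixed-point theorem $\lfp(\Phi_f) = \sup_n \Phi_f^n(\underline\bot)$. Monotonicity in $f$ then follows because $f \leq g \implies \Phi_f \leq \Phi_g$ pointwise (using monotonicity of $\up{p}$ in its right argument and of $\qet{\cmd}{\cdot}$), hence $\Phi_f^n(\underline\bot) \leq \Phi_g^n(\underline\bot)$, hence the suprema are ordered. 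Continuity in $f$: given an $\omega$-chain $(f_i)_i$, one shows $\sup_i \lfp(\Phi_{f_i}) = \lfp(\Phi_{\sup_i f_i})$ by the standard exchange-of-suprema argument — expand both sides as double suprema $\sup_n\sup_i \Phi_{f_i}^n(\underline\bot)$ using continuity of $\qet{\cmd}{\cdot}$ and of $\up{p}$ to push the $\sup_i$ inside $\Phi^n$, then swap the two suprema (legitimate in a dcpo, or by a cofinality argument in the $\omega$-cpo setting). Distributivity for the loop: show $\lfp(\Phi_{f\up{p}g}) = \lfp(\Phi_f) \up{p} \lfp(\Phi_g)$ by verifying the right-hand side is the least fixed point of $\Phi_{f\up{p}g}$ — it is a fixed point using distributivity of $\qet{\cmd}{\cdot}$ over $\up{p}$ (the IH) together with the cost-structure/barycentric identity that lets $\up{\sem{\bexp}}$ and $\up{p}$ be interchanged (again the congruence/associativity axioms of barycentric algebras), and it is least by an iteration argument using the same interchange identity at each finite stage. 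Finally, \textit{upper invariant}: given $\sem{\neg\bexp}\ucdot f \leq g$ and $\sem{\bexp}\ucdot\qet{\cmd}{g}\leq g$, observe that $\Phi_f(g) = \qet{\cmd}{g}\up{\sem{\bexp}} f \leq g$ pointwise — at each $\sigma$ with $\sem{\bexp}^\sigma=1$ the hypothesis bounds it by $g(\sigma)$, and at $\sigma$ with $\sem{\bexp}^\sigma=0$ the value is $f(\sigma) = (\sem{\neg\bexp}\ucdot f)(\sigma) \leq g(\sigma)$ — so $g$ is a prefixed point of $\Phi_f$ and hence $\lfp(\Phi_f)\leq g$ by the Knaster--Tarski/Kleene characterization (prefixed point above least fixed point), which requires only monotonicity of $\Phi_f$, already in hand.

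\textbf{Main obstacle.} The delicate point is not any single case but getting the induction \emph{right}: the loop case of continuity/distributivity needs those same properties for the loop body $\cmd$, and also needs a clean lemma saying that $\up{p}$ interacts correctly with suprema and with nested $\up{q}$ — i.e.\ that $(\CSd^\memst,\up{\cdot})$ inherits the barycentric and continuity structure pointwise from $\CSd$. I would isolate this as a preliminary lemma (pointwise lifting of a Kegelspitze/cost structure to a function space $\CSd^\memst$ is again a Kegelspitze/cost structure, with the same algebraic identities and $\omega$-continuity), so that all the loop-case manipulations reduce to finitary algebraic identities plus one or two exchange-of-suprema steps. With that lemma in place, the whole theorem is a routine, if somewhat lengthy, simultaneous structural induction.
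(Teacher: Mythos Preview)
Your proposal is correct, and for \emph{continuity}, \emph{monotonicity}, and \emph{upper invariant} it matches the paper's route closely: the paper also proves continuity by structural induction on $\cmd$ (using continuity of $\lfp$ in the loop case, which is exactly your exchange-of-suprema step packaged as a lemma), derives monotonicity as an immediate corollary of continuity, and obtains the upper-invariant law from the prefixed-point principle for $\lfp$, just as you do.

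Where you diverge is \emph{distributivity}. You propose a direct syntactic induction, relying in the measurement and loop cases on the barycentric exchange identity $(a_0 +_{q} a_1) +_p (b_0 +_{q} b_1) = (a_0 +_p b_0) +_{q} (a_1 +_p b_1)$ and, for $\cons$, on the third cost-structure axiom. This works, but the paper takes a completely different, \emph{semantic} route: it first establishes the soundness theorem $\qet{\cmd}{f} = \QET{\cmd}{f}$ together with the decomposition $\QET{\cmd}{f} = \ecost_\cmd \CSp \evalue_\cmd(f)$, and then distributivity drops out in two lines from linearity of $\evalue_\cmd$ (i.e.\ \Cref{l:dsum-bari}) plus the cost-structure axiom~(3), with no induction on $\cmd$ at all. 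Your approach is more self-contained---it does not depend on the operational semantics or the soundness theorem---at the price of having to chase the barycentric exchange law through the loop iteration. The paper's approach trades that algebraic bookkeeping for a forward dependency on the soundness result, which is proved independently of \Cref{thm:idents}.
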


 The \ref{idents:mono} Law permits us to reason modulo upper-bounds: actual costs can be always substituted by upper-bounds.
 It is in fact an immediate consequence from the continuity law, itself essential for the well-definedness of the transformer on while loops.
 The \ref{idents:distri} Law is a direct consequence of the laws on cost structures.
 The \ref{idents:ui} Law generalises the corresponding law by \cite{KKMO16},
 itself a generalisation of the notion of invariant stemming from Hoare calculus from predicates to cost functions.
 It constitutes a complete proof rule for finding closed form upper-bounds for loops,
 and based on the observation that any prefix-point
 --- as given with $g$ in the pre-condition --- is an upper bound to the least-prefixed point
 of a functional --- in our case the expected cost of the loop (w.r.t. $f$).

\begin{example}\label{ex:cointossqect}
Let us search for a cost expectation of the program of Example~\ref{ex:cointoss}.
Recall that $\cmd_0$ is the body of the while loop statement and that the considered cost structure is $(\Rext,+)$. By \ref{idents:ui} Law (see \Cref{fig:idents}), it suffices to find an expectation $g$ satisfying the following inequalities
\begin{align}
  \sem{\neg \x} \cdot \underline{0} &\leq g \label{ee1}\\
  \sem{\x} \cdot \qect{\cmd_0}{g} & \leq g \label{ee2}
\end{align}
in order to compute an upper bound on the expectation $\qect{\cmd}{\underline{0}}$ of the while loop statement $\cmd$.

Using rules of Figure~\ref{fig:qet}, the following equalities hold,
as can be verified directly:
\begin{align*}
g_1 &\triangleq \qect{\cons(1)}{g} = \underline{1}+g \\
g_2 &\triangleq \qect{\x \asg \meas(\q) }{g_1} \\
    & = g_1[\x := 0; \oper{M}_0^\q]   \up{p_0^\q} g_1[\x := 1;\oper{M}_1^\q] \\
    & =\lambda \ltuple s,\ket{\varphi}\rtuple.\textstyle\sum_{k \in \{0,1\}} p_k^\q (\ket{\varphi}) g_1\ltuple s[ \x := k],\ket{k} \rtuple )\\
g_3 &\triangleq \qect{\q \oasg \oper{H}}{g_2} =g_2[\oper{H}] \\
    & =\lambda \ltuple s,\ket{\varphi}\rtuple.\textstyle\sum_{k \in \{0,1\}} p_k^\q (\oper{H}\ket{\varphi}) (\underline{1}+g)\ltuple s[ \x := k],\ket{k} \rtuple .
\end{align*}
Now, we set
\[
  g \ltuple s, \begin{pmatrix}
\alpha\\
\beta
\end{pmatrix} \rtuple \triangleq \sem{\x} \cdot (1+ \size{\alpha-\beta}^2).
\]
It holds that $\oper{H} \begin{pmatrix}
\alpha\\
\beta
\end{pmatrix} = \frac{\alpha+\beta}{\sqrt{2}} \ket{0} + \frac{\alpha-\beta}{\sqrt{2}}\ket{1}$ and that:
\begin{align*}
\mparbox{3mm}{\qect{\cmd_0}{g}\ltuple s,\begin{pmatrix}
\alpha\\
\beta
\end{pmatrix}\rtuple}\\
&  =
g_3 \ltuple s,\begin{pmatrix}
\alpha\\
\beta
\end{pmatrix}\rtuple \\
&= {\frac{\size{\alpha+\beta}^2}{2}}(1+0)  + \frac{\size{\alpha-\beta}^2}{2}(1+1+(0-1)^2)\\
& = 1+\size{\alpha-\beta}^2 = g\ltuple s,\begin{pmatrix}
\alpha\\
\beta
\end{pmatrix}\rtuple.
\end{align*}
Therefore, (in)equalities (\ref{ee1}) and (\ref{ee2}) are satisfied by $g$. It follows that
\begin{align*}
\qect{CT(\q)}{\underline{0}} & \leq \qect{\x^\Bool \asg \true }{g} \\
&=g [\x := 1] = \lambda \ltuple s, \begin{pmatrix}
\alpha\\
\beta
\end{pmatrix}\rtuple.1+ \size{\alpha-\beta}^2
\end{align*} is the cost expectation of the program.
Note that, in this case, this bound is exact.
\end{example}

\subsection{Soundness and Adequacy}
\label{sub:adequacy}

We first give a semantic counterpart
to the quantum expectation transformer.
To this end, for $f \in \CSd^\memst$ let
$\QETT{f} \in \CSd^{\conf \cup \memst}$
be the least function (in the pointwise order inherited from $\CSd$), such that:
\begin{align*}
  \QETT{f}(\mu) & =
               \begin{cases}
                 f(\mu) & \text{if $\mu \in \memst$}\\
                 c \CSp \E{\delta}{\QETT{f}} & \text{if $\mu \in \conf$ and $\mu \toop{c} \delta$}
               \end{cases}
\end{align*}
(see Appendix \ref{app:QET}). Finally, we overload notation and set
\[
  \QET{\cmd}{f}(\sigma) \triangleq  \QETT{f} \ltuple \cmd,\sigma\rtuple .
\]
The following correspondence is not difficult to establish.
\begin{lemma*}{l:QET}
  For all $\cmd \in \Cmds$ and $f \in \CSd^\memst$,\\
  $ \QET{\cmd}{f}
    = \ecost_\cmd \CSp \evalue_\cmd(f)$.
\end{lemma*}
We now show that the quantum expectation transformer coincides with its semantic counterpart.
Via this correspondence, the above lemma allows us to relate the quantum expectation transformer
--- and its derivates from \Cref{def:qet-instances} --- to the cost and the semantics of the considered programs.
To establish the link, we make use of the following two identities.
\begin{lemma*}{l:QET:main-props}
  The following identities hold.
  \begin{enumerateenv}
  \item\label{l:QET:main-props:seq} $\QET{\cmd_1 \sep \cmd_2}{f} = \QET{\cmd_1}{\QET{\cmd_2}{f}}$; and
  \item\label{l:QET:main-props:while} $\QET{\while(\bexp)\{ \cmd \}}{f} = \lfp(\lambda F. \QET{\cmd}{F} +_{\sem{\bexp}} f)$.
  \end{enumerateenv}
\end{lemma*}

\begin{theorem}[Soundness]\label{t:soundness}
  For all $\cmd \in \Cmds$, $\sigma \in \memst$ and $f \in \CSd^\memst$, $\qet{\cmd}{f}(\sigma) = \QET{\cmd}{f}(\sigma)$.
\end{theorem}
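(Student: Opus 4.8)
The plan is to prove the equality $\qet{\cmd}{f}(\sigma) = \QET{\cmd}{f}(\sigma)$ by structural induction on the statement $\cmd$, using the definition of $\qet{\cdot}{\cdot}$ from Figure~\ref{fig:qet} on the left-hand side and the fixed-point characterisation $\QETT{f}$ together with \Cref{l:QET:main-props} on the right-hand side. For each syntactic form, I would unfold one reduction step of the operational semantics (Figure~\ref{fig:os}), apply the defining equation of $\QETT{f}$, and check that this matches the corresponding clause in the definition of $\qet{\cdot}{\cdot}$, invoking the cost-structure identities~\eqref{eq:CSp} where costs are involved.

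First I would handle the base cases. For $\skp$, $\x \asg \e$, and $\overline{\q} \oasg \ope$, the reduction has cost $0$ and a Dirac successor $\ltuple s', \ket{\varphi'}\rtuple \in \memst$, so $\QET{\cmd}{f}(\sigma) = 0 \CSp \E{\lmulti 1 : \sigma'\rmulti}{\QETT f} = 0 \CSp f(\sigma') = f(\sigma')$ by identity $0 \CSp s = s$, and this is exactly $f[\x := \e](\sigma)$, $f[\ope_{\overline{\q}}](\sigma)$, etc. For $\cons(\nexp)$, the reduction has cost $\max(\sem{\nexp}^s,0)$ to a terminal state, giving $\max(\sem{\nexp},\underline{0}) \CSp f$ directly. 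For $\x \asg \meas(\q)$, the successor is the two-point distribution over $k \in \{0,1\}$ with weights $p_k^\q\ket{\varphi}$ and cost $0$; unfolding $\E{\delta}{\QETT f}$ over this distribution and using $0 \CSp s = s$ yields $f[\x := 0;\oper{M}_0^\q] \up{p_0^\q} f[\x := 1;\oper{M}_1^\q]$, matching Figure~\ref{fig:qet}. The empty continuation $\epsilon$ is immediate. For the inductive cases: sequencing follows from \Cref{l:QET:main-props}\eqref{l:QET:main-props:seq} combined with the induction hypothesis applied twice; conditionals reduce in one step with cost $0$ to $\ltuple \cmd_{\sem{\bexp}^s}, \sigma\rtuple$, so $\QET{\ifa(\bexp)\{\cmd_1\}\elsea\{\cmd_2\}}{f}(\sigma) = \QET{\cmd_{\sem{\bexp}^s}}{f}(\sigma)$, which by the induction hypothesis equals $\qet{\cmd_{\sem{\bexp}^s}}{f}(\sigma) = (\qet{\cmd_1}{f} \up{\sem{\bexp}} \qet{\cmd_2}{f})(\sigma)$ since $\sem{\bexp}^s \in \{0,1\}$.

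The main obstacle is the while loop. Here I would use \Cref{l:QET:main-props}\eqref{l:QET:main-props:while}, which gives $\QET{\while(\bexp)\{\cmd\}}{f} = \lfp(\lambda F. \QET{\cmd}{F} +_{\sem{\bexp}} f)$, while the definition gives $\qet{\while(\bexp)\{\cmd\}}{f} = \lfp(\lambda F. \qet{\cmd}{F} +_{\sem{\bexp}} f)$. By the induction hypothesis, $\QET{\cmd}{F} = \qet{\cmd}{F}$ for every expectation $F$, so the two functionals $\lambda F. \QET{\cmd}{F} +_{\sem{\bexp}} f$ and $\lambda F. \qet{\cmd}{F} +_{\sem{\bexp}} f$ are literally the same map on $\CSd^\memst$; hence their least fixed points coincide. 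The delicate points to check are that both least fixed points genuinely exist in the relevant $\omega$-cpo --- which follows from $\omega$-continuity of $\qet{\cmd}{\cdot}$ (the \ref{idents:cont} Law from \Cref{thm:idents}) and $\omega$-continuity of $+_r$ in a Kegelspitze, so Kleene's fixed-point theorem applies --- and that \Cref{l:QET:main-props}\eqref{l:QET:main-props:while} has been established independently of this theorem. Modulo these, the while case reduces to the observation that equal functionals have equal least fixed points, and the induction goes through.
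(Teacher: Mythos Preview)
Your proposal is correct and follows essentially the same approach as the paper: structural induction on $\cmd$, with the two non-trivial cases---sequencing and while loops---discharged via \Cref{l:QET:main-props}, and all remaining cases following by unfolding one step of the operational semantics against the defining equation of $\QETT{f}$. You supply considerably more detail than the paper's terse sketch (in particular, you spell out the base cases and the conditional, and you explicitly flag the well-definedness of the least fixed points and the non-circularity of \Cref{l:QET:main-props}), but the underlying strategy is identical.
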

\begin{proof}
  The theorem is proven by induction on $\cmd$. Almost all cases follow by definition.
  The only two non-trivial cases, those of command composition and loops,
  follow from the induction hypothesis by using \Cref{l:QET:main-props}.
\end{proof}

This theorem and \Cref{l:QET} immediately show how to recover the expected cost and expected value of programs.
\begin{corollary}[Adequacy]\label{c:ect-soundness}
  The following identities hold, for all $\cmd \in \Cmds$, $ \sigma \in \memst$ and $f \in  \CSd^\memst$.
  \begin{enumerate}
  \item $\qect{\cmd}{\underline{0}}  (\sigma) = \ecost_\cmd (\sigma)$; and
  \item $\qev{\cmd}{f}  (\sigma) = \evalue_\cmd(f) (\sigma)$.
  \end{enumerate}
\end{corollary}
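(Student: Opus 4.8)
The plan is to obtain both identities as immediate corollaries of Theorem~\ref{t:soundness} and Lemma~\ref{l:QET}, after specialising the cost structure in the two ways recorded in \Cref{def:qet-instances}. Recall that chaining those two results gives, for \emph{any} cost structure $(\CSd,\CSp)$, the equality $\qet{\cmd}{f}(\sigma) = \QET{\cmd}{f}(\sigma) = (\ecost_\cmd \CSp \evalue_\cmd(f))(\sigma)$; everything below is then a matter of simplifying the right-hand side in the two concrete instances.

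For item~(1), I would instantiate with the cost structure $(\Rext,+)$, so that $\qet{\cdot}{\cdot}$ is by definition $\qect{\cdot}{\cdot}$ and $\CSp$ is ordinary addition on $\Rext$. Applying the displayed equality with $f = \underline{0}$ yields $\qect{\cmd}{\underline{0}}(\sigma) = \ecost_\cmd(\sigma) + \evalue_\cmd(\underline{0})(\sigma)$. It then remains to check that the second summand vanishes: $\evalue_\cmd(\underline{0})(\sigma) = \E{\nf_\to(\cmd,\sigma)}{\underline{0}} = \sum_{a \in \supp(\nf_\to(\cmd,\sigma))} \nf_\to(\cmd,\sigma)(a) \cdot 0$, and in any Kegelspitze $0 = \bot$ satisfies $r \cdot \bot = \bot$ and $\bot +_r \bot = \bot$, so this convex sum equals $\bot = 0$ in $\Rext$. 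Hence the right-hand side collapses to $\ecost_\cmd(\sigma)$, as claimed.

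For item~(2), I would instantiate with the forgetful cost structure $(\CSd,+_{\mathtt{f}})$ over the given Kegelspitze $\CSd$, so that $\qet{\cdot}{\cdot}$ becomes $\qev{\cdot}{\cdot}$. The same displayed equality gives $\qev{\cmd}{f}(\sigma) = \ecost_\cmd(\sigma) +_{\mathtt{f}} \evalue_\cmd(f)(\sigma)$, and since $c +_{\mathtt{f}} s = s$ by the very definition of forgetful cost addition, the cost component is discarded and the right-hand side is exactly $\evalue_\cmd(f)(\sigma)$.

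I do not anticipate a real obstacle: the substance has been absorbed into Theorem~\ref{t:soundness} and Lemma~\ref{l:QET}. The only points needing a line of justification are that these two results, stated for an arbitrary cost structure, apply verbatim to the concrete instances $(\Rext,+)$ and $(\CSd,+_{\mathtt{f}})$ of \Cref{def:qet-instances}, and that $\E{\nf_\to(\cmd,\sigma)}{\underline{0}}$ really evaluates to $0$ --- which, as noted, is just the fact that $\bot$ absorbs scalar multiplication and convex sums in any Kegelspitze.
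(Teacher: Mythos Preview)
Your proposal is correct and follows essentially the same approach as the paper, which simply notes that the corollary is immediate from Theorem~\ref{t:soundness} and \Cref{l:QET}. You have spelled out the two instantiations and the small simplifications (that $\evalue_\cmd(\underline{0})(\sigma)=0$ in $\Rext$, and that $+_{\mathtt f}$ discards the cost) exactly as intended.
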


\begin{example}
  We illustrate the soundness theorem on our simple leading example (\Cref{ex:cointossos}). As calculated in \Cref{ex:cointossos} and \Cref{ex:cointossqect}, we have
  \begin{equation*}
    \ecost_{CT(\q)} = \lambda \ltuple s, \begin{pmatrix}
      \alpha\\
      \beta
    \end{pmatrix}\rtuple. 1+\size{\alpha-\beta}^2 =
    \qect{CT(\q)}{\underline{0}}.
  \end{equation*}
  %
\end{example}


\subsection{Relationship to Denotational Semantics}
\label{sub:denotational-semantics}

As a special case of our quantum expectation transformer, we can define a
quantum denotational semantics for our language. Recall that the formation
conditions for configurations are defined with respect to three sets $B,V,Q$ of
Boolean, numerical and quantum variables, respectively. Let $B=\{b_1, \ldots, b_n\}$, $V = \{v_1, \ldots v_m\}$ and $Q = \{q_1, \ldots q_k\}.$ We can define a Kegelspitze $\CSk$ which serves as the semantic domain for our programs by setting
$\CSk \eqdef \{0,1\}^n \times \mathbb Z^m \to D_{2^k},$
where the order and convex structure of $\CSk$ is inherited pointwise from $D_{2^k}$ (see Example \ref{ex:density-matrix}). A cost structure $\mathcal K$ is now obtained by equipping $\CSk$ with
the forgetful cost addition $\mathcal K = (\CSk, +_{\mathtt f}).$ With this choice of cost structure, our quantum expectation transformer $\mathtt{qet}$ from Definition \ref{def:qet} yields a \emph{quantum denotational semantics} transformer
\[ \qdst{\cdot}{\cdot} : \Progs \to \CSk^\memst \to \CSk^\memst . \]
Recall that a quantum denotational semantics consists in giving a mathematical interpretation of program configurations which is invariant under the operational semantics (in a probabilistic sense). This can be obtained from $\mathtt{qev}_\CSk$ by making a suitable choice for the
continuation. In particular, if we choose
\begin{align*}
  & h: \memst \to \CSk\\
  & h(s^B, s^V, \ket \varphi) = \lambda ((t_1, \ldots, t_n), (u_1, \ldots, u_m)). \\
  & \qquad
  \begin{cases}
    \ketbra{\varphi} & \text{if } t_i = s^B(b_i) \text{ for }  1 \leq i \leq n \text{ and } \\
    & u_j = s^V(v_j) \text{ for } 1 \leq j \leq m \\
    \mathbf 0 & \text{otherwise}
  \end{cases}
\end{align*}
then, a quantum denotational semantics
\[
  \flrb{-} : \conf \cup \memst \to \CSk
\]
can be defined by
$ \flrb{(\cmd, \sigma)} \eqdef \qdst{\cmd}{h}(\sigma) $ for configurations and $\flrb{\sigma} \eqdef h(\sigma)$ for program states (which are our notion of terminal objects).
Then, by \Cref{c:ect-soundness},
for any well-formed configuration $\mu = (\cmd, \sigma)$ we have that
\begin{align*}
  \flrb{\mu} = \qdst{\cmd}{h}  (\sigma)
  = \evalue_\cmd(h) (\sigma)
  = \mathbb E_{\nf_\to(\mu)} ( \flrb{-} ) .
\end{align*}
This shows that the denotational interpretation $\flrb{\mu}$ is equal to the (countable) convex sum of the interpretations of final states (i.e., terminal objects) that $\mu$ can reduce to. In this equation, each probability weight associated to a final state $\tau$ is given by the reduction probability of $\mu$ to $\tau$ as determined
by the operational semantics. This is precisely the statement of \emph{strong adequacy} in the denotational semantics of probabilistic \cite{strong-adequacy,commutative-monads} and quantum programming languages \cite{fossacs20,popl22}.

\section{Illustrating Examples}\label{s:ex}

In this section we present more intricate examples illustrating how cost analysis can be performed for quantum algorithms.
The analysis has a focuss on expected costs. Hence, the cost structure is fixed to be $(\Rext, +)$. Consequently, cost expectations will be functions in the set $(\Rext)^\memst$.

\begin{notations}
Throughout the following, we denote by $\kappa$ a \emph{classical expectation},
i.e., an expectation which satisfies $\kappa \ltuple s, \ket{\varphi} \rtuple = \kappa \ltuple s, \ket{\psi} \rtuple$ for all $s, \ket{\varphi} $, and $\ket{\psi}$. A classical expectation $\kappa$ thus only depends on the classical state.

We also define the following syntactic sugar:
\begin{itemize}
\item $\q \asg  \ket{0}$ for $ \x^\Bool \asg \meas(\q) \sep \ifa (\x) \{\ \q \oasg \oper{X} \} \elsea \{\skp\}$, where $\oper{X}$ is the unitary operator for negation (Pauli-$X$ gate), defined by $\oper{X} \ket{0} = \ket{1}$ and $\oper{X} \ket{1} = \ket{0}$.
\item $\q \asg  \ket{+}$ for $\q \asg  \ket{0} \sep \q \oasg \oper{H}$.
\end{itemize}
\end{notations}

\subsection{Repeat until success}\label{ex:1}
In this example, we consider a program that implements a Repeat-until-success algorithm and we show that our analysis can be used to infer upper bounds on the expected \emph{T-count}, i.e., the expected number of times the so-called T gate is used. 

\begin{figure}
  \begin{lstlisting}[caption={Repeat until success.}, label={ex2}, style=qwhile]
$RUS(\q')$ $\triangleq$ $\x^\Bool \asg \true \sep$
         $\while(\x) \{ $
           $\q \asg \ket{+}\sep$ $\tikzrom{mark1}$ $\quad \quad \quad $ $\tikzrom{mar1}$
           $\q \oasg \oper{T} \sep$
           $\cons(1) \sep$
           $\q,\q' \oasg \oper{CNOT} \sep$ $\tikzrom{mark2}$ $\quad \quad \quad $ $\tikzrom{mar2}$
           $\q \oasg \oper{H} \sep$
           $\q,\q' \oasg \oper{CNOT} \sep$
           $\q \oasg \oper{T} \sep$
           $\cons(1) \sep$
           $\q \oasg \oper{H} \sep$ $\tikzrom{mark3}$
           $\x \asg \meas(\q)$ $\quad \quad \quad $ $\tikzrom{mar3}$
         $\}$
\end{lstlisting}
\AddNotecopy{mark1}{mark3}{mark2}{$\cmd_0$}
\AddNotecopy{mar1}{mar3}{mar2}{$\cmd$}
\end{figure}
\begin{figure}[h]
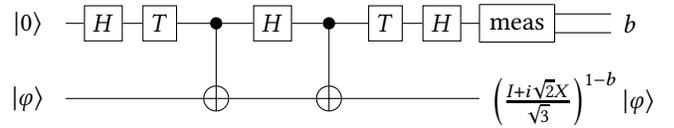

  \cstikz{rus.tikz}
  \caption{A quantum circuit illustrating a repeat-unitl-success pattern.}
  \label{fig:rus}
\end{figure}

Repeat-until-success \cite{RUS} can be used to implement quantum
unitary operators by using repeated measurements. An advantage of this approach
is that this often allows us to implement quantum unitary operators by using
fewer $T$ gates, which are costly to implement fault-tolerantly~\cite{BK05,GKMR14}.


  Consider the example in~\Cref{ex2}. This quantum algorithm will repeatedly execute the quantum operations described by the quantum circuit in Figure \ref{fig:rus}, as specified in \cite[Figure 8]{RUS}, where the operators $\oper{T}$ and $\oper{CNOT}$ correspond to the following quantum gates:
\[
T=
\begin{pmatrix}1 & 0 \\
0 & e^{i\frac{\pi}{4}}
\end{pmatrix}
\quad CNOT=
\begin{pmatrix}
1 & 0 & 0 &0 \\
0 & 1 & 0 & 0\\
0 & 0 & 0 & 1\\
0 & 0 & 1 & 0\\
\end{pmatrix} .
\]
  After measuring the first (ancilla) qubit $\q$, there are two possibilities. With probability $1/4$  , we measure $1$ and then the state of the second qubit $\q'$ is again $\ket \varphi$ and we repeat the algorithm.
  With probability $3/4$, we measure $0$ and then the algorithm terminates and the second qubit is now in state $\frac{I + i \sqrt 2 X}{\sqrt 3} \ket \varphi$.

\paragraph*{Analysis}
Let $\cmd \triangleq \cmd_0 \sep \x \asg \meas(\q)$ be the body of the while loop statement in the above program. For any classical (cost) expectation $\kappa \in (\Rext)^\memst$, by rules of \Cref{fig:qet},
it holds that
\[
  \qect{\cmd_0}{\kappa} = \kappa + \underline{2}
  .
\]
Indeed, one can check easily that each qubit operation in $\cmd_0$ leaves the expectation $\kappa$ unchanged. For example, we have that $\qect{\q \oasg \oper{H}}{\kappa} = \kappa[\oper{H}]=\kappa$. Therefore, we just have to take into account the two $\cons(1)$ statements.

As the probability of measuring $0$ is constant for each iteration ($\frac{3}{4}$, see \cite{RUS}), for any cost expectation $\kappa$, it holds that:
\begin{align*}
\qect{\x \asg \meas(\q)}{\kappa} &= \kappa[\x := 0;\oper{M}_0^\q] \up{\frac{3}{4}} \kappa[\x := 1;\oper{M}_1^\q]\\
&=\underline{\sfrac{3}{4}} \ucdot \kappa[\x := 0] + \underline{\sfrac{1}{4}} \ucdot \kappa[\x := 1]
\end{align*}
Putting this all together, the expectation of $\cmd$ is:
\begin{align*}
\qect{\cmd}{\kappa} &= \qect{\cmd_0}{\qect{\x \asg \meas(\q)}{\kappa}}\\
&= \qect{\x \asg \meas(\q)}{\kappa} + \underline{2}\\
&=\underline{\sfrac{3}{4}} \ucdot \kappa[\x := 0] + \underline{\sfrac{1}{4}} \ucdot \kappa[\x := 1]+\underline{2} .
\end{align*}
By Law \ref{idents:ui} of \Cref{fig:idents}, it suffices to find an expectation $\kappa$ satisfying the following inequalities
\begin{align}
\sem{\neg \x} \cdot \underline{0} &\leq \kappa \label{a1}\\
\sem{\x} \cdot \qect{\cmd}{\kappa} & \leq \kappa  \label{a2}
\end{align}
in order to compute (an upper bound on) the cost expectation $ \qect{\while(\x) \{ \cmd\}}{\underline{0}}$. Inequalities \eqref{a1} and \eqref{a2} are satisfied by setting
$\kappa \triangleq \sem{\x}\ucdot \underline{\frac{8}{3}}$.

We conclude by computing the expectation of the whole program
\begin{align*}
\qect{RUS(\q')}{\underline{0}} & \leq \qect{\x^\Bool \asg \true }{\kappa} \\
&= \kappa [\x := 1] = \underline{\sfrac{8}{3}}.
\end{align*}
The expected cost (the expected number of $T$ gates used) of this algorithm is bounded by $\frac{8}{3}$.
Note that this bound is tight.

\subsection{Chain of $k$ entangled qubits}\label{ex:2}
The following example illustrates that the presented cost analysis can also deal with nested while loops on a non-trivial example using classical data.
We consider a simple algorithm attempting to prepare a large entangled state, namely a graph state represented by a path on $k$ qubits, i.e. $\ket{\phi_k} = \prod_{i=0}^{k-2}CZ_{i,i+1}\otimes_{j=0}^{k-1} \ket +$ where $CZ$ is the following 2-qubit unitary transformation: 
$$CZ=
\begin{pmatrix}
1 & 0 & 0 &0 \\
0 & 1 & 0 & 0\\
0 & 0 & 1 & 0\\
0 & 0 & 0 & -1\\
\end{pmatrix}.
$$
One can prepare the desired state by initializing $k$ qubits in a row in the state $\ket +$ and then applying the $CZ$ gate $k-1$ times, one for each pair of consecutive qubits. Notice that the order in which the CZ are applied is irrelevant (as they are commuting). However, in some settings, like linear optical quantum computing, CZ cannot be implemented deterministically. Nielsen \cite{PhysRevLett.93.040503} showed that CZ can be implemented in linear optics with probability of success $1/4$. The case of a failure corresponds to a measurement of the corresponding two qubits. 

\begin{figure}
\begin{lstlisting}[caption={Applying a $CZ$ gate to $\q$, $\q'$ with probability $\sfrac{1}{4}$.}, label={l:fuse},style=qwhile]
$FUSE(\q,\q',\x) \triangleq$
  $\cons(1) \sep$
  $\ab^\Qubits \asg \ket{+}\sep$
  $\ba^\Qubits \asg \ket{+}\sep$
  $\x \asg \meas(\ab) \sep$
  $\y \asg \meas(\ba) \sep$
  $\ifa (\x \wedge \y)\{$ // with probability $\sfrac{1}{4}$
    $\q,\q' \oasg \oper{CZ} \sep$
    $\x \asg \true$ // Flag set to success
  $\}\elsea\{$ // with probability $\sfrac{3}{4}$
    $\x \asg \meas(\q)\sep$
    $\x \asg \meas(\q')\sep$
    $\x \asg \false$ // Flag set to failure
  $\}$
\end{lstlisting}
\begin{lstlisting}[caption={Chaining four qubits.},label={l:fuse-4}, style=qwhile]
$CHAIN4(\q_1,\q_2,\q_3,\q_4) \triangleq$
  $\x^\Bool \asg \false \sep$
  $\while(\neg \x)\{$
    $\x_1^\Bool \asg \false\sep$ // Left pair   $\tikzmark{c-mark11}$
    $\while(\neg \x_1)\{$
      $\q_1 \asg \ket{+} \sep$
      $\q_2 \asg \ket{+} \sep$
      $FUSE(\q_1, \q_2,\x_1)$
    $\}\sep$                                   $\tikzmark{c-mark12}$
    $\x_2^\Bool \asg \false \sep$ // Right pair $\tikzmark{c-mark21}$
    $\while(\neg \x_2)\{$
      $\q_3 \asg \ket{+} \sep$
      $\q_4 \asg \ket{+} \sep$
      $FUSE(\q_3, \q_4,\x_2)$
    $\}\sep$                                   $\tikzmark{c-mark22}$
    $FUSE(\q_2, \q_3,\x)$ // Fusion of pairs
  $\}$
\end{lstlisting}
\begin{lstlisting}[caption={Create a chain of $k$ entangled qubits.}, label={l:chain}, style=qwhile]
$CHAIN(k,\q_0,\ldots,\q_{k+3}) \triangleq$
  $\at^\Var \asg 0 \sep$
  $\q_0 \asg \ket{+} \sep$
  $\x^\Bool \asg \false \sep$
  $\while(0 \leq \at \wedge \at < k)\{$
    $CHAIN4(\q_{\at+1},\q_{\at+2},\q_{\at+3},\q_{\at+4})\sep$        $\tikzmark{c-mark1}$
    $FUSE(\q_\at, \q_{\at+1},\x)\sep$
    $\ifa (\x)\{ \at \asg \at+4\} \elsea \{ \at \asg \at-1\}\sep$
    $\ifa (\at = -1)\{ \at \asg 0 \sep \q_0 \asg \ket{+} \} \elsea \{\skp\}$ $\tikzmark{c-mark2}$
  $\}$
\end{lstlisting}
\AddNotecopy{c-mark11}{c-mark12}{c-mark12}{$\cmd_1$}%
\AddNotecopy{c-mark21}{c-mark22}{c-mark22}{$\cmd_2$}%
\AddNotecopy{c-mark1}{c-mark2}{c-mark2}{$\cmd$}%
\vspace{-5mm}
\end{figure}

\begin{notations}
In this example, we use $F(\overline{\q},\overline{\x})$ as a shorthand notation for a (non-recursive) call to program $F$ with parameters $\overline{\q},\overline{\x}$. A call to $F(\overline{\q},\overline{\x})$ consists in unfolding the statement of $F$ after a careful variable renaming, avoiding name clashes.
We also assume that, for a given sequence of qubits $\q_1,\ldots,\q_n$, we can access the $i$-th qubit through a call $\q_\x$, provided that variable $\x$ holds the value $i$ in the store.
\end{notations}

The program $FUSE(\q,\q',\x)$ in~\Cref{l:fuse} models the entanglement of two input qubits $\q$ and $\q'$ in state $\ket +$ with probability $\sfrac{1}{4}$. The Boolean variable $\x$ records whether this operation succeeded.
The $CHAIN4(\q_1,\q_2,\q_3,\q_4)$ from \Cref{l:chain} entangles four
given qubits, by iterating $FUSE$ until eventually all links have been established sucessfully.
The general algorithm in \Cref{l:chain} then makes use of this procedure by iteratively appending 4-entangled-qubits chains to the main chain, resulting eventually in a chain of $k \leq t \leq k + 3$ entangled qubits.

\paragraph*{Analysis}
Let us first consider the sub-program $FUSE$ from \Cref{l:fuse}.
Using laws of Figure~\ref{fig:qet}, it is not difficult to see, that for any classical expectation $\kappa$,
\begin{multline*}
  \qect{FUSE(\q,\q',\x)}{\kappa} = \\
  \underline{1} \up{} \underline{\sfrac{1}{4}}\ucdot\kappa[\x := \true] \up{} \underline{\sfrac{3}{4}} \ucdot \kappa[\x := \false]
  .
\end{multline*}
This can be formally verified by unfolding definitions, exploiting that $\kappa$ does not
depend on the quantum state.
Let us turn our attention to $CHAIN4$ from \Cref{l:fuse-4},
and observe
\[
  \qect{\cmd_2}{\kappa} \leq \underline{4} + \kappa
  ,
\]
when $\kappa$ is independent of $\x_2$ assigned in $\cmd_2$, i.e.,
$\kappa[\x_2 := b] = \kappa$.
To see this, take $g \triangleq \sem{\neg \x_2} \ucdot \underline{4} + \kappa$,
and hence
\begin{align*}
  \mparbox{3mm}{\qect{
    \q_3 \asg \ket{+} \sep \q_4 \asg \ket{+} \sep \\
    FUSE(\q_3, \q_4,\x_1)
  }{g}}
  &\\
  & = \qect{\q_3 \asg \ket{+} \sep \q_4 \asg \ket{+}}{\underline{1} + \underline{\sfrac{1}{4}} \ucdot g[\x_2 := \true]\\ \quad \up{} \underline{\sfrac{3}{4}}\ucdot g[\x_2 := \false]} \\
  & = \underline{4} + \kappa
    .
\end{align*}
Now it is clear that, since $\sem{\neg \x_2} \ucdot \underline{4} + \kappa \leq \sem{\neg \x_2} \ucdot \underline{4} + \kappa$ holds trivially,
$g$ constitutes an upper-invariant of the while loop in $\cmd_2$, see Law~\ref{idents:ui}.
Substituting $\false$ for $\x_2$
gives the bound for $\cmd_2$.
The same argument shows that
\[
  \qect{\cmd_1}{\kappa} \leq \underline{4} + \kappa
  ,
\]
for any $\kappa$ independent of $\x_1$.

Concerning the outer loop, let $\kappa$ now refer to a classical cost function independent
of the Boolean variables $\x_1$ and $\x_2$.
Putting things together,
\begin{multline*}
  \qect{ \cmd_1 \sep \cmd_2 \sep FUSE(\q_2,\q_3,\x)}{\sem{\neg \x} \ucdot \underline{36} + \kappa} \\
  \leq \underline{4} \up{} \underline{4} \up{} \underline{28} + \kappa = \underline{36} + \kappa
  ,
\end{multline*}
and finally
\[
  \qect{CHAIN4(\q_1,\q_2,\q_3,\q_4)}{\kappa} \leq \underline{36} \up{} \kappa
  ,
\]
via Law~\ref{idents:ui}.
Concerning the overall code from \Cref{l:chain}, let us now define the classical expectation
\[
  f \triangleq \sem{0 \leq \at \wedge t < k + 4} \ucdot \underline{148} \ucdot (\underline{k} \ud \sem{\at} \up{} \underline{4}).
\]
We obtain
\begin{align*}
   \mparbox{2mm}{\qect{\cmd}{f}}\\
   & = \underline{36} \up{} \underline{1}  \up{} \underline{\sfrac{1}{4}} \ucdot \sem{\at \not= -1} \ucdot f[\at := \at + 4]  \\
  & \quad  \up{}  \underline{\sfrac{3}{4}} \ucdot ( \sem{\at = -1} \ucdot f[\at := 0] \up{} \sem{\at \not= -1} \ucdot f[\at := \at - 1] ) \\
  & = \underline{37} \up{} \sem{0 \leq \at \wedge \at < k+4} \ucdot  (\underline{\sfrac{1}{4}} \ucdot(\underline{148} \ucdot (\underline{k} \ud \sem{\at}) ) \\
  &\quad \up{}  \underline{\sfrac{3}{4}} \ucdot \underline{148} \ucdot (\underline{k} \ud \sem{\at} \up{} \underline{5} )).
\end{align*}
Exploiting the loop-guard $0 \leq \at < k$, we finally
establish that $f$ is an upper-bound to the expected runtime of the loop,
where the required inequality is in particular
encompassed by the inequality
{\small
\begin{multline*}
  \sem{0 \leq \at \wedge \at < k} \cdot \underline{37} \up{} \underline{37} \ucdot (\underline{k} \ud \sem{\at}) \up{} \underline{111}  \ucdot (\underline{k} \ud \sem{\at} \up{} \underline{5})\\
  \leq \underline{148} \ucdot (\underline{k} \ud \sem{\at} \up{} \underline{4})
  ,
\end{multline*}
} 
which can be easily seen to hold.
Substituting $\underline{0}$ for $\sem{\at}$ in this expectation,
we conclude that the overall expected cost is bounded by $148 \times (k+4)$.

\subsection{Quantum walk}\label{ex:3}
In this last example, we consider the Hadamard quantum  walk on an $n$-circle as defined in ~\cite[Section 6.2]{LZY19}.
Our goal is to illustrate on a non-trivial example that the cost analysis may depend directly on the program quantum state, as in \Cref{ex:cointoss}.

Let $\q$ be a quantum bit of the 2-dimensional state space $\mathcal{H}_\q$, whose basis states $\ket{L}$ and $\ket{R}$ indicate directions Left and Right, respectively. Let $\mathcal{H}_\p$ be an $n$-dimensional Hilbert space with orthonormal basis $\ket{0}, \ket{1}, \ldots, \ket{n-1}$ for the positions. The state space $\mathcal{H}$ for the quantum  walk is defined by $\mathcal{H} \triangleq  \mathcal{H}_\q \otimes \mathcal{H}_\p$. The program itself is given in \Cref{lst:qrw}.
\begin{figure}
\begin{lstlisting}[caption={Quantum  walk.}, label={lst:qrw}, style=qwhile]
  $\x^\Bool \asg \true \sep$
  $\while(\x) \{ $
    $\x \asg \meas(\p)\sep$ $\tikzmark{truc1}$
    $\q \oasg \oper{H} \sep$
    $\q, \p \oasg \oper{S} \sep$
    $\cons(1) $ $\tikzmark{truc2}$
  $\}$
\end{lstlisting}
\end{figure}
The operator $\oper{S}$ shifts the position depending of the direction state and is defined by the following standard unitary operator:
$S = \Sigma_{i=0}^{n-1} \ket{L}\bra{L} \otimes \ket{i \ominus 1}\bra{i} + \Sigma_{i=0}^{n-1} \ket{R}\bra{R} \otimes \ket{i \oplus 1}\bra{i}$, where $\oplus$ and $\ominus$ denote addition and substraction modulo $n$.

\paragraph*{Adaptation}
We need to adapt slightly the operational semantics and expected cost transformer to this particular setting.
Any quantum state $\ket{\varphi}$ can be written as $$\ket{\varphi}  \triangleq  \sum_{i = 0}^{n-1} a_i\ket{L}\ket{i}+\sum_{i = n}^{2n-1} a_i\ket{R}\ket{i-n}=\begin{pmatrix}
a_0\\
\vdots\\
a_{2n-1}
\end{pmatrix},$$
for $n \geq 1$ and for $a_i \in \mathbb{C}$ such that $\Sigma_i \size{a_i}^2=1$. The probability that the quantum state $\ket{\varphi}$ is at position $0$ is given by $p_0^\p\ket{\varphi} \triangleq \bra{\varphi} (I_2 \otimes \ket{0}\bra{0}) \ket{\varphi}$. The probability that the quantum state $\ket{\varphi}$ is at a  position distinct from $0$ is given by $p_{\neq 0}^\p\ket{\varphi} \triangleq \bra{\varphi} (I_{2n}-I_2 \otimes \ket{0}\bra{0}) \ket{\varphi}$. These two probabilities trivially satisfy $p_0^\p+p_{\neq 0}^\p=\underline{1}$ and it holds that $p_{\neq 0}\ket{\varphi}= 1- (\size{a_0}^2+ \size{a_n}^2) = \sum_{i \neq 0,i \neq n} \size{a_i}^2$.

We adapt in a direct and obvious way the result of the calculation of a measurement to the $n$-dimensional case: the result of measuring the outcome $0$ ($\false$), $\oper{M}_0^\p$, and the result of measuring an outcome distinct from $0$ ($\true$), $\oper{M}_{\neq 0}^\p$, are defined by $\oper{M}_0^\p \triangleq \frac{1}{\sqrt{p_{0}^\p}}I_2 \otimes \ket{0}\bra{0}$ and $\oper{M}_{\neq 0}^\p \triangleq \frac{1}{\sqrt{p_{\neq 0}^\p}} (I_{2n}-I_2 \otimes \ket{0}\bra{0})$, respectively.

The operational semantics of \Cref{fig:os} and the expected cost transformer of \Cref{fig:qet} can be adapted straightforwardly to this new setting. E.g., the rule of \Cref{fig:qet} for measurement on qubit $\p$ is rewritten as follows:
\[
\qet{\x \asg \meas(\p)}{f} = f[\x := 0;\oper{M}_0^\q] \up{p_0^\q} f[\x := 1;\oper{M}_{\neq 0}^\q].
\]
All the other rules remain unchanged. Our soundness results still hold in this context. In particular, \Cref{thm:idents} and \Cref{c:ect-soundness} are still valid.

\paragraph*{Analysis}
In order to analyse the expected cost of the above program, we search for an expectation $g$ satisfying the prerequisite for applying Law~\ref{idents:ui}:
\begin{equation}\label{equ:1}
 \sem{\x} \ucdot \qect{
    \x \asg \meas(\p)\sep\\
    \q \oasg \oper{H} \sep\\
    \q, \p \oasg \oper{S} \sep\\
    \cons(1) \\
  }{g} \leq g
    .
\end{equation}


Using the (adapted) laws of Figure~\ref{fig:qet}, the following equalities can be derived:
\begin{align*}
\qect{ \cons(1)}{g} &= \underline{1} +g \triangleq g_1 \\
\qect{ \q, \p \oasg \oper{S}_{\q,\p} }{g_1}&= (\underline{1} +g)[\oper{S}_{\q,\p}] \triangleq g_2\\
\qect{ \q \oasg \oper{H}_\q}{g_2}&= g_2[\oper{H}_\q]  =(\underline{1} +g)[\oper{S}_{\q,\p}\oper{H}_\q]\triangleq g_3\\
\qect{ \x \asg \meas(\p)}{g_3} &=g_3[\x := 0;\oper{M}_0^\p]\up{p_0^\p} g_3[\x := 1;\oper{M}_{\neq 0}^\p]
\end{align*}
where the matrices corresponding to the operators $\oper{S}_{\q,\p}$ and $\oper{H}_\q$ are equal to
$H \otimes I_n
 \text{ and }S
$, respectively.


In a nutshell, $\qect{ \x \asg \meas(\p)}{g_3}$ can be written as:
\begin{align*}
\qect{ \x \asg \meas(\p)}{g_3} &= p_{0}^\p\cdot (\underline{1} +g)[\oper{S}_{\q,\p}\oper{H}_\q][\x := 0;\oper{M}_{0}^\p] \\
&\quad + p_{\neq 0}^\p\cdot (\underline{1} +g)[\oper{S}_{\q,\p}\oper{H}_\q][\x  := 1;\oper{M}_{\neq 0}^\p]
\end{align*}
and \Cref{equ:1} can be simplified as follows:
\begin{equation}\label{equ}
p_{\neq 0}^\p (\underline{1} +g)[\x := 1; \oper{S}_{\q,\p}\oper{H}_\q \oper{M}_{\neq 0}^\p] \leq g.
\end{equation}


Equation $(\ref{equ})$ holds if for any store $s$ and any quantum state $\ket{\varphi}$, we have:
$$
p_{\neq 0}^\p\ket{\varphi} (1 +g \ltuple s[\x := 1],\oper{S}_{\q,\p}\oper{H}_\q \oper{M}_{\neq 0}^\p\ket{\varphi}\rtuple)\leq g(s, \ket{\varphi}).
$$

Notice that $\ket{\psi} \triangleq \oper{S}_{\q,\p}\oper{H}_\q \oper{M}_{\neq 0}^\p\ket{\varphi}$ will be the quantum state entering the loop in the next iteration.

For any complex number $z \in \mathbb{C}$, $z^*$ will denote the complex conjugate of $z$ and the real part of $z$ will be denoted by $\mathfrak{R}(z)$.

Consider the expectation $g_n$ defined below.

\begin{align*}
g_n\begin{pmatrix}
a_0\\
\vdots\\
a_{2n-1}
\end{pmatrix} &\triangleq \sum_{i=0}^{2n-1}f_n(i) \size{a_i}^2 + 2 \sum_{j=0}^{2n-1} \sum_{k=0}^{2n-1}h_n(j,k)\mathfrak{R}(a_j a_k^*)
\end{align*}
with $f_n(i) \triangleq \begin{cases}i(n-i)+1 & \text{if } 0 \leq i \leq n-1 \\ (i-n)(2n-i)+1 & \text{if } n \leq i \leq 2n-1 \end{cases}$

and with $h_n(j,k)\triangleq$
$$\begin{cases}(-1)^{\frac{j-k}{2}}k(n-1-j) & \text{ if } \begin{cases}k,j \in [0,n-1], \\ k < j,  \\  (j - k)\ \%\ 2 = 0\end{cases}\\ (-1)^{\frac{j-k}{2}}(2n-j)(k-n-1)& \text{ if } \begin{cases}k,j \in [n,2n-1],\\ k <j,  \\ (j - k) \ \%\ 2 = 0\end{cases}\\  (-1)^{\frac{k-j-n}{2}}(j+k-2n)& \text{ if } \begin{cases} j+n, k \in [n+1,2n-1], \\j+n \leq k,\\ (k - (j+n)) \ \%\ 2 = 0\end{cases}\\ 0 & \text{ otherwise.} \end{cases}$$

$g_n$ is a solution of the inequality in Equation~(\ref{equ}). This can be shown by symbolically computing the substraction of the left-hand side and the right-hand side of the inequality.

Now we consider the simple case where $n=2$. For $s$ such that $\sem{\x}^s=1$, Equation~(\ref{equ}) can be rewritten as:
\begin{equation*}
(\size{a_1}^2+ \size{a_3}^2)(1+g\ltuple s,\frac{1}{\sqrt{2(\size{a_1}^2+ \size{a_3}^2)}} \begin{pmatrix}
a_1+a_3\\
0 \\
a_1 -a_3\\
0 \\
\end{pmatrix} \rtuple ) \leq g\ltuple s,\begin{pmatrix}
a_0\\
a_1 \\
a_2 \\
a_3 \\
\end{pmatrix} \rtuple.
\end{equation*}

The above inequality is satisfied for $g_2 \ltuple s, \ket{\varphi} \rtuple \triangleq 1+\size{a_1}^2+\size{a_3}^2$.
Hence, starting in position $\ket{1}$ (i.e., $a_1\ket{L}\ket{1}+a_3\ket{R}\ket{1}$, with $\size{a_1}^2+\size{a_3}^2=1$), the expected cost is $2$, whereas starting in position $\ket{0}$, the expected cost is $1$ (as $a_1=a_3=0$).

In the case, where $n=3$, the expectation $g_3 \ltuple s, \ket{\varphi} \rtuple \triangleq 
2-\size{a_0}^2-\size{a_3}^2+\size{a_2+a_5}^2+\size{a_1-a_4}^2$ is a solution to Equation~(\ref{equ}).

Note that our expectations $g_n$ can be recovered from the matrices $Q_n$ in the work of~\cite[Section 6.2]{LZY19} as follows:  $g_n\ltuple s,\ket{\varphi} \rtuple \triangleq \bra{\varphi}Q_n\ket{\varphi}$.
%

\section{Conclusion and Future Work}

We presented an adequate notion of quantum expectation transformer and showed through practical examples that it can be used to infer upper bounds on the expected cost of quantum programs.

As already indicated, our \textsf{qet}-calculus provides a principled foundation for automation. While this problem is clearly undecidable in general,
a restriction to a well-defined function space for expectations may allow for an effective solution, at the price of incompleteness. Existing
work in the literature on automation of expected cost transfomers or related work for \emph{classical} programs cf.~\cite{NCH18,AMS20,WFGCQS:PLDI:19,MeyerHG21} should provide ample guidance in this respect.

\bibliography{references}

\appendix
\newcommand{\fl}[1]{\mparbox{3mm}{#1}}
\newcommand{\cmt}[1]{\text{\quad(\emph{#1})}}

\newpage
\section{Background Material}

We state some intermediate properties that follow from the Kegelspitze structure~\cite{kegelspitzen}.
\begin{proposition}\label{p:lfp}
  In any Kegelspitze $K$, any continuous map
  $\chi : K \to K$ has a least fixed-point
  given by
  \[
    \lfp \chi = \sup_{n \in \mathbb{N}} \chi^n(\bot) .
  \]
  This operator $\lfp : (K \to K) \to (K \to K)$ is
  itself continuous.
\end{proposition}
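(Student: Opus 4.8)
The plan is to invoke the standard Kleene fixed-point theorem, whose hypotheses are met because a Kegelspitze carries an $\omega$-cpo structure with least element $\bot$ and we are given that $\chi$ is $\omega$-continuous. First I would observe that $\bot \leq \chi(\bot)$ since $\bot$ is the least element (this was remarked just after the definition of Kegelspitze), and then, by monotonicity of $\chi$ (which follows from continuity), that $(\chi^n(\bot))_{n \in \mathbb N}$ is an $\omega$-chain. Since a Kegelspitze is an $\omega$-cpo, the supremum $a \triangleq \sup_{n} \chi^n(\bot)$ exists in $K$. Continuity of $\chi$ then gives $\chi(a) = \chi(\sup_n \chi^n(\bot)) = \sup_n \chi^{n+1}(\bot) = a$, using that prepending $\bot$ does not change the supremum of the chain; hence $a$ is a fixed point. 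For leastness, if $b$ is any fixed point (indeed any prefixed point, $\chi(b) \leq b$), a routine induction using monotonicity shows $\chi^n(\bot) \leq b$ for all $n$ — the base case is $\bot \leq b$ — so $a = \sup_n \chi^n(\bot) \leq b$. Thus $\lfp \chi = \sup_n \chi^n(\bot)$.

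For the second assertion, that $\lfp : (K \to K) \to (K \to K)$ is itself continuous, I would work in the $\omega$-cpo of $\omega$-continuous endomaps of $K$ ordered pointwise, whose suprema are computed pointwise. Given an $\omega$-chain $(\chi_i)_i$ of continuous maps with pointwise supremum $\chi \triangleq \sup_i \chi_i$, the goal is $\lfp \chi = \sup_i \lfp \chi_i$. The inequality $\sup_i \lfp \chi_i \leq \lfp \chi$ is immediate from monotonicity of $\lfp$ (each $\chi_i \leq \chi$ implies $\lfp \chi_i \leq \lfp \chi$ by the prefixed-point characterisation). For the reverse, I would show by induction on $n$ that $\chi^n(\bot) \leq \sup_i \lfp \chi_i$: at the successor step, $\chi^{n+1}(\bot) = \chi(\chi^n(\bot)) = \sup_i \chi_i(\chi^n(\bot)) \leq \sup_i \chi_i(\sup_j \lfp \chi_j) = \sup_i \sup_j \chi_i(\lfp \chi_j)$, and then monotonicity in the chain index lets one collapse the double supremum (exchanging and using that for $j \geq i$, $\chi_i(\lfp \chi_j) \leq \chi_j(\lfp \chi_j) = \lfp \chi_j$) to conclude it is $\leq \sup_j \lfp \chi_j$. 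Taking the supremum over $n$ yields $\lfp \chi \leq \sup_i \lfp \chi_i$.

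The only genuinely delicate point is the interchange of the two suprema in the continuity argument: one must be careful that the double-indexed family $\chi_i(\lfp \chi_j)$ is monotone in each index separately so that its supremum agrees with the supremum along the diagonal, and that continuity of each $\chi_i$ is what licenses pushing $\chi_i$ past $\sup_j$. Everything else is a mechanical application of Kleene iteration and pointwise-order bookkeeping, so I expect no real obstacle beyond stating these monotonicity facts cleanly.
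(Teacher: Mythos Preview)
The paper does not actually prove this proposition: it appears in the appendix under ``Background Material'' with the remark that these properties ``follow from the Kegelspitze structure~\cite{kegelspitzen}'', and no argument is given. So there is no paper proof to compare against.

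Your proof is correct and is exactly the standard Kleene fixed-point argument one would expect here. The first part is routine. For the continuity of $\lfp$, your induction on $n$ showing $\chi^n(\bot) \leq \sup_i \lfp \chi_i$ works; the collapse of the double supremum $\sup_i \sup_j \chi_i(\lfp \chi_j)$ to $\sup_j \lfp \chi_j$ is justified exactly as you indicate, since for each pair $(i,j)$ one has $\chi_i(\lfp \chi_j) \leq \chi_{\max(i,j)}(\lfp \chi_{\max(i,j)}) = \lfp \chi_{\max(i,j)}$. An equivalent and slightly slicker formulation is to check directly that $\sup_i \lfp \chi_i$ is a prefixed point of $\chi = \sup_j \chi_j$ and then invoke the least-prefixed-point property, but this amounts to the same interchange. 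Note also that the type signature in the statement is presumably a typo for $\lfp : [K \to K] \to K$; you have interpreted it correctly.
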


\begin{lemma}\label{l:dsum-bari}
  Let $f,g : A \to K$ for Kegelspitze $K$ and $p \in [0,1]$.
  Then
  \[
    \E{\delta}{f +_p g} = \E{\delta}{f} +_p \E{\delta}{g}
    .
  \]
\end{lemma}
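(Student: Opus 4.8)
The plan is to treat $\E{\delta}{-}$ as what it is --- a (countable) convex sum --- and to exploit two structural properties of convex sums in a Kegelspitze that are recorded in \Cref{sec:kegel} (with the details in \cite{kegelspitzen,commutative-monads}): convex sums are \emph{invariant under permutation} of the index set, and they are \emph{associative}, in the sense that a convex sum whose summands are themselves convex sums may be flattened into a single convex sum. Before using these, dispose of the boundary cases: if $p=1$ then $f +_1 g = f$ and $\E{\delta}{f} +_1 \E{\delta}{g} = \E{\delta}{f}$, and symmetrically when $p=0$, so we may assume $p \in (0,1)$.

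The core of the argument is then to expand both sides as nested convex sums and observe that flattening produces the same weighted family of points. On the left, $f +_p g$ is, pointwise, the two-term convex sum assigning weight $p$ to $f(a)$ and weight $1-p$ to $g(a)$, so $\E{\delta}{f +_p g} = \sum_{a \in \supp(\delta)} \delta(a)\cdot\big(f(a) +_p g(a)\big)$ is a convex sum of convex sums; flattening it yields the convex sum indexed by the disjoint union $\supp(\delta) \sqcup \supp(\delta)$ that puts weight $p\,\delta(a)$ on the point $f(a)$ in the first copy and weight $(1-p)\,\delta(a)$ on the point $g(a)$ in the second copy (the total weight being $\sum_a \delta(a) \le 1$). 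On the right, $\E{\delta}{f} +_p \E{\delta}{g}$ is the two-term convex sum with weight $p$ on $\E{\delta}{f} = \sum_a \delta(a)\cdot f(a)$ and weight $1-p$ on $\E{\delta}{g} = \sum_a \delta(a)\cdot g(a)$; flattening this gives, again, a convex sum over $\supp(\delta)\sqcup\supp(\delta)$ with exactly the same family of weighted points, merely enumerated in a different order. Permutation invariance of convex sums then closes the proof.

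The one genuinely delicate point is the use of flattening and permutation invariance for \emph{countably} infinite convex sums. The finite instances are immediate from the barycentric axioms; the countable case can be cited from \cite{commutative-monads}, or established concretely by truncating each convex sum to its length-$n$ partial sum, invoking the finite case --- whose crux is the ``medial'' identity $(a +_p b) +_r (c +_p d) = (a +_r c) +_p (b +_r d)$, which is itself a short consequence of the barycentric axioms (both sides flatten to the four-term convex sum on $a,b,c,d$ with weights $pr,\;(1-p)r,\;p(1-r),\;(1-p)(1-r)$) --- and passing to the supremum, using that $+_p$ is $\omega$-continuous in each argument, hence jointly $\omega$-continuous on $\omega$-chains, and that suprema of $\omega$-chains in $\CSd^A$ are computed pointwise. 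I expect this limit passage (equivalently, the countable associativity of convex sums) to be the only step requiring care; everything else is routine bookkeeping, and one may further simplify it, if desired, by first padding $\delta$ with a fresh index mapped to $\bot$ by all functions in play so as to assume $\sum_a \delta(a) = 1$, which is harmless since $\bot +_p \bot = \bot$.
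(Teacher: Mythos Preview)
The paper does not actually give a proof of this lemma: it is stated in the appendix under ``Background Material'' with the remark that these properties ``follow from the Kegelspitze structure~\cite{kegelspitzen}'', and no argument is supplied. Your proposal is correct and is precisely the kind of unfolding that the paper is tacitly deferring to the cited sources: the finite case reduces to the medial law $(a +_p b) +_r (c +_p d) = (a +_r c) +_p (b +_r d)$ (an immediate consequence of the barycentric axioms, both sides being the same four-point convex combination), and the passage to countable sums goes through by truncation and the $\omega$-continuity of $+_p$, exactly as you outline. So you have supplied strictly more than the paper does, and what you supply is sound.
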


\begin{lemma}\label{l:dsum-exp}
  Let $f : A \to K$ for Kegelspitze $K$.
  Then
  \[
    \E{\sum_{i \in I} p_i \cdot \delta_i}{f} = \sum_{i \in I} p_i \cdot \E{\delta_i}{f}
    .
  \]
\end{lemma}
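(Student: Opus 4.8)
The plan is to reduce the claimed identity to its finite analogue and then recover the general case by $\omega$-continuity. Write $\delta \defeq \sum_{i \in I} p_i \cdot \delta_i$. Since only countably many $p_i$ can be nonzero and terms with $p_i = 0$ contribute $\bot$ on both sides, I may assume $I$ countable; then $\supp(\delta) \subseteq S \defeq \bigcup_{i\in I}\supp(\delta_i)$ is countable, so fix enumerations $(i_l)_{l}$ of $I$ and $(a_k)_{k}$ of $S$. By definition of countable convex sums, the right-hand side is $\sup_{n}\sum_{l=1}^{n}p_{i_l}\cdot\E{\delta_{i_l}}{f}$, and each $\E{\delta_{i_l}}{f}$ equals $\sup_{m}\sum_{k=1}^{m}\delta_{i_l}(a_k)\cdot f(a_k)$ (extending the sum over all of $S$ only adds $\bot$-terms, which is harmless by invariance of convex sums under such extensions and under permutation). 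The left-hand side equals $\sup_m\sum_{k=1}^m\delta(a_k)\cdot f(a_k)$, and here $\delta(a_k) = \sum_{l}p_{i_l}\delta_{i_l}(a_k)$ is a supremum of finite partial sums \emph{in $\Rext$}; by $\omega$-continuity of scalar multiplication and of finite convex sums in each argument it becomes $\sup_m\sup_n\sum_{k=1}^m\big(\sum_{l=1}^n p_{i_l}\delta_{i_l}(a_k)\big)\cdot f(a_k)$.

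It then remains to prove the finite identity: for all $m,n$,
\[
\sum_{k=1}^{m}\Big(\sum_{l=1}^{n} p_{i_l}\delta_{i_l}(a_k)\Big)\cdot f(a_k) \;=\; \sum_{l=1}^{n} p_{i_l}\cdot\Big(\sum_{k=1}^{m}\delta_{i_l}(a_k)\cdot f(a_k)\Big),
\]
for then the two doubly-indexed monotone families above are cofinal and their nested suprema coincide, finishing the proof. I would dispatch this finite identity as a pure consequence of the barycentric-algebra axioms: view both sides as one finite convex sum indexed by pairs $(k,l)$ with weight $p_{i_l}\delta_{i_l}(a_k)$ on the point $f(a_k)$ — total weight $\le \sum_l p_{i_l}\le 1$ — the two sides being the two nested groupings of this sum, which agree by the generalised associativity and permutation-invariance of finite convex sums in a Kegelspitze (\cite{commutative-monads}); alternatively one inducts on $n$ using \Cref{l:dsum-bari} together with the distribution of scalar multiplication over convex sums.

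The mathematical content here is light; the main obstacle is bookkeeping — carefully separating the sums living in $\Rext$ (the coefficient sums) from those living in $K$ (the convex sums), justifying each interchange of a supremum with a finite convex sum by the stated $\omega$-continuity, and verifying the cofinality of the two $\omega^2$-indexed families so the nested suprema genuinely match. A conceptually cleaner route, if one is willing to invoke more structure, is to observe that $\E{-}{f}\colon \mathcal D(A)\to K$ is the unique continuous homomorphism of pointed barycentric algebras extending $f$ along the unit $A\to\mathcal D(A)$ (its barycentric preservation being exactly \Cref{l:dsum-bari}), and that $\sum_{i\in I}p_i\cdot\delta_i$ is precisely the corresponding countable convex sum computed in $\mathcal D(A)$; the identity is then just preservation of (countable) convex sums by a continuous homomorphism. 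I would present the elementary argument above and relegate the finite combinatorial identity to \cite{commutative-monads}.
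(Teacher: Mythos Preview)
The paper does not actually prove this lemma: it appears in the appendix under ``Background Material'' as one of several ``intermediate properties that follow from the Kegelspitze structure~\cite{kegelspitzen}'', with no argument supplied. Your proposal therefore fills in exactly what the paper omits rather than competing with an existing proof.

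The argument you sketch is correct. Unwinding both sides via the definition of countable convex sums and pushing suprema through finite convex sums using $\omega$-continuity (in both the weight and point arguments, as the paper records) reduces the claim to the finite regrouping identity, which is indeed the generalised associativity and permutation-invariance of finite subconvex sums that the paper already cites from \cite{commutative-monads}. Your conceptual alternative --- viewing $\E{-}{f}$ as the unique continuous barycentric homomorphism extending $f$ --- is also sound and is plausibly what the paper intends by deferring to \cite{kegelspitzen}. One minor remark: once the finite identity is established the two doubly-indexed families are literally equal (with indices swapped), not merely cofinal, so the interchange of suprema is immediate and the word ``cofinal'' undersells what you have.
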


\begin{lemma}\label{l:exp-cont}
  Let $f : A \to K$ for Kegelspitze $K$.
  For every $\omega$-chain $(d_i)_i$ of sub-distributions,
  \[
    \sup_i \E{\delta_i}{f} = \E{\sup_i \delta_i}{f}
    .
  \]
\end{lemma}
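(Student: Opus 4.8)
The plan is to reduce the claim to the commutation of two suprema: one ranging over the chain index $i$, the other over the truncation level of the countable convex sum that defines the expectation. First I would record the relevant preliminaries. Since $(\delta_i)_i$ is a chain in the pointwise order we have $\delta_i(a) \leq \delta_j(a)$ for $i \leq j$, hence $\supp(\delta_i) \subseteq \supp(\delta_j)$; writing $\delta \triangleq \sup_i \delta_i$ it follows that $\delta(a) = \sup_i \delta_i(a)$ and $\supp(\delta) = \bigcup_i \supp(\delta_i)$ is countable. Moreover $\delta \in \mathcal{D}(A)$: for every finite $F \subseteq \supp(\delta)$ one has $\sum_{a \in F} \delta(a) = \sup_i \sum_{a \in F} \delta_i(a) \leq 1$, and taking the supremum over all finite $F$ gives $\sum_{a \in \supp(\delta)} \delta(a) \leq 1$. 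Fix an enumeration $\supp(\delta) = \{a_1, a_2, \dots\}$ once and for all.

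Next I would unfold both sides of the identity along this fixed enumeration. By definition of the countable convex sum, $\E{\delta}{f} = \sup_{n \in \N} \sum_{k=1}^{n} \delta(a_k)\cdot f(a_k)$. Similarly, for each $i$, since convex sums are permutation-invariant and zero-weight summands can be dropped (indeed $b +_0 c = c$ in any barycentric algebra), we have $\E{\delta_i}{f} = \sum_{a \in \supp(\delta_i)} \delta_i(a)\cdot f(a) = \sup_{n \in \N} \sum_{k=1}^{n} \delta_i(a_k)\cdot f(a_k)$, even though $\delta_i(a_k)$ may vanish for those $a_k \notin \supp(\delta_i)$. Now fix $n$. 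The finite convex sum $(r_1,\dots,r_n) \mapsto \sum_{k=1}^{n} r_k \cdot f(a_k)$ is $\omega$-continuous, hence monotone, in each weight $r_k$ separately; a standard diagonal argument then upgrades this to joint continuity along the joint chain $\bigl(\delta_i(a_1),\dots,\delta_i(a_n)\bigr)_i$, giving $\sum_{k=1}^{n} \delta(a_k)\cdot f(a_k) = \sum_{k=1}^{n} \bigl(\sup_i \delta_i(a_k)\bigr)\cdot f(a_k) = \sup_i \sum_{k=1}^{n} \delta_i(a_k)\cdot f(a_k)$.

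Combining these, $\E{\delta}{f} = \sup_{n} \sup_{i} \sum_{k=1}^{n} \delta_i(a_k)\cdot f(a_k) = \sup_{i}\sup_{n} \sum_{k=1}^{n} \delta_i(a_k)\cdot f(a_k) = \sup_{i} \E{\delta_i}{f}$, where the middle step is the commutation of the suprema of a monotone doubly-indexed family. I expect the main obstacle to be the bookkeeping in the second paragraph rather than the commutation step itself: one must run the countable-sum truncations along the fixed enumeration of $\supp(\delta)$ — which may strictly contain each individual $\supp(\delta_i)$ — argue that padding with zero-weight terms leaves $\E{\delta_i}{f}$ unchanged, and carry out the passage from separate to joint $\omega$-continuity of finite convex sums. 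Everything else is routine once these points are settled.
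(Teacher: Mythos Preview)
Your argument is correct. The paper does not actually prove this lemma: it is listed in the appendix under ``Background Material'' as one of several intermediate properties stated to follow from the Kegelspitze structure, with a citation but no proof. So there is no paper proof to compare against; you have supplied one.

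The outline is sound: unfold both expectations as suprema of finite truncations along a common enumeration of $\supp(\sup_i \delta_i)$, use $\omega$-continuity of finite convex sums in their weights to identify $\sum_{k=1}^{n} \delta(a_k)\cdot f(a_k)$ with $\sup_i \sum_{k=1}^{n} \delta_i(a_k)\cdot f(a_k)$, and then swap the two suprema by the diagonal trick. The only places that need care are exactly the ones you flag. First, to justify $\E{\delta_i}{f} = \sup_n \sum_{k=1}^n \delta_i(a_k)\cdot f(a_k)$ along the enumeration of the \emph{larger} support, you need permutation-invariance and zero-padding for countable convex sums; the paper only asserts permutation-invariance for finite sums, so you should note that the countable case follows because any finite partial sum in one enumeration is dominated by some finite partial sum in the other, using monotonicity of partial sums in $n$. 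Second, the swap $\sup_n \sup_i = \sup_i \sup_n$ requires the doubly-indexed family to be monotone in both indices so that all intermediate suprema are suprema of $\omega$-chains (the Kegelspitze is only an $\omega$-cpo, not a complete lattice); monotonicity in $n$ comes from the zero-padding observation, and monotonicity in $i$ from separate monotonicity of finite sums in the weights together with the fact that the intermediate weight tuples stay subconvex. None of this is deep, but it is worth spelling out.
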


\begin{lemma}\label{l:pars-tomulti}
  For any PARS $\to$ over $A$ and $\delta,\epsilon \in \mathcal{D}(A)$,
  if
  \[
    \delta \tomulti{c}_n \epsilon
  \]
  then
  (i)~$\delta = \lmulti p_i : a_i \rmulti_{i \in I}$,
  (ii)~$\epsilon = \sum_{i \in I} p_i \cdot \epsilon_i$,
  (iii)~$\lmulti 1 : a_i \rmulti \tomulti{c_i}_n \epsilon_i$, and
  (iv)~$c = \sum_{i \in I} c_i$.
\end{lemma}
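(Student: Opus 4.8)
Clause~(i) is merely notational --- every subdistribution can be written this way, taking $I = \supp(\delta)$ --- so the substance lies in clauses~(ii)--(iv): an $n$-step reduction of a distribution factors, support-element by support-element, through $n$-step reductions of the corresponding point distributions, with probabilities and costs recombining as prescribed by the rules of \Cref{fig:wrr}. The plan is to prove this by induction on $n$, after first isolating the $n=1$ case as an auxiliary single-step lemma.

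For the single-step lemma I would show: if $\delta \tomulti{c} \epsilon$ and $\delta = \lmulti p_i : a_i \rmulti_{i \in I}$ with $I = \supp(\delta)$, then for each $i$ there exist $\epsilon_i$ and $c_i$ with $\lmulti 1 : a_i \rmulti \tomulti{c_i} \epsilon_i$, such that $\epsilon = \sum_{i \in I} p_i \cdot \epsilon_i$ and $c = \sum_{i \in I} p_i c_i$. This goes by structural induction on the derivation of $\delta \tomulti{c} \epsilon$. The cases (Term) and (Mono) are immediate, since there $\delta$ is already a point distribution. In the (Multi) case we have $\delta = \sum_j q_j \cdot \delta_j$ with $\delta_j \tomulti{c_j} \epsilon_j$; applying the inner induction hypothesis to every premise decomposes each $\delta_j$ along its own support, and --- crucially, because $\to$ is \emph{deterministic} (so the one-step reduct and cost of any point distribution are unique) --- the witnessing reduct and cost attached to a given $\lmulti 1 : a \rmulti$ are the same regardless of which premise $j$ produced it. Flattening the nested sum $\sum_j q_j \sum_a (\cdots)$ and collecting the coefficient of each $a \in \supp(\delta)$ produces the required decomposition, and the same rearrangement of the scalar weights yields the identities for $\epsilon$ and $c$.

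With the single-step lemma in hand, the induction on $n$ is routine. The base case $n=0$ is trivial: $\delta \tomulti{0}_0 \delta$, so take $\epsilon_i = \lmulti 1 : a_i \rmulti$ and $c_i = 0$. For $n+1$, split off the first step, $\delta \tomulti{c'} \delta' \tomulti{c''}_n \epsilon$ with $c = c' + c''$. The single-step lemma gives $\delta = \lmulti p_i : a_i \rmulti_i$, $\delta' = \sum_i p_i \cdot \delta_i'$ with $\lmulti 1 : a_i \rmulti \tomulti{c_i'} \delta_i'$ and $c' = \sum_i p_i c_i'$, while the (outer) induction hypothesis applied to $\delta' \tomulti{c''}_n \epsilon$ gives, for each $b_j \in \supp(\delta')$, a reduction $\lmulti 1 : b_j \rmulti \tomulti{c_j''}_n \epsilon_j$ with $\epsilon = \sum_j \delta'(b_j) \cdot \epsilon_j$ and $c'' = \sum_j \delta'(b_j) c_j''$. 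Writing $\delta_i' = \lmulti r_{ij} : b_j \rmulti_j$ over $\supp(\delta')$, applying (Multi) to $\delta_i' = \sum_j r_{ij} \cdot \lmulti 1 : b_j \rmulti$ together with these reductions yields $\delta_i' \tomulti{\sum_j r_{ij} c_j''}_n \sum_j r_{ij} \cdot \epsilon_j$, and prepending the step $\lmulti 1 : a_i \rmulti \tomulti{c_i'} \delta_i'$ gives $\lmulti 1 : a_i \rmulti \tomulti{c_i}_{n+1} \epsilon_i$ with $\epsilon_i = \sum_j r_{ij} \cdot \epsilon_j$ and $c_i = c_i' + \sum_j r_{ij} c_j''$. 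Using $\delta'(b_j) = \sum_i p_i r_{ij}$, reassociating sums gives $\sum_i p_i \cdot \epsilon_i = \sum_j \delta'(b_j) \cdot \epsilon_j = \epsilon$ and $\sum_i p_i c_i = c' + c'' = c$, which closes the induction.

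The one genuine difficulty is bookkeeping: three index layers must be juggled --- the support of $\delta$, the support of $\delta'$, and the free index sets introduced by the applications of (Multi) --- and \emph{determinism of $\to$} has to be invoked at exactly the point where one argues that a point object's reduct is independent of how it arose through (Multi) (without this hypothesis the statement is in fact false). Beyond the simple observation that (Multi) lets one both split a distribution into its point masses and reassemble point-wise reductions, and that determinism makes these two operations inverse up to reindexing, no structural insight is needed; in particular the $\omega$-cpo/continuity apparatus of the Kegelspitze plays no role here, this being a purely finitary combinatorial fact about the lifted reduction relation.
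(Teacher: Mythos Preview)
Your proposal is correct and follows exactly the same two-stage strategy as the paper's own proof: first establish the single-step version by induction on the derivation of $\delta \tomulti{c} \epsilon$, then extend to $\tomulti{c}_n$ by induction on $n$. The paper's proof is in fact just these two sentences, so you have supplied precisely the details it omits --- including the key observation that determinism of $\to$ is what makes the (Multi) case go through, and the correct weighted form $c = \sum_i p_i c_i$ of clause~(iv) (the unweighted form in the lemma statement is a typo, as its subsequent use in the proof of \Cref{l:rew-approx} confirms).
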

\begin{proof}
  One first shows the lemma for the one-step reduction relation $\tomulti{\cdot}$, by induction on the derivation of $\delta \tomulti{c} \epsilon$.
  This then generalises to $\tomulti{\cdot}_n$ by induction on $n$.
\end{proof}

\section{Proof of \Cref{l:QET}}
\label{app:QET}


Recall that, for
$\cmd \in \Cmds$, $f \in \CSd^\memst$ and $\sigma \in \memst$,
$\QET{\cmd}{f}(\sigma) = \QETT{f}(\cmd,\sigma)$,
where $\QETT{f} \triangleq \lfp(\xi_f)$
for
\[
  \xi_f(F) \triangleq \lambda \tau.
  \begin{cases}
    f(\tau) & \text{if $\tau \in \memst$, } \\
    c \CSp \E{\delta}{F} & \text{if $\tau \in \conf$ and $\tau \toop{c} \delta$.}
  \end{cases}
\]
We first prove continuity, thus in particular well-definedness of the $\QET{-}{-}$ function:
\begin{lemma}[Continuity and Monotonicity]\label{l:QET:continuity}
For any $\omega$-chain $(f_n)_{n \in \N}$ and any functions $f,g$ s.t. $f \leq g$ the following hold:
  \begin{enumerateenv}
  \item\label{l:QET:continuity:cont} $\sup_n \QET{\cmd}{f_n} = \QET{\cmd}{\sup_n f_n}$
    ; and
  \item\label{l:QET:continuity:mono} $\QET{\cmd}{f} \leq \QET{\cmd}{g}$.
  \end{enumerateenv}
\end{lemma}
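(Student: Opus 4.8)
The plan is to reduce both claims to abstract fixed-point reasoning on the Kegelspitze $\CSd^{\conf \cup \memst}$, via \Cref{p:lfp}. Recall that $\QET{\cmd}{f}(\sigma) = \QETT{f}(\cmd,\sigma)$ with $\QETT{f} = \lfp(\xi_f)$. The first step is to check that for each fixed $f \in \CSd^\memst$ the functional $\xi_f$ is $\omega$-continuous on $\CSd^{\conf \cup \memst}$: on a terminal object $\tau$ we have $\xi_f(F)(\tau) = f(\tau)$, which does not depend on $F$, while on a configuration $\tau$ with $\tau \toop{c} \delta$ we have $\xi_f(F)(\tau) = c \CSp \E{\delta}{F}$, and this is continuous in $F$ because $\E{\delta}{-}$ is a countable convex sum (hence $\omega$-continuous in each component, so continuous in $F$ since the order on $\CSd^{\conf \cup \memst}$ is pointwise) and $\CSp$ is $\omega$-continuous in its second argument by \Cref{d:cs}. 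Thus $\xi_f$ has a least fixed point $\QETT{f} = \sup_{n} \xi_f^{n}(\bot)$ by \Cref{p:lfp}; this simultaneously discharges the well-definedness of $\QETT{-}$ that was promised earlier.

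For monotonicity, I would observe that $f \mapsto \xi_f$ is monotone: if $f \leq g$ pointwise, then for every $F$ the functionals $\xi_f(F)$ and $\xi_g(F)$ agree on configurations and satisfy $f(\tau) \leq g(\tau)$ on terminal states, so $\xi_f(F) \leq \xi_g(F)$. Since $\lfp$ is $\omega$-continuous by \Cref{p:lfp}, hence monotone, we get $\QETT{f} = \lfp(\xi_f) \leq \lfp(\xi_g) = \QETT{g}$; evaluating at $(\cmd,\sigma)$ yields $\QET{\cmd}{f} \leq \QET{\cmd}{g}$.

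For continuity, the key step is to strengthen the previous observation to: $f \mapsto \xi_f$ is $\omega$-continuous into the function space $\CSd^{\conf \cup \memst} \to \CSd^{\conf \cup \memst}$. Given an $\omega$-chain $(f_n)_n$, one verifies pointwise that $\xi_{\sup_n f_n}(F) = \sup_n \xi_{f_n}(F)$: on a terminal object both sides equal $\sup_n f_n(\tau)$, and on a configuration $\tau$ with $\tau \toop{c} \delta$ both sides equal $c \CSp \E{\delta}{F}$, which is constant in $n$. Composing with the continuous operator $\lfp$ from \Cref{p:lfp}, the assignment $f \mapsto \QETT{f} = \lfp(\xi_f)$ is $\omega$-continuous, so $\QETT{\sup_n f_n} = \sup_n \QETT{f_n}$; since suprema in $\CSd^{\conf \cup \memst}$ are computed pointwise, evaluation at $(\cmd,\sigma)$ gives $\QET{\cmd}{\sup_n f_n} = \sup_n \QET{\cmd}{f_n}$.

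I do not expect a genuine obstacle here; the only care needed is bookkeeping of the right continuity facts — continuity of $\E{\delta}{-}$ in its argument and of $\CSp$ in its second argument, to get continuity of each $\xi_f$, and then the easy continuity of $f \mapsto \xi_f$ — and invoking \Cref{p:lfp} for the continuity of $\lfp$ itself. One should also note that evaluation at a fixed object $(\cmd,\sigma)$ preserves suprema, which is immediate because the order and suprema on $\CSd^{\conf \cup \memst}$ are pointwise.
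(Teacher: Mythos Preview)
Your proposal is correct and follows essentially the same route as the paper: establish $\omega$-continuity of each $\xi_f$, observe that $f \mapsto \xi_f$ is $\omega$-continuous (and monotone), and conclude via the continuity of $\lfp$ from \Cref{p:lfp}. The paper is terser---it also remarks that the distributions $\delta$ arising from a step $\mu \toop{c} \delta$ have at most two-point support, so continuity of $\E{\delta}{-}$ reduces directly to continuity of $+_p$, and it derives monotonicity as an immediate corollary of continuity rather than arguing it separately---but these are cosmetic differences, not a different strategy.
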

\begin{proof}
  Note that $\xi_f$ as defined just above is continuous, since in particular
  $\CSp$ is continuous in both its arguments, and $\E{\delta}{-}$ is continuous.
  Concerning the latter, it is actually sufficient to restrict to cases of distributions $\delta$
  with $\mu \toop{c} \delta$ for configuration $\mu$.
  Then, by definition, either $\delta = \lmulti 1 : \nu \rmulti$ or $\E{\delta}{f} = f(\nu)$
  or $\delta = \lmulti p : \nu_1, 1 - p : \nu_2 \rmulti$ and $\E{\delta}{f} = f(\nu_1) +_p f(\nu_2)$.
  In either case, continuity of $\E{\delta}{f}$ follows from continuity of $+_p$.
  Thus \eqref{l:QET:continuity:cont} is a consequence of \Cref{p:lfp},
  from which then also \eqref{l:QET:continuity:cont} follows.
\end{proof}

To prove \Cref{l:QET}, we will reason via approximations of the involved functions.
To this end, for $n \in \N$ and $f \in \CSd^\memst$
let us define $\QETT[n]{f} : \memst \to \CSd$
as the $n$-th approximant $\xi_f^n(\bot)$, thus, by \Cref{p:lfp},
\begin{align*}
  \QET{\cmd}{f}(\sigma)
  & = \left(\sup_{n \in \N} \QETT[n]{f}\right)(\cmd,\sigma) \\
  & = \sup_{n \in \N} \QETT[n]{f}(\cmd,\sigma)
    .
\end{align*}
For cost structure $(\CSd,\CSp)$, we define, for $n \in \N$,
$\ecost[n]_\to(-) : \conf \to \Rext$  and
$\nf[n]_\to(-) : \conf \to \mathcal{D}(\memst)$
inductively by setting
\begin{align*}
  \ecost[0]_\to(\mu) & = 0 \\
  \ecost[n+1]_\to(\mu) & =
  \begin{cases}
    0 & \text{if $\mu \in \memst$,} \\
    c \CSp \sum_{i \in I} p_i \cdot \ecost[n]_\to(\nu_i) & \text{if $\mu \toop{c} \lmulti p_i : \nu_i \rmulti_{i \in I}$;} \\
  \end{cases}
  \\[3mm]
  \nf[0]_\to(\mu) & = \varnothing \\
  \nf[n+1]_\to(\mu) & =
  \begin{cases}
   \lmulti 1 : \mu \rmulti & \text{if $\mu \in \memst$,} \\
   \sum_{i \in I} p_i \cdot \nf[n]_\to(\nu_i) & \text{if $\mu \toop{c} \lmulti p_i : \nu_i \rmulti_{i \in I}$.}
  \end{cases}
\end{align*}
These approximate $\ecost_\to$ and $\nf_\to$ defined in \Cref{sec:os}, respectively,
in the following sense:
\begin{lemma}\label{l:rew-approx}
  Let $\mu \in \conf \cup \memst$.
  Then
  \begin{enumerate}
  \item\label{l:rew-approx:ecost} $\ecost_\to (\mu) = \sup_{n \in \N} \ecost[n]_\to(\mu)$, and
  \item\label{l:rew-approx:nf} $\E{\nf_\cmd(\mu)}{f} = \sup_{n \in \N} \E{\nf[n]_\to(\mu)}{f}$.
  \end{enumerate}
\end{lemma}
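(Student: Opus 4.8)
The plan is to identify both families of approximants with bounded-length reductions of the lifted relation $\tomulti{\cdot}$ and then pass to the supremum. Since $\tomulti{\cdot}$ is deterministic, for each $\mu$ and each $n$ there is a unique reduction $\lmulti 1 : \mu \rmulti \tomulti{c_n}_n \delta_n$; the core step is to prove, by induction on $n$ and for every $\mu \in \conf \cup \memst$, the two bridging identities
\[
  \ecost[n]_\to(\mu) = c_n
  \qquad\text{and}\qquad
  \nf[n+1]_\to(\mu) = \delta_n \restrict_{term} .
\]
For $n = 0$ both are trivial: the empty reduction has cost $0 = \ecost[0]_\to(\mu)$ and reduct $\delta_0 = \lmulti 1 : \mu \rmulti$, so $\delta_0 \restrict_{term}$ is $\lmulti 1 : \mu \rmulti$ exactly when $\mu \in \memst$ and $\varnothing$ otherwise, which is precisely the case split defining $\nf[1]_\to(\mu)$.

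For the inductive step, if $\mu \in \memst$ then repeated application of rule (Term) gives $c_{n+1} = 0$ and $\delta_{n+1} = \lmulti 1 : \mu \rmulti$, so $\delta_{n+1} \restrict_{term} = \lmulti 1 : \mu \rmulti$, matching the $\memst$-clauses of both recursive definitions. If instead $\mu \toop{c} \lmulti p_i : \nu_i \rmulti_{i \in I}$, then the first step of the length-$(n{+}1)$ reduction is governed by (Mono), and \Cref{l:pars-tomulti} decomposes the remaining $n$ steps as $\lmulti 1 : \nu_i \rmulti \tomulti{c'_i}_n \delta'_i$, with overall reduct $\delta_{n+1} = \sum_{i \in I} p_i \cdot \delta'_i$ and residual cost $\sum_{i \in I} p_i c'_i$, so $c_{n+1} = c + \sum_i p_i c'_i$. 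Feeding in the induction hypotheses $c'_i = \ecost[n]_\to(\nu_i)$ and $\delta'_i \restrict_{term} = \nf[n+1]_\to(\nu_i)$, and using that $(-)\restrict_{term}$ distributes over countable convex sums of subdistributions, the defining clauses $\ecost[n+1]_\to(\mu) = c \CSp \sum_i p_i \cdot \ecost[n]_\to(\nu_i)$ and $\nf[n+2]_\to(\mu) = \sum_i p_i \cdot \nf[n+1]_\to(\nu_i)$ become exactly $c_{n+1}$ and $\delta_{n+1} \restrict_{term}$.

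Granting the bridging identities, part (i) is immediate from the definition of $\ecost_\to$: $\sup_n \ecost[n]_\to(\mu) = \sup_n c_n = \ecost_\to(\mu)$. For part (ii), the identities give $\sup_n \nf[n]_\to(\mu) = \sup_n \delta_n \restrict_{term}$, where the shift of the index is harmless because $(\delta_n \restrict_{term})_n$ is a monotone $\omega$-chain of subdistributions, as already observed in \Cref{sec:os}; this supremum is $\nf_\to(\mu)$ by definition. Finally, since $\E{-}{f}$ is $\omega$-continuous in its distribution argument (\Cref{l:exp-cont}), we conclude $\E{\nf_\to(\mu)}{f} = \E{\sup_n \nf[n]_\to(\mu)}{f} = \sup_n \E{\nf[n]_\to(\mu)}{f}$, which is the claim. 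I expect the main difficulty to be purely bookkeeping: keeping the one-step offset between the $\ecost[n]$ and $\nf[n]$ indexings straight, and recording explicitly that restriction to terminal objects commutes with the convex sums arising in the inductive step --- nothing here is conceptually deep.
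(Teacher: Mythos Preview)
Your proposal is correct and follows essentially the same route as the paper: both establish the bridging identities $\ecost[n]_\to(\mu) = c_n$ and $\nf[n+1]_\to(\mu) = \delta_n\restrict_{term}$ by induction on $n$, invoking \Cref{l:pars-tomulti} to decompose the remaining steps in the inductive case, and then conclude by taking suprema. If anything, you are slightly more explicit than the paper in spelling out the final passage from $\nf_\to(\mu) = \sup_n \nf[n]_\to(\mu)$ to the statement about expectations via \Cref{l:exp-cont}.
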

\begin{proof}
  Concerning Property~\eqref{l:rew-approx:ecost}, we show that
  \[
    \lmulti 1 : \mu \rmulti \tomulti{c}_n \epsilon
    ,
  \]
  iff $\ecost[n]_\to(\mu) = c$, from which the property follows then by definition of $\ecost_\to$.
  The proof is by induction on $n$. The base case is trivial, let us consider the inductive step
  where
  \[
    \lmulti 1 : \mu \rmulti \tomulti{c} \delta \tomulti{d}_n \epsilon
  \]
  where we have to show $\ecost[n+1](\mu) = c + d$.
  Applying \Cref{l:pars-tomulti}, for
  $\delta = \lmulti p_i : \nu_i \rmulti_{i \in I}$ we see that $d = \sum_{i \in I} d_i$
  where for all $i \in I$, $\lmulti 1 : \nu_i \rmulti \tomulti{d_i}_n \epsilon_i$ for some $\epsilon_i$.
  Summing up
  \begin{align*}
    \ecost[n+1]_\to(\mu) & = c + \sum_i p_i \cdot \ecost[n]_\to(\nu_i) \\
    & \cmt{induction hypothesis} \\
    & = c + \sum_i p_i \cdot d_i = c + d
    .
  \end{align*}

  Concerning Property~\eqref{l:rew-approx:nf}, we show first that
  \[
    \lmulti 1 : \mu \rmulti \tomulti{c}_n \epsilon
    ,
  \]
  iff $\nf[n+1]_\to(\mu) = \epsilon \restrict_{term}$.
  Again the proof is by induction on $n$. In the base case $n = 0$,
  we consider $\lmulti 1 : \mu \rmulti \tomulti{0}_0 \lmulti 1 : \mu \rmulti$.
  If $\mu \in \memst$ then $\nf[1]_\to(\mu) = \lmulti 1 : \mu \rmulti = \lmulti 1 : \mu \rmulti \restrict_{term}$
  as desired, in the case $\mu \in \conf$ with $\mu \toop{c} \{ p_i : \nu_i \}_{i \in I}$
  we have $\nf[1]_\to(\mu) = \sum_{i \in I} p_i \cdot \nf[0]_\to(\nu_i) = \sum_{i \in I} p_i \cdot \varnothing = \varnothing = \lmulti 1 : \mu \rmulti \restrict_{term}$.
  This concludes the base case, we move to the inductive case, where
  \[
    \lmulti 1 : \mu \rmulti \tomulti{c} \delta \tomulti{d}_n \epsilon
    .
  \]
  Applying \Cref{l:pars-tomulti}, for
  $\delta = \lmulti p_i : \nu_i \rmulti_{i \in I}$ we see that
  $\epsilon = \sum_{i \in I} p_i \cdot \epsilon_i$
  with $\lmulti 1 : \nu_i \rmulti \tomulti{d_i}_n \epsilon_i$ for all $i \in I$.
  Then
  \begin{align*}
    \nf[n+2]_\to(\mu)
    & = \sum_i p_i \cdot \nf[n+1]_\to(\nu_i) \\
    & \cmt{induction hypothesis} \\
    & = \textstyle\sum_i p_i \cdot (\epsilon_i \restrict_{term}) \\
    & = \left(\textstyle\sum_i p_i \cdot \epsilon_i\right) \restrict_{term} = \epsilon \restrict_{term}
    .
  \end{align*}

  From this, we finally conclude as Property~\eqref{l:rew-approx:nf}
  \begin{align*}
    \nf_\to(\mu) & = \sup_{n \in \N}\{ \delta \restrict_{term} \mid \lmulti 1: a\rmulti \tomulti{c}_n \delta \} \\
                 & = \sup_{n \in \N} \nf[n+1]_\to(\mu) = \sup_{n \in \N} \nf[n]_\to(\mu)
    . \qedhere
  \end{align*}
\end{proof}

\again{l:QET}
\begin{proof}
We show
  \[
    \QETT[n]{f}(\mu) = \ecost[n]_\to(\mu) \CSp \E{\nf[n]_\to(\mu)}{f}
    ,
  \]
for any $\mu \in \conf \cup \memst$,
since then we have
  \begin{align*}
    \fl{\QET{\cmd}{f}(\sigma)} \\
    & = \sup_{n \in \N} \QETT[n]{f}(\cmd,\sigma) \\
    & = \sup_{n \in \N} (\ecost[n]_\to(\cmd,\sigma) \CSp \E{\nf[n]_\to(\cmd,\sigma)}{f}) \\
    & \cmt{continuity of $\CSp$} \\
    & = \sup_{n \in \N} \ecost[n]_\to(\cmd,\sigma) \CSp \sup_{n \in \N}\E{\nf[n]_\to(\cmd,\sigma)}{f} \\
    & \cmt{\Cref{l:exp-cont}} \\
    & = \sup_{n \in \N} \ecost[n]_\to(\cmd,\sigma) \CSp \E{\sup_{n \in \N} \nf[n]_\to(\cmd,\sigma)}{f} \\
    & \cmt{\Cref{l:rew-approx}} \\
    & = \ecost_\to(\cmd,\sigma) \CSp \E{\nf_\to(\cmd,\sigma)}{f} \\
    & \cmt{by definitions} \\
    & = \ecost_\cmd(\sigma) \CSp \E{\nf_\cmd(\sigma)}{f}
      .
  \end{align*}
  Let us now prove the above inequality on approximants.
  The proof is by induction on
  $n$. Let $\mu \in \conf \cup \memst$.
  \begin{proofcases}
    \proofcase{n = 0}
    Trivially,
    \begin{align*}
      \QETT[0]{f}(\mu) = \bot
      & = 0 \CSp \bot \\
      & = \ecost[0]_\to(\mu) + \E{\varnothing}{f} \\
      & = \ecost[0]_\to(\mu) + \E{\nf[0]_\to(\mu)}{f}
        .
    \end{align*}
    \proofcase{$n+1$}
    Consider the step case.
    If $\mu \in \memst$,
    then
    \begin{align*}
      \QETT[n+1]{f}(\mu) = \bot
      & = 0 \CSp f(\mu)\ \\
      & = \ecost[n+1]_\to(\mu) + \E{\{1 : \mu \}}{f} \\
      & = \ecost[n+1]_\to(\mu) + \E{\nf[n+1]_\to(\mu)}{f}
        .
    \end{align*}
    Otherwise, if $\mu \in \conf$ then
    $\mu \toop{c} \lmulti p_i : \nu_i \rmulti_{i \in I}$
    and we have
    {
    \begin{align*}
      \fl{\QETT[n + 1]{f} (\mu)} \\
      & = c \CSp \sum_{i \in I} p_i \cdot \QETT[n]{f}(\nu_i) \\
      & \cmt{induction hypothesis} \\
      & = c \CSp \sum_{i \in I} p_i \cdot (\ecost[n]_\to(\nu_i) \CSp \E{\nf[n]_\to(\nu_i)}{f}) \\
      & \cmt{\Cref{d:cs}(3)} \\
      & = c \CSp \left(\sum_{i \in I} p_i \cdot \ecost[n]_\to(\nu_i) \CSp \sum_{i \in I} p_i \cdot \E{\nf[n]_\to(\nu_i)}{f} \right) \\
      & \cmt{\Cref{d:cs}(2)} \\
      & = \left(c \CSp \sum_{i \in I} p_i \cdot \ecost[n]_\to(\nu_i)\right) \CSp \sum_{i \in I} p_i \cdot \E{\nf[n]_\to(\nu_i)}{f}  \\
      & \cmt{\Cref{l:dsum-exp}} \\
      & = \left(c \CSp \sum_{i \in I} p_i \cdot \ecost[n]_\to(\nu_i)\right) \CSp \E{\sum_{i \in I} p_i \cdot \nf[n]_\to(\nu_i)}{f}  \\
      & \cmt{by definitions} \\
      & = \ecost[n+1]_\to(\mu) \CSp \E{\nf[n+1]_\to(\mu)}{f} .
        \qedhere
    \end{align*}
    }
  \end{proofcases}
\end{proof}

\section{Proof of \Cref{l:QET:main-props}}

\again{l:QET:main-props}
\begin{proof}
  \newcommand{\sepp}{\mathrel{;\!;}}
  We start with the proof of the first identity.
  For $\mu \in \conf \cup \memst$ and $\cmd \in \Cmds$, let us
  define $\mu  \sepp \cmd \in \conf$ by case analysis on $\mu$ as follows:
  \begin{align*}
    (\cmd_1,\sigma) \sepp \cmd_2 & \triangleq (\cmd_1 \sep \cmd_2; \sigma) \\
    \sigma \sepp \cmd_2 & \triangleq (\cmd_2; \sigma)
  \end{align*}
  By this notation, Rule (Seq) from \Cref{fig:os}
  instantiated to $\cmd_1;\cmd_2$,
  can be written as
  \[
    \begin{prooftree}
      \hypo{\ltuple\cmd_1 ,s,\ket{\varphi}\rtuple \toop{c} \lmulti p_i : \mu_i \rmulti_{i \in I}}
      \infer1[(Seq)]{\ltuple\cmd_1 \sep \cmd_2,s,\ket{\varphi}\rtuple \toop{c} \lmulti p_i: \mu_i\sepp\cmd_2 \rmulti_{i \in I}}
    \end{prooftree}
  \]
  Let $lhs$ and $rhs$ be the left- and right-hand side of the Identity \eqref{l:QET:main-props:seq}.
  We show $lhs \leq rhs$ and $rhs \leq lhs$ separately.
  \begin{proofcases}
    \proofcase{$lhs \leq rhs$}
    We prove
    \begin{multline}\label{l:QET:comp-approx:1}
      \QETT[n]{f}(\cmd_1 \sep \cmd_2,\sigma) \\
      \leq \QETT[n]{\lambda \tau. \QETT[n]{f}(\cmd_2,\tau)}(\cmd_1,\sigma)
      ,
    \end{multline}
    for all $n \in \N$ and $\sigma \in \memst$. From this, the case follows as $\QET{\cmd}{h} = \sup_n \QETT[n]{h}(\cmd,-)$,
    using continuity and monotonicity of the transformer (\Cref{l:QET:continuity}).
    The proof is by induction on $n$.
    The case $n = 0$ is trivial, as then $lhs = \underline{0} = rhs$.

    Concerning the inductive step, fix $\sigma \in \memst$.
    Note
    \begin{equation}
      \label{eq:qet-aux-ih:1}
      \QETT[n]{f}(\mu \sepp \cmd)
      \leq \QETT[n]{\lambda \tau. \QETT[n]{f}(\cmd,\tau)} (\mu)
      ,
    \end{equation}
    holds for any $\mu \in \conf \cup \memst$ and $\cmd \in \Cmds$.
    This follows by case analysis on $\mu$, using induction hypothesis in the case $\mu \in \conf$.
    Now suppose
    \[
      \ltuple\cmd_1 \sep \cmd_2,s,\ket{\varphi}\rtuple \toop{c} \lmulti p_i: \mu_i\sepp\cmd_2 \rmulti_{i \in I}
      ,
    \]
    since
    \[
      \ltuple\cmd_1 ,s,\ket{\varphi}\rtuple \toop{c} \lmulti p_i : \mu_i \rmulti_{i \in I}
      .
    \]
    We can thus conclude as
    \begin{align*}
      \fl{\QETT[n+1]{f}(\cmd_1 \sep \cmd_2, \sigma)} \\
      & \cmt{unfolding definition} \\
      & = c \CSp \sum_i p_i \cdot \QETT[n]{f}(\mu_i \sepp \cmd_2) \\
      & \cmt{Eq. \eqref{eq:qet-aux-ih:1} \& monotonicity of barycentric operations} \\
      & \leq c \CSp \sum_i p_i \cdot \QETT[n]{\lambda \tau. \QETT[n]{f}(\cmd_2,\tau)} (\mu_i) \\
      & \cmt{folding definition} \\
      & = \QETT[n+1]{\lambda \tau. \QETT[n]{f}(\cmd_2,\tau)}(\cmd_1,\sigma) \\
      & \cmt{$\QETT[n]{f}$ is monotone in $f$ and $n$} \\
      & = \QETT[n+1]{\lambda \tau. \QETT[n+1]{f}(\cmd_2,\tau)}(\cmd_1,\sigma)
        .
    \end{align*}
    \proofcase{$rhs \leq lhs$}
    It is sufficient to show
    \begin{multline}
      \label{l:QET:comp-approx:2}
      \QETT[n]{\lambda \tau. \QETT{f}(\cmd_2,\tau)}(\cmd_1,\sigma) \\
      \leq \QETT{f}(\cmd_1 \sep \cmd_2,\sigma)
      ,
    \end{multline}
    for all $n \in \N$.
    Suppose
    \[
      \ltuple\cmd_1 ,s,\ket{\varphi}\rtuple \toop{c} \lmulti p_i : \mu_i \rmulti_{i \in I}
      ,
    \]
    and thus
    \[
      \ltuple\cmd_1 \sep \cmd_2,s,\ket{\varphi}\rtuple \toop{c} \lmulti p_i: \mu_i\sepp\cmd_2 \rmulti_{i \in I}
      .
    \]
    The proof is by induction on $n$, as before, it is sufficient to consider
    the inductive step.
    Using induction hypothesis, we
    obtain
    \begin{equation}
      \label{eq:qet-aux-ih:2}
      \QETT[n]{\lambda \tau. \QETT{f}(\cmd,\tau)} (\mu)
      \leq \QETT{f}(\mu \sepp \cmd)
      ,
    \end{equation}
    for any $\mu \in \conf \cup \memst$ and $\cmd \in \Cmds$.
    We conclude then \eqref{l:QET:comp-approx:2} as
    \begin{align*}
      \fl{\QETT[n+1]{\lambda \tau. \QETT{f}(\cmd_2,\tau)}(\cmd_1,\sigma)} \\
      & \cmt{unfolding definition} \\
      & = c \CSp \sum_i p_i \cdot \QETT[n]{\lambda \tau. \QETT{f}(\cmd_2,\tau)}(\mu_i) \\
      & \cmt{Eq. \eqref{eq:qet-aux-ih:2} \& monotonicity of barycentric operations} \\
      & = c \CSp \sum_i p_i \cdot \QETT{f}(\mu_i \sepp \cmd_2) \\
      & \cmt{folding definition} \\
      & = \QETT{f}(\cmd_1\sep\cmd_2,\sigma)
        .
    \end{align*}
    This concludes the proof of the first identity.
  \end{proofcases}
  We now come to the second identity.  Let
  \[
    \chi_{f,\bexp}(F) \triangleq \QET{\cmd}{F} +_{\sem{\bexp}} f
  \]
  Again, we perform case analysis.
  \begin{proofcases}
    \proofcase{$lhs \leq rhs$}
    We prove the stronger statement
    \[
      \QETT[n]{f}(\while(\bexp)\{ \cmd \},\sigma)
      \leq \chi_{f,\bexp}^n(\bot)(\sigma)
      ,
    \]
    for all $n \in \N$ and $\sigma \in \memst$.
    It is sufficient to consider the inductive step.
    Fix $\sigma = (s,\ket{\varphi}) \in \memst$. Then
    \begin{align*}
      \fl{\QETT[n+1]{f}(\while(\bexp)\{ \cmd \},\sigma)} \\
      & =
      \begin{cases}
        \QETT[n]{f}(\cmd\sep\while(\bexp)\{ \cmd \},\sigma) & \text{if $\sem{\bexp}^s$}  \\
        f (\sigma) & \text{if $\sem{\neg\bexp}^s$;}
      \end{cases}
      \\
      & = \QETT[n]{f}(\cmd\sep\while(\bexp)\{ \cmd \},\sigma) +_{\sem{\bexp}^s} f(\sigma) \\
      & \cmt{Equation \eqref{l:QET:comp-approx:1}} \\
      & \leq \QETT[n]{\lambda \tau. \QETT[n]{f}(\while(\bexp)\{ \cmd \},\tau)}(\cmd,\sigma) +_{\sem{\bexp}^s} f(\sigma) \\
      & \cmt{induction hypothesis, monotonicity of $\QETT[n]{-}$} \\
      & \leq \QETT[n]{\chi_{f,\bexp}^n(\bot)}(\cmd,\sigma) +_{\sem{\bexp}^s} f(\sigma) \\
      & \leq \chi_{f,\bexp}^{n+1}(\bot)(\sigma)
        .
    \end{align*}
    \proofcase{$rhs \leq lhs$}
    We prove the stronger statement
    \[
      \chi_{f,\bexp}^n(\bot)(\sigma)
      \leq \QETT{f}(\while(\bexp)\{ \cmd \},\sigma)
      ,
    \]
    for all $n \in \N$ and $\sigma \in \memst$. The proof is by induction on $n$,
    where it is sufficient to consider the inductive step.
    Let $\sigma = (s,\ket{\varphi}) \in \memst$.
    Then
    \begin{align*}
      \fl{\chi_{f,\bexp}^{n+1}(\bot)(\sigma)} \\
      & = (\QETT{\chi_{f,\bexp}^n(\bot)}(\cmd,\sigma) +_{\sem{\bexp}^s} f(\sigma) ) \\
      & \cmt{induction hypothesis, monotonicity of $\QETT{-}$} \\
      & \leq \QETT{\lambda \tau. \QETT{f}(\while(\bexp)\{ \cmd \},\tau)}(\cmd,\sigma) +_{\sem{\bexp}^s} f(\sigma) \\
      & \cmt{equation \eqref{l:QET:comp-approx:2}, monotonicity of $\QETT{-}$} \\
      & \leq \QETT{f}(\cmd\sep\while(\bexp)\{ \cmd \},\sigma) +_{\sem{\bexp}^s} f(\sigma) \\
      & \cmt{reasoning as in the previous case} \\
      & = \QETT{f}(\while(\bexp)\{ \cmd \},\sigma) .
      \qedhere
    \end{align*}

  \end{proofcases}
\end{proof}

\section{Proofs of \Cref{thm:idents}}


\begin{lemma}\label{l:qt-continuity}
  For every
  $\omega$-chain $(f_n)_{n \in \N}$,
  \[
    \qet{\cmd}{\sup_{n\in\N} f_n} = \sup_{n \in \N} \qet{\cmd}{f_n}
    .
  \]
\end{lemma}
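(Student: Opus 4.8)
The plan is to prove the statement by induction on the structure of $\cmd$, establishing \emph{simultaneously} that $\qet{\cmd}{-}$ is monotone --- so that $(\qet{\cmd}{f_n})_n$ is again an $\omega$-chain and the right-hand side is meaningful --- and that the stated identity holds. This induction should in fact be read as running hand in hand with \Cref{def:qet} itself: the $\lfp$ in the loop clause of \Cref{fig:qet} only exists once the functional $\lambda F.\ \qet{\cmd}{F}\up{\sem{\bexp}} f$ is known to be $\omega$-continuous, which is precisely what the induction hypothesis for $\cmd$ supplies. One could alternatively derive the lemma from soundness (\Cref{t:soundness}) and \Cref{l:QET:continuity}, but the direct argument is preferable, since continuity is in any case needed to justify \Cref{def:qet}.

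Monotonicity is the routine half: in each clause of \Cref{fig:qet} it reduces to monotonicity of the operations involved --- substitution into an expectation, which is precomposition with an update of states; the pointwise barycentric maps $\up{p}$; the pointwise cost addition $\uCSp$; and, in the loop clause, the operator $\lfp$, which is monotone by \Cref{p:lfp}. For the continuity identity, fix an $\omega$-chain $(f_n)_{n\in\N}$ in $\CSd^\memst$. The cases $\skp$, $\x\asg\e$ and $\overline{\q}\oasg\ope$ are immediate, since each of these transformers has the form $f\mapsto f\circ\theta$ for a fixed map $\theta$ on states, and precomposition commutes with suprema taken pointwise. The case $\x\asg\meas(\q)$ adds to this observation the $\omega$-continuity of $+_r$ in each argument (clause (2) in the definition of a Kegelspitze), and $\cons(\nexp)$ uses the $\omega$-continuity of $\CSp$ in its second argument (\Cref{d:cs}). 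The conditional case follows from the two induction hypotheses together with $\omega$-continuity of $\up{\sem{\bexp}}$. For $\cmd_1\sep\cmd_2$ I would chain the hypotheses: by the hypothesis for $\cmd_2$, $\qet{\cmd_2}{\sup_n f_n}=\sup_n\qet{\cmd_2}{f_n}$; this is again an $\omega$-chain by monotonicity of $\qet{\cmd_2}{-}$, so the hypothesis for $\cmd_1$ applies and yields $\qet{\cmd_1}{\sup_n\qet{\cmd_2}{f_n}}=\sup_n\qet{\cmd_1}{\qet{\cmd_2}{f_n}}$, which is the claim.

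The remaining case, $\while(\bexp)\{\cmd\}$, is where the real work lies, and the main obstacle is to push continuity through the least-fixed-point operator. Using the induction hypothesis, $\qet{\cmd}{-}$ is an $\omega$-continuous endomap of $\CSd^\memst$; hence the assignment $\Phi: f\mapsto\left(\lambda F.\ \qet{\cmd}{F}\up{\sem{\bexp}} f\right)$ is a well-defined map from $\CSd^\memst$ into the $\omega$-cpo of $\omega$-continuous self-maps of $\CSd^\memst$ (ordered pointwise), and $\Phi$ is itself $\omega$-continuous, because $\up{p}$ is $\omega$-continuous in its second argument. Since $\qet{\while(\bexp)\{\cmd\}}{-}=\lfp\circ\Phi$ and $\lfp$ is $\omega$-continuous by \Cref{p:lfp}, the composite is $\omega$-continuous, so that $\qet{\while(\bexp)\{\cmd\}}{\sup_n f_n}=\lfp(\Phi(\sup_n f_n))=\lfp(\sup_n\Phi(f_n))=\sup_n\lfp(\Phi(f_n))=\sup_n\qet{\while(\bexp)\{\cmd\}}{f_n}$, as required.
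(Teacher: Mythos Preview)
Your proposal is correct and follows essentially the same route as the paper: structural induction on $\cmd$, with each case discharged by continuity of the relevant primitive ($\up{p}$, $\CSp$, substitution) and the while case handled via the $\omega$-continuity of $\lfp$ from \Cref{p:lfp} applied to the functional $f\mapsto(\lambda F.\ \qet{\cmd}{F}\up{\sem{\bexp}} f)$. Your additional remark that monotonicity and continuity should be established simultaneously, and that this is what makes the $\lfp$ in \Cref{def:qet} well-defined in the first place, is a point the paper leaves implicit but is well worth making.
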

\begin{proof}
  The proof is by induction on the command $\cmd$.
  \begin{proofcases}
    \proofcase{$\skp$}
    Trivially,
    \[
      \qet{\skp}{\sup_{n\in\N} f_n} = \sup_{n\in\N} f_n = \sup_{n\in\N} \qet{\skp}{f_n}
      .
    \]
    \proofcase{$\x \asg \e$}
    Then
    \begin{align*}
      \fl{\qet{\x \asg \e}{\sup_{n\in\N} f_n}} \\
      & = \lambda \ltuple s,\ket{\varphi} \rtuple. \left(\sup_{n\in\N} f_n\right)\ltuple s[\x :=\sem{\e}^{s}],\ket{\varphi} \rtuple \\
      & = \lambda \ltuple s,\ket{\varphi} \rtuple. \sup_{n\in\N} f_n\ltuple s[\x :=\sem{\e}^{s}],\ket{\varphi} \rtuple\\
      & = \sup_{n\in\N} \lambda \ltuple s,\ket{\varphi} \rtuple. f_n\ltuple s[\x :=\sem{\e}^{s}],\ket{\varphi} \rtuple\\
      & = \sup_{n\in\N} \qet{\x \asg \e}{f_n} .
    \end{align*}
    \proofcase{$\x \asg \meas(\q)$}
    Then, similar as above,
    \begin{align*}
      \fl{\qet{\x \asg \meas(\q)}{\sup_{n\in\N} f_n}} \\
      & = \lambda \ltuple s,\ket{\varphi} \rtuple.
        \left(\sup_{n\in\N} f_n\right)\ltuple s[\x :=0],\oper{M}_0^\q\ket{\varphi} \rtuple\\
      & \qquad\quad +_{p_0^\q\ket{\varphi}} \left(\sup_{n\in\N} f_n\right)\ltuple s[\x :=1],\oper{M}_1^\q\ket{\varphi} \rtuple \\
      & = \lambda \ltuple s,\ket{\varphi} \rtuple.
        \sup_{n\in\N} f_n\ltuple s[\x :=0],\oper{M}_0^\q\ket{\varphi} \rtuple\\
      & \qquad\quad +_{p_0^\q\ket{\varphi}} \sup_{n\in\N} f_n\ltuple s[\x :=1],\oper{M}_1^\q\ket{\varphi} \rtuple \\
      & \cmt{continuity of barycentric sum} \\
      & = \lambda \ltuple s,\ket{\varphi} \rtuple.
        \sup_{n\in\N} \bigl(f_n\ltuple s[\x :=0],\oper{M}_0^\q\ket{\varphi} \rtuple\\
      & \qquad\qquad\qquad +_{p_0^\q\ket{\varphi}} f_n\ltuple s[\x :=1],\oper{M}_1^\q\ket{\varphi} \rtuple\bigr) \\
      & = \sup_{n\in\N} \lambda \ltuple s,\ket{\varphi} \rtuple.
        \bigl(f_n\ltuple s[\x :=0],\oper{M}_0^\q\ket{\varphi} \rtuple\\
      & \qquad\qquad\qquad +_{p_0^\q\ket{\varphi}} f_n\ltuple s[\x :=1],\oper{M}_1^\q\ket{\varphi} \rtuple\bigr) \\
      & = \sup_{n\in\N} \qet{\x \asg \meas(\q)}{f_n}
        .
    \end{align*}
    \proofcase{$\cons(\nexp)$}
    \begin{align*}
      \fl{\qet{\cons(\nexp)}{\sup_{n\in\N} f_n}} \\
      & = \lambda \ltuple s,\ket{\varphi} \rtuple. \max(\sem{\nexp}^s,0) \CSp \sup_{n\in\N} f_n \\
      & \cmt{continuity of $\CSp$}\\
      & = \lambda \ltuple s,\ket{\varphi} \rtuple. \sup_{n\in\N} (\max(\sem{\nexp}^s,0) \CSp f_n) \\
      & = \sup_{n\in\N} \lambda \ltuple s,\ket{\varphi} \rtuple. (\max(\sem{\nexp}^s,0) \CSp f_n) \\
      & = \sup_{n\in\N} \qet{\cons(\nexp)}{f_n}
        .
    \end{align*}
    \proofcase{$\cmd_1 \sep \cmd_2$}
    \begin{align*}
      \fl{\qet{\cmd_1\sep\cmd_2}{\sup_{n\in\N} f_n}}\\
      & = \qet{\cmd_1}{\qet{\cmd_2}{\sup_{n\in\N} f_n}} \\
      & \cmt{induction hypothesis on $\cmd_2$} \\
      & = \qet{\cmd_1}{\sup_{n\in\N} \qet{\cmd_2}{f_n}} \\
      & \cmt{induction hypothesis on $\cmd_1$} \\
      & = \sup_{n\in\N} \qet{\cmd_1}{\qet{\cmd_2}{f_n}} \\
      & = \sup_{n\in\N} \qet{\cmd_1\sep\cmd_2}{f_n}
        .
    \end{align*}

    \proofcase{$\ifa (\bexp) \{\cmd_1\} \elsea \{\cmd_2\}$}
    \begin{align*}
      \fl{\qet{\ifa (\bexp) \{\cmd_1\} \elsea \{\cmd_2\}}{\sup_{n \in \N} f_n}} \\
      & = \qet{\cmd_1}{\sup_{n \in \N} f_n} \up{\sem{\bexp}} \qet{\cmd_2}{\sup_{n \in \N} f_n} \\
      & \cmt{induction hypotheses} \\
      & = \sup_{n \in \N} \qet{\cmd_1}{f_n} \up{\sem{\bexp}} \sup_{n \in \N} \qet{\cmd_2}{f_n} \\
      & \cmt{continuity of barycentric sum} \\
      & = \sup_{n \in \N} \left(\qet{\cmd_1}{f_n} \up{\sem{\bexp}} \qet{\cmd_2}{f_n}\right) \\
      & = \sup_{n \in \N}\qet{\ifa (\bexp) \{\cmd_1\} \elsea \{\cmd_2\}}{f_n}
        .
    \end{align*}
    \proofcase{$\while(\bexp)\{ \cmd \}$}
    We conclude this final case as,
    \begin{align*}
      \fl{\qet{\while(\bexp)\{ \cmd \}}{\sup_{n \in \N} f_n}} \\
      & = \lfp\left(\lambda F. \qet{\cmd}{F} \up{\sem{\bexp}} \sup_{n \in \N} f_n\right) \\
      & \cmt{continuity of barycentric sum} \\
      & = \lfp\left(\lambda F. \sup_{n \in \N} \left(\qet{\cmd}{F} \up{\sem{\bexp}} f_n\right)\right) \\
      & = \lfp\left(\sup_{n \in \N} \lambda F. \left(\qet{\cmd}{F} \up{\sem{\bexp}} f_n\right)\right) \\
      & \cmt{$\star$} \\
      & = \sup_{n \in \N}\lfp\left(\lambda F.  \qet{\cmd}{F} \up{\sem{\bexp}} f_n\right) \\
      & = \sup_{n \in \N} \qet{\while(\bexp)\{ \cmd \}}{f_n}
      .
    \end{align*}
    Concerning $(\star)$ we use that $\lfp$ itself is continuous (\Cref{p:lfp}) on continuous functionals,
    the latter being a consequence of the induction hypothesis.
  \end{proofcases}
\end{proof}

\begin{lemma}[Monotonicity Law]\label{l:qt-monotone}
  \[
    f \leq g \implies \qet{\cmd}{f} \leq \qet{\cmd}{g}
    .
  \]
\end{lemma}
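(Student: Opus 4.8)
The plan is to derive the Monotonicity Law directly from the continuity law already established in \Cref{l:qt-continuity}, exactly as hinted in the discussion following \Cref{thm:idents}. First I would fix $f, g \in \CSd^\memst$ with $f \leq g$ and build the $\omega$-chain $(f_n)_{n \in \N}$ in $\CSd^\memst$ defined by $f_0 \triangleq f$ and $f_n \triangleq g$ for all $n \geq 1$. This is a genuine $\omega$-chain, since $f_0 = f \leq g = f_1 = f_2 = \cdots$, using here only the hypothesis $f \leq g$ together with the fact that the pointwise order on $\CSd^\memst$ inherits the $\omega$-cpo structure of the Kegelspitze $\CSd$. Its least upper bound is plainly $\sup_{n \in \N} f_n = g$.

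Next I would apply \Cref{l:qt-continuity} to this chain, obtaining
\[
  \qet{\cmd}{g} = \qet{\cmd}{\textstyle\sup_{n \in \N} f_n} = \sup_{n \in \N} \qet{\cmd}{f_n} .
\]
Since $\qet{\cmd}{f} = \qet{\cmd}{f_0}$ is one of the elements of the family $(\qet{\cmd}{f_n})_{n \in \N}$, it lies below the supremum of that family, whence $\qet{\cmd}{f} \leq \qet{\cmd}{g}$, which is the desired conclusion.

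There is essentially no obstacle here: the only points to verify are that $(f_n)_n$ is a legitimate $\omega$-chain with supremum $g$ (immediate from $f \leq g$) and that the supremum in the codomain exists (immediate, since $\CSd$, and hence $\CSd^\memst$ under the pointwise order, is an $\omega$-cpo). An alternative, self-contained route would be a structural induction on $\cmd$, invoking monotonicity of the barycentric operations $+_p$, of scalar multiplication, and of cost addition $\CSp$ (all following from \Cref{d:cs} and the Kegelspitze axioms), together with monotonicity of $\lfp$ on the monotone functionals arising from $\while$ loops; but the continuity argument above is shorter and reuses work already done, so I would present that one.
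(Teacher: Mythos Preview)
Your proposal is correct and matches the paper's approach exactly: the paper's proof simply states that monotonicity is an immediate consequence of \Cref{l:qt-continuity}, and the standard way to extract monotonicity from $\omega$-continuity is precisely the two-element chain argument you spell out. Your write-up just makes explicit what the paper leaves implicit.
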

\begin{proof}
  This is an immediate consequence of \Cref{l:qt-continuity}.
\end{proof}

\begin{lemma}[Distributivity Law]\label{l:qt-distributivity}
  For all $p \in [0,1]$,
  \[
    \qet{\cmd}{f \up{p} g} = \qet{\cmd}{f} \up{p} \qet{\cmd}{g} .
  \]
\end{lemma}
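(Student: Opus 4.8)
The plan is to prove the identity by structural induction on $\cmd$, in the same style as \Cref{l:qt-continuity}. Beyond the axioms of a cost structure, the only extra algebraic ingredient is the \emph{medial law} for barycentric operations: for all $a,b,c,d$ in a Kegelspitze and all $p,q\in[0,1]$,
\[
  (a +_p b) +_q (c +_p d) = (a +_q c) +_p (b +_q d).
\]
I would establish this first; it follows from the four barycentric-algebra axioms, and it is trivial when $q\in\{0,1\}$ (then one of the two combinations is a projection). All operations $\up{p}$ occurring in \Cref{fig:qet} are pointwise instances of $+_p$, so this law applies pointwise wherever it is needed.

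The base cases $\skp$ and $\x\asg\e$ (and likewise $\overline{\q}\oasg\ope$) are immediate: the transformer is either the identity or a pointwise precomposition of the continuation, and precomposition commutes with the pointwise operation $\up{p}$. For $\cons(\nexp)$ I would invoke cost-structure axiom~(3) with $c_1=c_2=\umax(\sem{\nexp},\underline{0})$, together with the barycentric identity $c +_p c = c$, to obtain $c \CSp (s_1 +_p s_2) = (c\CSp s_1) +_p (c\CSp s_2)$ pointwise. For $\x\asg\meas(\q)$ I would first push $\up{p}$ through the substitutions $[\x:=k;\oper{M}_k^\q]$ (precomposition again) to rewrite $\qet{\x\asg\meas(\q)}{f\up{p}g}$ as $(f_0\up{p}g_0)\up{p_0^\q}(f_1\up{p}g_1)$ with $f_k\triangleq f[\x:=k;\oper{M}_k^\q]$ and $g_k\triangleq g[\x:=k;\oper{M}_k^\q]$, and then apply the medial law pointwise to reorganise this into $(f_0\up{p_0^\q}f_1)\up{p}(g_0\up{p_0^\q}g_1)=\qet{\x\asg\meas(\q)}{f}\up{p}\qet{\x\asg\meas(\q)}{g}$. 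Sequencing follows by composing the induction hypotheses for $\cmd_1$ and $\cmd_2$, and the conditional follows from the induction hypotheses together with one more use of the medial law (with the outer parameter $\sem{\bexp}$).

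The delicate case is $\while(\bexp)\{\cmd\}$, where $\qet{\while(\bexp)\{\cmd\}}{h}=\lfp(\Phi_h)$ for $\Phi_h\triangleq\lambda F.\qet{\cmd}{F}\up{\sem{\bexp}}h$, and by \Cref{p:lfp} this equals $\sup_n\Phi_h^n(\bot)$. I would prove by a nested induction on $n$ that $\Phi_{f\up{p}g}^n(\bot)=\Phi_f^n(\bot)\up{p}\Phi_g^n(\bot)$: the base case is $\bot=\bot\up{p}\bot$ (the axiom $a+_r a=a$), and in the step I would use the outer induction hypothesis (distributivity for the loop body $\cmd$) to split $\qet{\cmd}{\Phi_f^n(\bot)\up{p}\Phi_g^n(\bot)}$, and then the medial law to move $\up{p}$ past $\up{\sem{\bexp}}$. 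Taking suprema and using $\omega$-continuity of $+_p$ in both arguments (so that the supremum of the diagonal sequence splits) then yields $\lfp(\Phi_{f\up{p}g})=\lfp(\Phi_f)\up{p}\lfp(\Phi_g)$, completing the induction.

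I expect the main obstacle to be the while case: coordinating the two interleaved inductions (on command structure and on the fixed-point approximant index) and the repeated invocations of the medial law, together with a clean derivation of the medial law itself from the barycentric axioms.
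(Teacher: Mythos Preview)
Your proof is correct, and it takes a genuinely different route from the paper's.

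The paper does \emph{not} argue by structural induction here. Instead, it reasons ``semantically'': using the soundness theorem (\Cref{t:soundness}) together with \Cref{l:QET}, one has $\qet{\cmd}{h}(\sigma)=\ecost_\cmd(\sigma)\CSp\evalue_\cmd(h)(\sigma)$ for every $h$. Then $\evalue_\cmd(f\up{p}g)=\evalue_\cmd(f)\up{p}\evalue_\cmd(g)$ by linearity of expectation (\Cref{l:dsum-bari}), the idempotence $\ecost_\cmd(\sigma)=\ecost_\cmd(\sigma)+_p\ecost_\cmd(\sigma)$ is used on the cost side, and cost-structure axiom~(3) from \Cref{d:cs} swaps $\CSp$ past $+_p$ to conclude. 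No case analysis on $\cmd$ and no medial law are needed; all the structural work has already been absorbed into the soundness theorem.

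Your syntactic induction is a valid, self-contained alternative. Its advantage is that it is independent of the soundness machinery (\Cref{l:QET}, \Cref{l:QET:main-props}, \Cref{t:soundness}), so it could be placed earlier in the development and makes no appeal to the operational semantics. The price is the longer case analysis, the separate argument for the medial law $(a+_p b)+_q(c+_p d)=(a+_q c)+_p(b+_q d)$ from the barycentric axioms, and the nested induction on the fixpoint approximants in the while case. The paper's approach, by contrast, is a two-line computation once soundness is in hand, and it highlights that distributivity is ultimately just cost-structure axiom~(3) applied to the decomposition $\qet{\cmd}{-}=\ecost_\cmd\CSp\evalue_\cmd(-)$.
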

\begin{proof}
  We reason semantically.
  Let $p \in [0,1]$ be a constant. Fix $\sigma\in \memst$. We have
  \begin{align*}
    \fl{\qet{\cmd}{f \up{p} g}(\sigma)}\\
    & \cmt{\Cref{t:soundness} and \Cref{l:QET}} \\
    & = \ecost_\cmd\ltuple s,\ket{\varphi} \rtuple \CSp \evalue_\cmd(f \up{p} g) (\sigma) \\
    & \cmt{\Cref{l:dsum-bari}, unfolding $\evalue_\cmd$} \\
    & = (p \cdot \ecost_\cmd (\sigma) + (1-p) \cdot \ecost_\cmd (\sigma))\\
    & \qquad \CSp (\evalue_\cmd(f) (\sigma) \up{p} \evalue_\cmd(g) (\sigma))\\
    & \cmt{\Cref{d:cs}(3)} \\
    & = (\ecost_\cmd (\sigma) \CSp (\evalue_\cmd(f) (\sigma))\\
    & \qquad \up{p} (\ecost_\cmd (\sigma) \CSp \evalue_\cmd(g)(\sigma)) \\
    & \cmt{\Cref{t:soundness} and \Cref{l:QET}} \\
    & = \qet{\cmd}{f}(\sigma) +_p \qet{\cmd}{g}(\sigma)
      .\qedhere
  \end{align*}
\end{proof}

\begin{lemma}[Upper Invariants]\label{l:qt-ui}
  If $\sem{\neg \bexp} \cdot f \leq g\ \wedge\ \sem{\bexp} \cdot \qet{\cmd}{g} \leq g$
  then
  \[
    \qet{\while(\bexp)\{\cmd\}}{f} \leq g
    .
  \]
\end{lemma}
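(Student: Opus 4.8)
The plan is to exploit that $\qet{\while(\bexp)\{\cmd\}}{f}$ is, by definition (\Cref{fig:qet}), the least fixed point of the functional $\Phi \triangleq \lambda F.\,\qet{\cmd}{F} \up{\sem{\bexp}} f$, and that such a least fixed point is also the least \emph{prefixed} point; hence it suffices to check that the candidate bound $g$ is a prefixed point, i.e.\ $\Phi(g) \leq g$. First I would record that $\Phi$ is $\omega$-continuous — the map $F \mapsto \qet{\cmd}{F}$ is $\omega$-continuous by \Cref{l:qt-continuity}, and $\up{\sem{\bexp}}$ is $\omega$-continuous, hence monotone, in its first argument — so that \Cref{p:lfp} gives $\lfp(\Phi) = \sup_{n \in \N}\Phi^n(\bot)$.

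Next I would argue that $\lfp(\Phi)$ lies below every prefixed point $g$ of $\Phi$: since $\Phi$ is monotone and $\bot$ is the least element, a one-line induction on $n$ yields $\Phi^n(\bot) \leq g$ for all $n$ (base case $\bot \leq g$; step $\Phi^{n+1}(\bot) = \Phi(\Phi^n(\bot)) \leq \Phi(g) \leq g$), so $\lfp(\Phi) = \sup_n \Phi^n(\bot) \leq g$. It then remains to verify $\Phi(g) \leq g$, which I would do pointwise. Fixing a state $\sigma = \ltuple s,\ket{\varphi} \rtuple$ and using that $\sem{\bexp}^s \in \{0,1\}$, the barycentric mean defining $\Phi(g)(\sigma)$ collapses: when $\sem{\bexp}^s = 1$ one gets $\Phi(g)(\sigma) = \qet{\cmd}{g}(\sigma)$, and the hypothesis $\sem{\bexp}\cdot\qet{\cmd}{g} \leq g$ at $\sigma$ reads exactly $\qet{\cmd}{g}(\sigma) \leq g(\sigma)$; when $\sem{\bexp}^s = 0$ one gets $\Phi(g)(\sigma) = f(\sigma)$, and the hypothesis $\sem{\neg\bexp}\cdot f \leq g$ at $\sigma$ reads $f(\sigma) \leq g(\sigma)$. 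In both cases $\Phi(g)(\sigma) \leq g(\sigma)$, hence $\Phi(g) \leq g$ and therefore $\qet{\while(\bexp)\{\cmd\}}{f} = \lfp(\Phi) \leq g$.

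I do not anticipate any genuine obstacle here: the argument is essentially ``least fixed point $\Rightarrow$ least prefixed point, plus a case split on the guard''. The only point requiring slight care is that \Cref{p:lfp} as stated asserts only the least-\emph{fixed}-point property, so the prefixed-point bound has to be recovered via the explicit iteration $\sup_n \Phi^n(\bot)$ together with monotonicity of $\Phi$ — which in turn rests on \Cref{l:qt-monotone} and on monotonicity of the barycentric sum.
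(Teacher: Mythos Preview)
Your proposal is correct and follows essentially the same approach as the paper: define the functional $\chi_f = \lambda F.\,\qet{\cmd}{F} \up{\sem{\bexp}} f$, observe that the hypotheses make $g$ a prefixed point, and conclude via the least-prefixed-point property of $\lfp$. The paper's proof is simply terser --- it asserts without argument that ``the hypothesis yields $\chi_f(g) \leq g$'' and that the least fixed point lies below any prefixed point --- whereas you explicitly justify both steps (the case split on $\sem{\bexp}^s$ and the induction $\Phi^n(\bot) \leq g$ via \Cref{p:lfp}).
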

\begin{proof}
  Let $\chi_f = \lambda F.\qet{\cmd}{F} \up{\sem{\bexp}} f$, hence
  \[
    \qet{\while(\bexp)\{\cmd\}}{f} = \lfp(\chi_f)
    .
  \]
  The hypothesis yields $\chi_f(g) \leq g$.
  As the least-fixed point of any functional is bounded by any such prefix-point, in particular $g$,
  the lemma follows.
\end{proof}

\section{Cost transformer laws}
In \Cref{fig:ctlaws}, we exhibit cost transformer laws that that make reasoning about quantum cost expectations easier.

The \ref{idents:rext-sep} law allows one to reason independently about the cost, $\qect{\cmd}{\underline{0}}$, and expectation of $f$, $\qevr{\cmd}{f}$.
 It enables a form of modular reasoning, e.g., when reasoning about the expected cost
 \[
   \qect{\cmd_1\sep\cmd_2}{\underline{0}} = \qect{\cmd_1}{\qect{\cmd_2}{\underline{0}}},
 \]
 of sequentially executed commands,
 the law states that it is sufficient bind the cost of $\cmd_1$ and $\cmd_2$ separately,
 and then investigate how $\cmd_1$ changes the latter in expectation.
 This can then be also combined with upper-invariants to reason about the cost of loops inductively (see \cite{AMS20}).
The \ref{idents:rext-lin} Law is derived from ``linearity of expectations''. Concerning the cost transformer, it is in general an inequality
because the cost is accounted twice in the right-hand side.

The \ref{idents:constancy} Law is inspired by the simple but useful,
equally named rule in Hoare logic (for classical programs),
stating that additional assumptions $P$ on initial states can be pushed to final states,
as long as $P$ is independent on the memory modified by the program fragment $\cmd$
under consideration, in notation $P \perp \cmd$. In the case of
weakest precondition calculi, this reads as $wp[\cmd]\{P \land Q\} = P \land wp[\cmd]\{Q\}$,
provided $P \perp \cmd$.
Moving from predicates $P$ to cost functions $f$, where conjunction is naturally
interpreted as multiplication, gives then rise to our law of constancy.
The notation $f \perp \cmd$ means that the value of $f$ remains unchanged
during evaluation of $\cmd$.
Syntactically, this property can be ensured by requiring that the expectation $f$ is constant in the variables in $\Bool(\cmd) \cup \Var(\cmd)$ assigned by $\cmd$, and in the qubits in $\Qubits(\cmd)$ measured within $\cmd$.
E.g., for $f = \lambda \ltuple s, \_ \rtuple$, $f \perp \y \asg \x + 3$ but
$f \perp \x \asg \x + 3$.
More precisely, $f \perp \cmd$ holds if  for all $\ltuple s,\ket{\varphi} \rtuple \in \memst$,
(i)~$f\ltuple s[\x^\K := a],\ket{\varphi} \rtuple = f$
for all $\x \in Var$ and $a \in \sem{\K}$, and
(ii)~$f\ltuple s[\x : = 0],t_0^\q\ket{\varphi} \rtuple = f = f\ltuple s[\x : = 1],t_1^\q\ket{\varphi} \rtuple$
for all $\q \in Qubit$.
Note that $f \perp \cmd$ holds in particular when $f$ is constant.

\ref{idents:constprop} Law falls in the same line of reasoning, and generalises
an equally named law from \cite{KKMO16}. Here,
the factor $\qevr{\cmd}{\underline{1}}$ to $f$ gives the termination probability of $\cmd$ ---
Hence, $\qect{\cmd}{f + g} = f + \qect{\cmd}{g}$ holds for $f \perp \cmd$, when $\cmd$ is almost-surely terminating.

We can show a result similar to \Cref{thm:idents}.
\begin{theorem}\label{thm:ctlaw}
  All cost transformer laws listed in \Cref{fig:idents} hold.
\end{theorem}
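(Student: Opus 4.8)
The plan is to reduce every law of \Cref{fig:ctlaws} to an elementary fact about the expected cost $\ecost_\cmd$ and the expected value operator $\evalue_\cmd$, by invoking soundness (\Cref{t:soundness}) together with \Cref{l:QET}. Concretely, for the cost structure $(\Rext,+)$ we have $\qect{\cmd}{f} = \ecost_\cmd \CSp \evalue_\cmd(f) = \ecost_\cmd + \evalue_\cmd(f)$ pointwise, while for the forgetful cost structure $(\Rext,+_{\mathtt f})$ the transformer $\qevr{\cdot}{\cdot}$ satisfies $\qevr{\cmd}{f} = \evalue_\cmd(f)$ by \Cref{c:ect-soundness}. Thus \ref{idents:rext-sep} amounts to the identity $\ecost_\cmd + \evalue_\cmd(f) = \qect{\cmd}{\underline 0} + \qevr{\cmd}{f}$, which holds once we observe $\qect{\cmd}{\underline 0} = \ecost_\cmd + \evalue_\cmd(\underline 0) = \ecost_\cmd$, using that $\evalue_\cmd(\underline 0) = \underline 0$ (the expectation of the zero function under any subdistribution is $0$). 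Likewise, \ref{idents:rext-lin} reduces to linearity of expectation, $\E{\delta}{f+g} = \E{\delta}{f} + \E{\delta}{g}$ for every subdistribution $\delta$, hence $\evalue_\cmd(f+g) = \evalue_\cmd(f) + \evalue_\cmd(g)$; for $\qevr{\cdot}{\cdot}$ this yields an equality, whereas for $\qect{\cdot}{\cdot}$ it gives only the stated inequality, since $\ecost_\cmd$ is then counted twice on the right-hand side.

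For the two constancy-style laws the crux is a single \emph{frame lemma}: if $f \perp \cmd$, then $\evalue_\cmd(f \ucdot g) = f \ucdot \evalue_\cmd(g)$ for every $g$, equivalently $\qevr{\cmd}{f \ucdot g} = f \ucdot \qevr{\cmd}{g}$. I would prove this by induction on $\cmd$ using the defining equations of \Cref{fig:qet}. The atomic cases are immediate from the definition of $f \perp \cmd$: for $\x \asg \e$ the hypothesis gives $f[\x := \e] = f$, so $\qevr{\x\asg\e}{f \ucdot g} = (f \ucdot g)[\x:=\e] = f \ucdot g[\x:=\e]$; for $\x \asg \meas(\q)$ one uses $f[\x:=k;\oper{M}_k^\q] = f$ together with the identity $(f\ucdot a) \up{p} (f \ucdot b) = f \ucdot (a \up{p} b)$, valid in $\Rext$. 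Sequential composition and conditionals follow from the induction hypotheses (noting that $f \perp \cmd_1\sep\cmd_2$ implies $f \perp \cmd_1$ and $f \perp \cmd_2$, since the variables assigned or measured by a subcommand are among those of the whole command) together with the distributivity law \ref{idents:distri}. For the loop $\while(\bexp)\{\cmd\}$, writing $\Psi(F) = \qevr{\cmd}{F} \up{\sem{\bexp}}(f\ucdot g)$ and $\Phi(F) = \qevr{\cmd}{F}\up{\sem{\bexp}}g$, one shows by induction on $n$ that $\Psi^n(\bot) = f \ucdot \Phi^n(\bot)$, the step using the induction hypothesis on $\cmd$ and again $(f\ucdot a)\up{p}(f\ucdot b) = f\ucdot(a\up{p}b)$, and then concludes $\lfp\Psi = \sup_n \Psi^n(\bot) = f \ucdot \sup_n \Phi^n(\bot) = f \ucdot \lfp\Phi$ via \Cref{p:lfp} and continuity of multiplication in $\Rext$.

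Given the frame lemma, \ref{idents:constancy} is exactly the statement just proved. For \ref{idents:constprop} I would combine it with linearity: for $f \perp \cmd$, instantiating $g = \underline 1$ gives $\evalue_\cmd(f) = f \ucdot \evalue_\cmd(\underline 1) = f \ucdot \qevr{\cmd}{\underline 1}$, so $\qect{\cmd}{f+g} = \ecost_\cmd + \evalue_\cmd(f) + \evalue_\cmd(g) = f \ucdot \qevr{\cmd}{\underline 1} + \qect{\cmd}{g}$; when $\cmd$ is almost-surely terminating $\qevr{\cmd}{\underline 1} = \underline 1$ and this collapses to $\qect{\cmd}{f+g} = f + \qect{\cmd}{g}$.

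The main obstacle will be the loop case of the frame lemma, where care is needed both with the fixed-point reasoning and with the arithmetic of $\Rext$: one must check that multiplication by a (possibly infinite) $\Rext$-valued function distributes over the barycentric operations $\up{p}$ and is $\omega$-continuous, including the corner case $\infty \cdot 0 = 0$, and that $f \perp \cmd$ indeed propagates to the loop body and to every continuation produced during unfolding. The remaining verifications are routine.
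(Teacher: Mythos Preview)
Your approach is essentially the paper's: separation and linearity are proved semantically via \Cref{t:soundness} and \Cref{l:QET}, the key \emph{frame lemma} $\qevr{\cmd}{f \ucdot g} = f \ucdot \qevr{\cmd}{g}$ for $f \perp \cmd$ is established by structural induction on $\cmd$ (with exactly the case analysis and loop argument you sketch), and constant propagation is derived from it.

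One correction: the frame lemma is \emph{not} ``exactly'' \ref{idents:constancy}. The latter concerns $\qect{\cmd}{f \ucdot g}$, not $\qevr{\cmd}{f \ucdot g}$, and has two parts. The equality follows from your frame lemma only after an application of \ref{idents:rext-sep}:
\[
  \qect{\cmd}{f \ucdot g} = \qect{\cmd}{\underline 0} + \qevr{\cmd}{f \ucdot g} = \qect{\cmd}{\underline 0} + f \ucdot \qevr{\cmd}{g}.
\]
You also do not address the inequality clause $\leq \umin(\underline 1,f) \ucdot \qect{\cmd}{g}$, which requires a separate (elementary) argument comparing $\qect{\cmd}{\underline 0} + f \ucdot \qevr{\cmd}{g}$ against $\umin(\underline 1,f) \ucdot (\qect{\cmd}{\underline 0} + \qevr{\cmd}{g})$. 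Similarly, for \ref{idents:constprop} you derive the equality but should note that the inequality $\leq f + \qect{\cmd}{g}$ is immediate from $\qevr{\cmd}{\underline 1} \leq \underline 1$ (without any termination hypothesis). These are small omissions; the overall plan is sound and matches the paper.
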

We prove this result in the remainder of this appendix.

\begin{figure*}[t]
\hrulefill
\begin{align*}
\mparbox{3mm}{\qevr{\cdot}{\cdot} : \Progs \to (\Rext)^\memst \to (\Rext)^\memst}\\
\mparbox{3mm}{\qect{\cdot}{\cdot} : \Progs \to (\Rext)^\memst \to (\Rext)^\memst}\\
 \quad
 & \law[idents:rext-sep]{separation}            &  & \qect{\cmd}{f} = \qect{\cmd}{\underline{0}} \up{} \qevr{\cmd}{f}                                                                \\
 & \law[idents:rext-lin]{linearity}             &  & \qect{\cmd}{f \up{} g} = \qect{\cmd}{f} \up{} \qevr{\cmd}{g} \leq \qect{\cmd}{f} \up{} \qect{\cmd}{g}                                   \\
 & \law[idents:constancy]{constancy}            &  & f \perp \cmd \implies \qect{\cmd}{f \ucdot g} = \qect{\cmd}{\underline{0}} \up{} f \ucdot \qevr{\cmd}{g} \leq \min(\underline{1},f) \ucdot \qect{\cmd}{g}\\
 & \law[idents:constprop]{constant propagation} &  & f \perp \cmd \implies \qect{\cmd}{f \up{} g} = \qevr{\cmd}{\underline{1}} \ucdot f + \qect{\cmd}{g} \leq f \up{} \qect{\cmd}{g}
\end{align*}
\hrulefill
\caption{Cost transformer laws.}
\label{fig:ctlaws}
\end{figure*}

\begin{lemma}[Separation]\label{l:qt-separation}
  \[
    \qect{\cmd}{f} = \qect{\cmd}{\underline{0}} \up{} \qevr{\cmd}{f}
  \]
\end{lemma}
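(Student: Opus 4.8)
The plan is to reason semantically, in exactly the style used for the proof of the \ref{idents:distri} Law (\Cref{l:qt-distributivity}): rather than inducting on $\cmd$, I would pull both sides back through the soundness theorem to the semantic transformer and then appeal to \Cref{l:QET}. Fix $\sigma \in \memst$ and instantiate the cost structure to $(\Rext, +)$, which is precisely the one defining $\qect{\cdot}{\cdot}$. Then \Cref{t:soundness} together with \Cref{l:QET} gives
\[
  \qect{\cmd}{f}(\sigma) = \ecost_\cmd(\sigma) + \evalue_\cmd(f)(\sigma),
\]
where $\ecost_\cmd$ and $\evalue_\cmd$ are the operationally defined expected cost and expected value functions from \Cref{sec:os}. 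Note that $\evalue_\cmd$ depends only on $\nf_\to$ and the convex-sum structure of the underlying Kegelspitze, not on the chosen cost addition.

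Next I would identify the two summands with the two transformers on the right-hand side of the claimed identity. By \Cref{c:ect-soundness}(1) we have $\ecost_\cmd(\sigma) = \qect{\cmd}{\underline{0}}(\sigma)$; and since $\qevr{\cdot}{\cdot}$ is by definition the instance of the expected value transformer $\qev{\cdot}{\cdot}$ obtained with Kegelspitze $\CSd = \Rext$ and forgetful cost addition $+_{\mathtt{f}}$, \Cref{c:ect-soundness}(2) gives $\evalue_\cmd(f)(\sigma) = \qevr{\cmd}{f}(\sigma)$. Combining these, and recalling that $\up{}$ is the pointwise extension of $+$ on $\Rext$,
\[
  \qect{\cmd}{f}(\sigma) = \qect{\cmd}{\underline{0}}(\sigma) + \qevr{\cmd}{f}(\sigma) = \bigl(\qect{\cmd}{\underline{0}} \up{} \qevr{\cmd}{f}\bigr)(\sigma),
\]
and since $\sigma \in \memst$ was arbitrary, the lemma follows.

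I do not expect a genuine obstacle: the statement is essentially a repackaging of \Cref{l:QET}, once one observes that $c +_{\mathtt{f}} r = r$ collapses the expected value transformer over $\Rext$ to $\evalue_\cmd$, and that $\evalue_\cmd(\underline{0}) = \underline{0}$ collapses $\qect{\cmd}{\underline{0}}$ to $\ecost_\cmd$. The only point requiring a little care is to make explicit that these two degenerations are valid, i.e. that the expected value transformer really does compute $\evalue_\cmd$ independently of the cost structure, which is immediate from \Cref{c:ect-soundness}. An alternative but slightly longer route would be a direct structural induction on $\cmd$ using the identities of \Cref{d:cs}; I would prefer the semantic argument above since it avoids reproving facts already contained in \Cref{l:QET}.
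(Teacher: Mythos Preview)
Your proposal is correct and follows essentially the same route as the paper: apply \Cref{t:soundness} together with \Cref{l:QET} to obtain $\qect{\cmd}{f} = \ecost_\cmd \up{} \evalue_\cmd(f)$, then invoke \Cref{c:ect-soundness} to identify the two summands with $\qect{\cmd}{\underline{0}}$ and $\qevr{\cmd}{f}$. Your write-up is more detailed than the paper's two-line proof, but the argument is the same.
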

\begin{proof}
  By \Cref{t:soundness} and \Cref{l:QET},
  \[
    \qect{\cmd}{f} = \ecost_\cmd \up{} \evalue_\cmd(f)
    ,
  \]
  the lemma then follows from \Cref{c:ect-soundness}.
\end{proof}

\begin{lemma}[Linearity]\label{l:qt-linearity}
  \begin{align*}
    \qect{\cmd}{f \up{} g} & = \qect{\cmd}{f} \up{} \qevr{\cmd}{g} \\
                       & \leq \qect{\cmd}{f} \up{} \qect{\cmd}{g}
                         .
  \end{align*}
\end{lemma}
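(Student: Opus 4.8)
The plan is to reduce the statement to the semantic characterisations already proved, namely \Cref{l:QET} together with \Cref{t:soundness} and \Cref{c:ect-soundness}. Recall that the ambient cost structure here is $(\Rext, +)$, so that $\CSp$ is ordinary addition on $\Rext$. Hence, for every statement $\cmd$ and every $h \in (\Rext)^\memst$, combining \Cref{t:soundness} with \Cref{l:QET} gives $\qect{\cmd}{h} = \ecost_\cmd \up{} \evalue_\cmd(h)$ (pointwise), while \Cref{c:ect-soundness}(2) gives $\qevr{\cmd}{h} = \evalue_\cmd(h)$. So both the equality and the inequality in the statement become assertions purely about $\ecost_\cmd$ and $\evalue_\cmd$.

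The one genuine ingredient is additivity of $\evalue_\cmd$, i.e. $\evalue_\cmd(f \up{} g) = \evalue_\cmd(f) \up{} \evalue_\cmd(g)$. Unfolding $\evalue_\cmd(h)(\sigma) = \E{\nf_\to(\cmd,\sigma)}{h}$, this amounts to ``linearity of expectations'' $\E{\delta}{f \up{} g} = \E{\delta}{f} \up{} \E{\delta}{g}$ for an arbitrary subdistribution $\delta$ over $\memst$, with $\up{}$ being addition in $\Rext$. This follows directly from $\E{\delta}{h} = \sum_{a \in \supp(\delta)} \delta(a) \cdot h(a)$ by rearranging a countable sum of non-negative extended reals, a manipulation that remains valid in $\Rext$ even when some summands are $\infty$. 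I would state this as a short sub-claim before the main computation.

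With that in hand, the equality follows by bookkeeping: $\qect{\cmd}{f \up{} g} = \ecost_\cmd \up{} \evalue_\cmd(f \up{} g) = \ecost_\cmd \up{} \evalue_\cmd(f) \up{} \evalue_\cmd(g) = (\ecost_\cmd \up{} \evalue_\cmd(f)) \up{} \evalue_\cmd(g) = \qect{\cmd}{f} \up{} \qevr{\cmd}{g}$, using associativity of $+$ on $\Rext$. For the inequality, note $\qevr{\cmd}{g} = \evalue_\cmd(g) \leq \ecost_\cmd \up{} \evalue_\cmd(g) = \qect{\cmd}{g}$ since $\ecost_\cmd$ is valued in $\Rext$ and hence $\geq \underline 0$; adding $\qect{\cmd}{f}$ on the left and using monotonicity of $+$ on $\Rext$ yields $\qect{\cmd}{f} \up{} \qevr{\cmd}{g} \leq \qect{\cmd}{f} \up{} \qect{\cmd}{g}$. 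The only point requiring mild care is ensuring the rearrangement of the defining sum of $\E{\delta}{-}$ is legitimate in the presence of the value $\infty$; this is the main — though minor — obstacle, everything else being a direct consequence of soundness and adequacy.
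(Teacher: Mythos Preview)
Your proposal is correct and follows essentially the same semantic route as the paper: both reduce the claim via \Cref{t:soundness}, \Cref{l:QET}, and \Cref{c:ect-soundness} to additivity of $\evalue_\cmd$ (``linearity of expectations'') plus the observation $\qevr{\cmd}{g} \leq \qect{\cmd}{g}$ coming from $\ecost_\cmd \geq \underline{0}$. Your write-up is in fact more careful than the paper's about the rearrangement of the countable sum in $\Rext$, but the underlying argument is the same.
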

\begin{proof}
An immediate consequence of \Cref{t:soundness} and \Cref{l:QET}
is that $\qevr{\cmd}{g} \leq \qevr{\cmd}{f}$. It is thus sufficient
to verify only the equality. We again proceed semantically:
  \begin{align*}
    \fl{\qect{\cmd}{f \up{} g}(\sigma)}\\
    & \cmt{\Cref{t:soundness} and \Cref{l:QET}} \\
    & = \ecost_\cmd(\sigma) + \evalue_\cmd(f \up{} g) (\sigma) \\
    & \cmt{linearity of expectations} \\
    & = \ecost_\cmd(\sigma) + \evalue_\cmd(f) (\sigma) + \evalue_\cmd(g) (\sigma)\\
    & \cmt{\Cref{t:soundness} and \Cref{l:QET}} \\
    & = \qect{\cmd}{f}(\sigma) + \evalue_\cmd(g) (\sigma) \\
    & \cmt{\Cref{c:ect-soundness}} \\
    & = \qect{\cmd}{f}(\sigma)  + \qevr{\cmd}{g}
      .\qedhere
  \end{align*}
\end{proof}

\begin{lemma}[Constancy]\label{l:qt-constancy}
  Suppose $f \perp \cmd$. Then
  \begin{enumerate}
  \item\label{l:qt-constancy:qev} $\qevr{\cmd}{f \ucdot g} = f \cdot \qevr{\cmd}{f \ucdot g}$;
  \item\label{l:qt-constancy:qect} $\qect{\cmd}{f \ucdot g} = \qect{\cmd}{\underline{0}} \up{} f \ucdot \qevr{\cmd}{g}$\\
    ${} \qquad \leq \umin(\underline{1},f) \ucdot \qect{\cmd}{g}$.
  \end{enumerate}
\end{lemma}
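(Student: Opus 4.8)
The plan is to reduce everything to the semantic description of the transformers. By \Cref{t:soundness} and \Cref{l:QET} we have $\qect{\cmd}{h} = \ecost_\cmd \up{} \evalue_\cmd(h)$ for every $h \in (\Rext)^\memst$, and by \Cref{c:ect-soundness} moreover $\qect{\cmd}{\underline 0} = \ecost_\cmd$ and $\qevr{\cmd}{h} = \evalue_\cmd(h) = \lambda\sigma.\,\E{\nf_\to(\cmd,\sigma)}{h}$. Consequently item (1) --- the identity $\qevr{\cmd}{f\ucdot g} = f\ucdot\qevr{\cmd}{g}$ --- reduces to the pointwise statement $\E{\nf_\to(\cmd,\sigma)}{f\ucdot g} = f(\sigma)\cdot\E{\nf_\to(\cmd,\sigma)}{g}$ for all $\sigma \in \memst$, and once item (1) is in place, item (2) will follow by invoking the \ref{idents:rext-sep} Law (\Cref{l:qt-separation}), first applied to $f\ucdot g$ and then to $g$.

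The heart of the argument is the following Invariance Lemma: if $f \perp \cmd$, then $f(\tau) = f(\sigma)$ for every $\sigma\in\memst$ and every $\tau\in\supp(\nf_\to(\cmd,\sigma))$. Granting this, the pointwise identity above is immediate: on $\supp(\nf_\to(\cmd,\sigma))$ one has $(f\ucdot g)(\tau) = f(\tau)\cdot g(\tau) = f(\sigma)\cdot g(\tau)$, so the constant factor $f(\sigma)\in\Rext$ can be pulled out of the countable convex sum defining $\E{\nf_\to(\cmd,\sigma)}{-}$, using only that multiplication by a fixed element of $\Rext$ is $\omega$-continuous and hence distributes over countable sums. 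To prove the Invariance Lemma I would argue on the approximants $\nf[n]_\to$ introduced in \Cref{app:QET}, recalling $\nf_\to(\mu) = \sup_n\nf[n]_\to(\mu)$ and hence $\supp(\nf_\to(\mu)) = \bigcup_n\supp(\nf[n]_\to(\mu))$, and proceed by induction on $n$. In the inductive step, if $\ltuple\cmd,\sigma\rtuple \toop{c} \lmulti p_i:\nu_i\rmulti_{i\in I}$ then $\nf[n+1]_\to(\ltuple\cmd,\sigma\rtuple) = \sum_i p_i\cdot\nf[n]_\to(\nu_i)$, so it suffices to check, by case analysis on the rule of \Cref{fig:os} applied, that (a) each $\nu_i$ is of the form $\ltuple\cmd_i,\sigma_i\rtuple$ or $\sigma_i$ with $f(\sigma_i)=f(\sigma)$, and (b) whenever $\nu_i=\ltuple\cmd_i,\sigma_i\rtuple$ one still has $f\perp\cmd_i$. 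Claim (a) holds because the only state-changing rules are (Exp), (Op) and (Meas), and the defining clauses of $f\perp\cmd$ say precisely that $f$ is insensitive to the classical variable updated by (Exp), to the unitary applied by (Op), and to the measurement performed by (Meas); claim (b) holds because the footprint of $\cmd$ --- the classical variables it assigns together with the qubits it touches --- does not grow under a reduction step.

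From the Invariance Lemma, item (1) follows as just sketched, and item (2) is then a short calculation: by the \ref{idents:rext-sep} Law applied to $f\ucdot g$, $\qect{\cmd}{f\ucdot g} = \qect{\cmd}{\underline 0}\up{}\qevr{\cmd}{f\ucdot g}$, and substituting item (1) yields the stated equality $\qect{\cmd}{f\ucdot g} = \qect{\cmd}{\underline 0}\up{}f\ucdot\qevr{\cmd}{g}$. For the accompanying bound, apply the \ref{idents:rext-sep} Law to $g$, obtaining $\umin(\underline 1,f)\ucdot\qect{\cmd}{g} = \umin(\underline 1,f)\ucdot\qect{\cmd}{\underline 0}\up{}\umin(\underline 1,f)\ucdot\qevr{\cmd}{g}$, and compare this with $\qect{\cmd}{\underline 0}\up{}f\ucdot\qevr{\cmd}{g}$ term by term, using the pointwise estimates $\umin(\underline 1,f)\le\underline 1$ and $\umin(\underline 1,f)\le f$ together with monotonicity of $\up{}$ and of scalar multiplication in $\Rext$.

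I expect the main obstacle to be the Invariance Lemma, and specifically claim (b) above: one must make the notion of footprint precise and show it is non-increasing under a reduction step, the only slightly delicate cases being the (Seq) rule --- where the residual of $\cmd_1\sep\cmd_2$ is $\cmd_1'\sep\cmd_2$ for a residual $\cmd_1'$ of $\cmd_1$ --- and the unfolding of a loop, whose residual $\cmd'\sep\while(\bexp)\{\cmd'\}$ has the same footprint as $\while(\bexp)\{\cmd'\}$; this is what makes $f\perp\cmd$ propagate to every residual encountered along the reduction. Everything else, including the commutation of multiplication past the fixed point implicit in $\qevr{\while(\bexp)\{\cmd\}}{-}$, is routine and in fact subsumed by the expectation-level argument, since $\nf_\to$ already encapsulates the loop's limiting behaviour.
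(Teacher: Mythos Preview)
Your proposal is correct in outline but takes a genuinely different route from the paper for item~(1). The paper proceeds by \emph{structural induction on $\cmd$}, checking each syntactic case of the transformer in \Cref{fig:qet} directly (e.g.\ for $\x \asg \meas(\q)$ it uses $f[\x := 0;\oper{M}_0^\q] = f = f[\x := 1;\oper{M}_1^\q]$ and then factors $f$ out of the barycentric sum; for loops it does a side induction on the least-fixed-point approximants $\chi_g^n(\bot)$). You instead argue \emph{semantically}, via an Invariance Lemma on the normal-form distribution and an induction on the number of reduction steps. Both arguments are viable; yours gives a cleaner conceptual explanation of why constancy holds, but at the price of having to make the ``footprint'' notion precise and show it is non-increasing under reduction --- something the paper sidesteps entirely, since structural induction never needs to speak about residual commands. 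For item~(2), both you and the paper derive the equality from item~(1) via the \ref{idents:rext-sep} Law in exactly the same way.

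One caution on the inequality in item~(2): your term-by-term comparison, using $\umin(\underline 1,f)\le\underline 1$ and $\umin(\underline 1,f)\le f$, actually establishes the \emph{reverse} direction $\umin(\underline 1,f)\ucdot\qect{\cmd}{g} \le \qect{\cmd}{\underline 0}\up{} f\ucdot\qevr{\cmd}{g}$, not the stated $\le$. (Take $f=\underline 2$: then $\umin(\underline 1,f)=\underline 1$, so the right-hand side is $\qect{\cmd}{g}=\ecost_\cmd+\evalue_\cmd(g)$, while the left-hand side is $\ecost_\cmd+2\,\evalue_\cmd(g)$.) The paper's own proof asserts the same $\le$ step without justification and has the identical issue; as printed the inequality goes the wrong way, and both arguments in fact yield $\ge$.
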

\begin{proof}
  Suppose $f \perp \cmd$. Note that~\eqref{l:qt-constancy:qect} is a consequence of \eqref{l:qt-constancy:qev}:
  \begin{align*}
    \fl{\qect{\cmd}{f \ucdot g}} \\
    & \cmt{\Cref{l:qt-separation}} \\
    & = \qect{\cmd}{\underline{0}} \up{} \qevr{\cmd}{f \ucdot g} \\
    & \cmt{identity \eqref{l:qt-constancy:qev}} \\
    & = \qect{\cmd}{\underline{0}} \up{} f \ucdot \qevr{\cmd}{g} \\
    & \leq \umin(\underline{1},f) \ucdot (\qect{\cmd}{\underline{0}} \up{} \qevr{\cmd}{g}) \\
    & \cmt{\Cref{l:qt-separation}} \\
    & = \umin(\underline{1},f) \ucdot \qect{\cmd}{g}
  \end{align*}

  It is thus sufficient to prove \eqref{l:qt-constancy:qev},
  which we do by  induction on $\cmd$.
  For $f,g,h : (\Rext)^\memst$ and $p : [0,1]^\memst$,
  we will employ the identity
  \begin{align*}
    f \ucdot g \up{p} f \ucdot h
    & = p \ucdot f \ucdot g \up{} (\underline{1} \ud p) \ucdot f \ucdot h \\
    & = f \ucdot (p \ucdot g \up{} (\underline{1} \ud p) \ucdot h) \\
    & = f \ucdot (g \up{p} h)
      ,
  \end{align*}
  in several cases.
  \begin{proofcases}
    \proofcase{$\skp$}
    The case is trivial.
    \proofcase{$\x \asg \e$}
    As $f \perp (\x \asg \e)$ we have $f[\x := \e] = f$.
    Thus
    \begin{align*}
      \qevr{\x \asg \e}{f \ucdot g}
      & = (f \ucdot g)[\x := \e] \\
      & = f \ucdot g[\x := \e] \\
      & = f \ucdot \qevr{\x \asg \e}{g}
        .
    \end{align*}
       \proofcase{$\x \asg \meas(\q)$}
    In this case we have $f[\x := 0;\oper{M}_0^\q] = f = f[\x := 1;\oper{M}_1^\q]$.
    We conclude then as
    \begin{align*}
      \fl{\qevr{\x \asg \meas(\q)}{f \ucdot g}} \\
      & = (f \ucdot g)[\x := 0;\oper{M}_0^\q] \up{p_0^\q} (f \ucdot g)[\x := 1;\oper{M}_1^\q] \\
      & = f \ucdot g[\x := 0;\oper{M}_0^\q] \up{p_0^\q} f \ucdot g[\x := 1;\oper{M}_1^\q] \\
      & = f \ucdot (g[\x := 0;\oper{M}_0^\q] \up{p_0^\q} g[\x := 1;\oper{M}_1^\q]) \\
      & = f \ucdot \fl{\qevr{\x \asg \meas(\q)}{g}}
        .
    \end{align*}
    \proofcase{$\cons(\nexp)$}
    Then
    \begin{align*}
      \fl{\qevr{\cons(\nexp)}{f \cdot g}} \\
      & = \umax(\sem{a},\underline{0}) \up{\mathsf{f}} f \ucdot g \\
      & = f \ucdot g \\
      & = f \ucdot \qevr{\cons(\nexp)}{g} .
    \end{align*}
    \proofcase{$\cmd_1 \sep \cmd_2$}
    Note that also $f \perp \cmd_1$ and $f \perp \cmd_2$, by assumption.
    Thus we can conclude via induction hypothesis:
    \begin{align*}
      \fl{\qevr{\cmd_1 \sep \cmd_2}{f \ucdot g}}\\
      & = \qevr{\cmd_1}{\qevr{\cmd_2}{f \ucdot g}} \\
      & \cmt{induction hypothesis on $\cmd_2$} \\
      & = \qevr{\cmd_1}{f \ucdot \qevr{\cmd_2}{g}} \\
      & \cmt{induction hypothesis on $\cmd_1$} \\
      & = f \ucdot \qevr{\cmd_1}{\qevr{\cmd_2}{g}}
        .
    \end{align*}
    \proofcase{$\ifa (\bexp) \{\cmd_1\} \elsea \{\cmd_2\}$}
    Again, $f \perp \cmd_1$ and $f \perp \cmd_2$, by assumption,
    and hence via induction hypothesis:
    \begin{align*}
      \fl{\qevr{\ifa (\bexp) \{\cmd_1\} \elsea \{\cmd_2\}}{f \ucdot g}}\\
      & = \qevr{\cmd_1}{f \ucdot g} \up{\sem{\bexp}} \qevr{\cmd_2}{f \ucdot g} \\
      & \cmt{induction hypotheses} \\
      & = f \ucdot \qevr{\cmd_1}{g} \up{\sem{\bexp}} f \cdot \qevr{\cmd_2}{g} \\
      & = f \ucdot (\qevr{\cmd_1}{g} \up{\sem{\bexp}} \qevr{\cmd_2}{g}) \\
      & = f \ucdot \qevr{\ifa (\bexp) \{\cmd_1\} \elsea \{\cmd_2\}}{g}
        .
    \end{align*}
    \proofcase{$\while(\bexp)\{ \cmd \}$}
    Let $\chi_h = \lambda F.\qet{\cmd}{F} \up{\sem{\bexp}} h$, hence
    for any $h$,
    \[
      \qet{\while(\bexp)\{\cmd\}}{h} = \lfp(\chi_h) = \sup_n \chi_h^n(\bot)
      .
    \]
    It is thus sufficient to prove
    \[
      \forall n \in \N. \chi_{f+g}^n(\bot) = f \ucdot \chi_g^n(\bot) .
    \]
    The proof is by induction on $n$. In the base case $n = 0$,
    notice $\bot = \underline{0}$ and trivially $\underline{0} = f \ucdot \underline{0}$.
    Hence consider the inductive step. By assumption also $f \perp \cmd$ holds.
    Thus
    \begin{align*}
      \chi_{f\up{}g}^{n+1}(\bot)
      & = \qet{\cmd}{\chi_{f+g}^{n}(\bot)} \up{\sem{\bexp}} f \ucdot g \\
      & \cmt{side induction hypothesis} \\
      & = \qet{\cmd}{f \cdot \chi_{g}^{n}(\bot)} \up{\sem{\bexp}} f \ucdot g \\
      & \cmt{induction hypothesis on $\cmd$} \\
      & = f \ucdot \qet{\cmd}{\chi_{g}^{n}(\bot)} \up{\sem{\bexp}} f \ucdot g \\
      & = f \ucdot (\qet{\cmd}{\chi_{g}^{n}(\bot)} \up{\sem{\bexp}} g) \\
      & = f \ucdot \chi_{g}^{n+1}(\bot)
        .
        \qedhere
    \end{align*}
  \end{proofcases}
\end{proof}

\begin{lemma}[Constant Propagation]\label{l:qt-constprop}
  Suppose $f \perp \cmd$.
  Then
  \begin{align*}
    \qect{\cmd}{f + g}
    & = \qevr{\cmd}{\underline{1}} \cdot f + \qect{\cmd}{g} \\
    & \leq f + \qect{\cmd}{g}
      .
  \end{align*}
\end{lemma}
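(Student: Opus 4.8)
The plan is to avoid a fresh induction on $\cmd$ and instead derive the identity purely from the algebraic laws already established --- Separation (\Cref{l:qt-separation}), Linearity (\Cref{l:qt-linearity}) and Constancy (\Cref{l:qt-constancy}) --- together with the adequacy statement \Cref{c:ect-soundness}. The crucial observation is that, when $f \perp \cmd$, the quantity $\qevr{\cmd}{f}$ degenerates to $\qevr{\cmd}{\underline{1}} \ucdot f$, and $\qevr{\cmd}{\underline{1}}$ is precisely the termination (sub)probability of $\cmd$, hence bounded above by $\underline{1}$.

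First I would apply the Linearity Law in its ``mirrored'' form: since pointwise addition on $(\Rext)^\memst$ is commutative, $f \up{} g = g \up{} f$, so \Cref{l:qt-linearity} gives $\qect{\cmd}{f \up{} g} = \qect{\cmd}{g} \up{} \qevr{\cmd}{f}$. It then remains to rewrite $\qevr{\cmd}{f}$. Writing $f = f \ucdot \underline{1}$ pointwise and invoking part~\eqref{l:qt-constancy:qev} of \Cref{l:qt-constancy} with the constant factor taken to be $f$ (legitimate exactly because $f \perp \cmd$) and continuation $\underline{1}$, we obtain $\qevr{\cmd}{f} = \qevr{\cmd}{f \ucdot \underline{1}} = f \ucdot \qevr{\cmd}{\underline{1}} = \qevr{\cmd}{\underline{1}} \ucdot f$, the last step using commutativity of pointwise multiplication in $\Rext$. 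Substituting back yields the claimed equality $\qect{\cmd}{f \up{} g} = \qevr{\cmd}{\underline{1}} \ucdot f \up{} \qect{\cmd}{g}$.

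For the inequality I would argue that $\qevr{\cmd}{\underline{1}} \leq \underline{1}$ pointwise: by part~(2) of \Cref{c:ect-soundness}, $\qevr{\cmd}{\underline{1}}(\sigma) = \evalue_\cmd(\underline{1})(\sigma) = \E{\nf_\to(\cmd,\sigma)}{\underline{1}} = \sum_{a \in \supp(\nf_\to(\cmd,\sigma))} \nf_\to(\cmd,\sigma)(a) \leq 1$, since $\nf_\to(\cmd,\sigma)$ is a subdistribution over $\memst$. Therefore $\qevr{\cmd}{\underline{1}} \ucdot f \leq \underline{1} \ucdot f = f$, and monotonicity of addition on $(\Rext)^\memst$ promotes the equality just obtained to $\qect{\cmd}{f \up{} g} \leq f \up{} \qect{\cmd}{g}$, completing the proof.

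I do not anticipate any real obstacle: the argument is essentially a reshuffling of existing laws. The two points needing a little care are making the instantiation of \Cref{l:qt-constancy} precise --- its constant slot must be filled by $f$, exploiting $f \ucdot \underline{1} = f$ --- and the observation that $\qevr{\cmd}{\underline{1}}$ is genuinely a subprobability, which is where \Cref{c:ect-soundness} and the subdistribution property of $\nf_\to$ enter. An alternative, self-contained derivation by induction on $\cmd$, mirroring the proof of \Cref{l:qt-constancy}, is also available, but the route through the established laws is shorter and I would prefer it.
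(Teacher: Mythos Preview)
Your proposal is correct and follows essentially the same route as the paper: split off $\qevr{\cmd}{f}$ via the linearity/separation identities, rewrite it as $f \ucdot \qevr{\cmd}{\underline{1}}$ using \Cref{l:qt-constancy}\eqref{l:qt-constancy:qev} with $f = f \ucdot \underline{1}$, and then bound $\qevr{\cmd}{\underline{1}} \leq \underline{1}$. Your justification of the last inequality via \Cref{c:ect-soundness} is slightly more explicit than the paper's one-line appeal to $\qevr{\cmd}{\underline{1}}$ being a termination probability, but the argument is the same.
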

\begin{proof}
  Note $\qevr{\cmd}{\underline{1}} \leq \underline{1}$ for the termination
  probability $\qevr{\cmd}{\underline{1}}$. It is thus sufficient to prove
  the equality:
  \begin{align*}
    \fl{\qect{\cmd}{f \up{} g}} \\
    & \cmt{\Cref{l:qt-separation}} \\
    & = \qevr{\cmd}{f} \up{} \qect{\cmd}{g} \\
    & = \qevr{\cmd}{f \cdot \underline{1}} \up{} \qect{\cmd}{g} \\
    & \cmt{\Cref{l:qt-constancy}} \\
    & = f \cdot \qevr{\cmd}{\underline{1}} \up{} \qect{\cmd}{g}
      .
      \qedhere
  \end{align*}
\end{proof}

\end{document}